\definecolor{dkgreen}{rgb}{0,0.6,0}
\definecolor{gray}{rgb}{0.5,0.5,0.5}
\definecolor{mauve}{rgb}{0.58,0,0.82}
\tiny\color{gray},
\DeclareMathOperator*{\I}{\mathbb{I}}
\DeclareMathOperator*{\E}{\mathbb{E}}
\DeclareMathOperator{\given}{\,\vert\,}
\DeclareMathOperator{\mutual}{;\,}
\DeclareMathOperator{\relative}{\,\Vert\,}
\DeclareMathOperator*{\KL}{\mathbf{D}}
\DeclareMathOperator*{\MI}{\mathbf{I}}
\DeclareMathOperator*{\TV}{\mathbf{TV}}
\let\H\relax
\DeclareMathOperator*{\H}{\mathbf{H}}
\DeclareMathOperator*{\CIC}{CIC}
\DeclareMathOperator*{\CC}{CC}
\DeclareMathOperator{\OPT}{OPT}
\DeclareMathOperator{\ALG}{ALG}
\DeclareMathOperator{\poly}{poly}
\DeclareMathOperator{\Binom}{Binom}
\newcommand{\SA}{\ensuremath{\mathsf{SA}}}
\newcommand{\XOS}{\ensuremath{\mathsf{XOS}}}
\newcommand{\SM}{\ensuremath{\mathsf{SM}}}
\newcommand{\SC}{\ensuremath{\mathsf{Succ}}}
\newcommand{\SASM}{\ensuremath{\mathsf{SA}\,\cup\,\mathsf{SM}}}
\newcommand{\SASC}{\ensuremath{\mathsf{SA}\,\cup\,\mathsf{Succ}}}
\newcommand{\XOSSM}{\ensuremath{\mathsf{XOS}\,\cup\,\mathsf{SM}}}
\newcommand{\XOSSC}{\ensuremath{\mathsf{XOS}\,\cup\,\mathsf{Succ}}}
\newcommand{\VSM}{\ensuremath{{\calV}\,\cup\,\mathsf{SM}}}
\newcommand{\VSC}{\ensuremath{{\calV}\,\cup\,\mathsf{Succ}}}
\newcommand*{\FS}[1]{\ensuremath{\textsc{FarSets}_{#1}}}
\newcommand*{\EFS}[1]{\ensuremath{\exists\textsc{FarSets}_{#1}}}
\newcommand{\vb}{\boldsymbol}
\newcommand{\calC}{\mathcal{C}}
\newcommand{\calD}{\mathcal{D}}
\newcommand{\calL}{\mathcal{L}}
\newcommand{\calP}{\mathcal{P}}
\newcommand{\calQ}{\mathcal{Q}}
\newcommand{\calS}{\mathcal{S}}
\newcommand{\calT}{\mathcal{T}}
\newcommand{\calU}{\mathcal{U}}
\newcommand{\calV}{\mathcal{V}}
\newcommand{\calX}{\mathcal{X}}
\newcommand{\calY}{\mathcal{Y}}
\let\abs\relax
\DeclarePairedDelimiter{\abs}{\lvert}{\rvert}
\DeclarePairedDelimiter{\ceil}{\lceil}{\rceil}
\let\norm\relax
\DeclarePairedDelimiter{\norm}{\lVert}{\rVert}
\newcommand{\IGNORE}[1]{}
\newcounter{note}[section]
\newcommand{\PreserveBackslash}[1]{\let\temp=\\#1\let\\=\temp}
\newcolumntype{C}[1]{>{\PreserveBackslash\centering}p{#1}}
\newcolumntype{R}[1]{>{\PreserveBackslash\raggedleft}p{#1}}
\newcolumntype{L}[1]{>{\PreserveBackslash\raggedright}p{#1}}
\newcommand{\ols}[1]{\mskip.5\thinmuskip\overline{\mskip-.5\thinmuskip {#1} \mskip-.5\thinmuskip}\mskip.5\thinmuskip} 
\newcommand{\olsi}[1]{\,\overline{\!{#1}}} 
\newcommand\closure[1]{
  \tctestifnum{\count@stringtoks{#1}>1} 
  {\ols{#1}} 
  {\olsi{#1}} 
}
\long\def\count@stringtoks#1{\tc@earg\count@toks{\string#1}}
\long\def\count@toks#1{\the\numexpr-1\count@@toks#1.\tc@endcnt}
\long\def\count@@toks#1#2\tc@endcnt{+1\tc@ifempty{#2}{\relax}{\count@@toks#2\tc@endcnt}}
\def\tc@ifempty#1{\tc@testxifx{\expandafter\relax\detokenize{#1}\relax}}
\long\def\tc@earg#1#2{\expandafter#1\expandafter{#2}}
\long\def\tctestifnum#1{\tctestifcon{\ifnum#1\relax}}
\long\def\tctestifcon#1{#1\expandafter\tc@exfirst\else\expandafter\tc@exsecond\fi}
\long\def\tc@testxifx{\tc@earg\tctestifx}
\long\def\tctestifx#1{\tctestifcon{\ifx#1}}
\long\def\tc@exfirst#1#2{#1}
\long\def\tc@exsecond#1#2{#2}
\newtheorem{theorem}{Theorem}[section]
\Crefname{claim}{Claim}{Claims}
\newtheorem{proposition}[theorem]{Proposition}
\newtheorem{lemma}[theorem]{Lemma}
\newtheorem{corollary}[theorem]{Corollary}
\newtheorem{fact}[theorem]{Fact}
\Crefname{fact}{Fact}{Facts}
\theoremstyle{definition}
\newtheorem{example}[theorem]{Example}
\newtheorem{definition}[theorem]{Definition}
\newtheorem{remark}[theorem]{Remark}
\title{The Communication Complexity of Combinatorial Auctions with Additional Succinct Bidders}
\author{Frederick V. Qiu \and S.~Matthew Weinberg \and Qianfan Zhang}
\date{\today}
\begin{document}

\maketitle

\begin{abstract}
    We study the communication complexity of welfare maximization in combinatorial auctions with bidders from \emph{either} a standard valuation class (which require exponential communication to explicitly state, such as subadditive or XOS), \emph{or} arbitrary succinct valuations (which can be fully described in polynomial communication, such as single-minded). Although succinct valuations can be efficiently communicated, we show that additional succinct bidders have a nontrivial impact on communication complexity of classical combinatorial auctions. Specifically:
    
    \medskip
    
    \noindent Let $n$ be the number of subadditive/XOS bidders. We show that for \SASC{} (the union of subadditive and succinct valuations):
    \begin{itemize}
        \item There is a polynomial communication $3$-approximation algorithm.
        \item As $n \to \infty$, there is a matching $3$-hardness of approximation, which (a) is larger than the optimal approximation ratio of $2$ for \SA{}~\cite{Feige09}, and (b) holds even for \SASM{} (the union of subadditive and single-minded valuations).
        \item For all $n \geq 3$, there is a constant separation between the optimal approximation ratios for \SASM{} and \SA{} (and therefore between \SASC{} and \SA{} as well).
    \end{itemize}

    \noindent Similarly, we show that for \XOSSC{}:
    \begin{itemize}
        \item There is a polynomial communication $2$-approximation algorithm.
        \item As $n \to \infty$, there is a matching $2$-hardness of approximation, which (a) is larger than the optimal approximation ratio of $e/(e-1)$ for \XOS{}~\cite{DobzinskiNS10}, and (b) holds even for \XOSSM{}.
        \item For all $n \geq 2$, there is a constant separation between the optimal approximation ratios for \XOSSM{} and \XOS{} (and therefore between \XOSSC{} and \XOS{} as well).
    \end{itemize}
\end{abstract}

\section{Introduction}
\label{sec:intro}

In a combinatorial auction, there are a set of items $M \coloneqq [m]$ and bidders $N \coloneqq [n]$. Each bidder $i \in N$ holds a private monotone valuation function $v_i : 2^M \to \mathbb{R}_+$. We often restrict the valuations to a valuation class $\calV$ depending on the problem setting. An \emph{allocation} is an $n$-tuple of pairwise disjoint subsets of $M$, and we denote the set of all allocations by $\Sigma$. The objective is to find an allocation $\vb{A} = (A_1, \dots, A_n) \in \Sigma$ that approximately maximizes the social welfare $\sum_{i \in N} v_i(A_i)$.

Since each bidder $i$ only knows their own valuation $v_i$, two challenges arise:
\begin{enumerate}[topsep=4pt]
    \item The bidders must communicate in order to find a high welfare allocation.
    
    \item A bidder may attempt to strategically manipulate the protocol by lying about their valuation, reducing overall welfare in order to achieve higher utility for herself.
\end{enumerate}

Combinatorial auctions are well understood with respect to challenge $(1)$. For example, with fully cooperative bidders and polynomial communication, a tight $\Theta(\sqrt{m})$-approximation is known for general valuations~\cite{LehmannOS02,NisanS06,BlumrosenN05a}, a tight $2$-approximation is known for subadditive valuations~\cite{Feige09,EzraFNTW19}, a tight $e/(e-1)$-approximation is known for XOS valuations~\cite{Feige09,DobzinskiNS10}, and the optimal approximation ratio for submodular valuations is known to lie in $[2e/(2e-1), e/(e-1) - 10^{-5}]$~\cite{DobzinskiV13,FeigeV10}. However, these approximation algorithms fail if bidders may strategically manipulate the protocol.

Combinatorial auctions are also well understood with respect to challenge $(2)$. The VCG auction~\cite{Vickrey61,Clarke71,Groves73} is a mechanism by which bidders are charged prices $\vb{p}$ corresponding to the allocation $\vb{A}$, where $\vb{p}$ is carefully chosen so that bidders seeking to maximize their utility $v_i(A_i) - p_i$ are incentivized to cooperate with the protocol. Such mechanisms are called \emph{truthful}. However, the VCG auction requires exponential communication to run.

When both challenges are considered together, surprisingly little is known. The best known deterministic truthful mechanisms achieve an $\tilde{O}(m)$-approximation for general valuations and $\tilde{O}(\sqrt{m})$-approximation for subadditive/XOS/submodular valuations~\cite{QiuW24}, nearly a factor of $\sqrt{m}$ worse than their fully algorithmic (i.e., ignoring issues of incentives) counterparts. Not only that, but known lower bounds for deterministic truthful mechanisms beyond those already known for non-truthful algorithms are rare and apply only to restricted settings~\cite{AssadiKSW20, RonTWZ24}. As such, understanding the interplay between efficiency and incentive compatibility is a paradigmatic problem at the intersection of Economics and Computing.

\subsection{Combinatorial Auctions with Additional Succinct Bidders}

As of now, we have only mentioned the valuation classes submodular, XOS, and subadditive. These classes are interesting because while they provide additional structure that makes good approximation tractable, the valuation classes are still rich enough so that bidders cannot just trivially communicate their entire valuation. However, the recent work of~\cite{RonTWZ24} proved the first $3$-bidder separation between deterministic truthful mechanisms and non-truthful algorithms for the valuation class \emph{subadditive-union-single-minded}, initiating the study of what we term \emph{combinatorial auctions with additional succinct bidders}. We call a bidder succinct if their valuation can be fully expressed in polynomial communication.

Although the union of two disparate valuation classes has not received much previous attention, its study is motivated by the following.
\begin{itemize}[topsep=4pt]
    \item The approach of the $3$-bidder deterministic truthful lower bound for subadditive-union-single-minded valuations in~\cite{RonTWZ24} is the most promising avenue towards a non-trivial $\poly(m)$-bidder deterministic truthful lower bound, as it is the first to leverage the full power of the taxation complexity framework introduced in~\cite{Dobzinski16b}. The $2$-bidder lower bound for XOS valuations in~\cite{AssadiKSW20} also uses a consequence of the taxation complexity framework, but their approach cannot be generalized beyond $2$ bidders.

    \item While~\cite{RonTWZ24} is the first work to point out the potential usefulness of additional succinct bidders in proving lower bounds, the concept has shown up previously:~\cite{QiuW24} implicitly proves that for VCG-based mechanisms (a subclass of deterministic truthful mechanisms), welfare maximization over the valuation class submodular-union-single-minded is exactly as hard as welfare maximization over general valuations.\footnote{In particular, in~\cite{QiuW24}, Lemma~5.4 only requires single-minded valuations, and the remainder of the lower bound proof in Section~5 only requires submodular valuations.}~This indicates that investigating similar valuation classes could be promising for deterministic truthful lower bounds in general.

    \item Finally, the addition of succinct valuations is conceptually important precisely because of how ``deceptively uninteresting'' it is. Since they can be fully communicated, additional succinct valuations seemingly shouldn't increase the communication complexity of combinatorial auctions, but the two previous examples show there are settings where they do.
\end{itemize}

Thus, the goal of this paper is to understand combinatorial auctions with additional succinct valuations from an \emph{algorithmic} perspective, where no non-trivial prior work exists.\footnote{The only \emph{algorithmic} observation about succinct bidders necessary for~\cite{RonTWZ24} is that a constant number of single-minded bidders can be exhausted over, and therefore the communication complexity of combinatorial auctions for subadditive bidders with a constant number of additional single-minded bidders is exactly the same as with just subadditive bidders. The novelty in their work lies entirely in their lower bounds for \emph{truthful} combinatorial auctions with additional succinct bidders.} While the story so far has been heavily motivated by deterministic truthful mechanisms, it is not possible to understand the \emph{separation} between deterministic truthful mechanisms and algorithms without understanding the algorithms. We study the valuation classes subadditive-union-arbitrary-succinct and XOS-union-arbitrary-succinct (which we denote by \SASC{} and \XOSSC{}), but our lower bounds hold even for the narrower classes of subadditive-union-single-minded (\SASM{}) and XOS-union-single-minded (\XOSSM{}).

\begin{theorem}[name=,restate=ComplexitySASM] \label{thm:ComplexitySASM}
    Let there be $n$ subadditive bidders and $c$ succinct bidders. Then there exists a polynomial communication $(3-2/n)$-approximation algorithm for \SASC{}. Additionally, for all $a \leq n$, any $3(1-4a^{-1/3})$-approximation for \SASM{} uses $\min\{2^{\Omega(c/a)}, 2^{\Omega(\sqrt{m/a^3})}\}$ communication.
    
    In particular, when $c = \Omega(\sqrt{m})$, we have for any constant $\varepsilon > 0$ that a $(3-\varepsilon)$-approximation for \SASM{} requires $2^{\Omega(\sqrt{m})}$ communication.
\end{theorem}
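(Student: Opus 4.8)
My plan is to first have the $c$ succinct bidders broadcast their polynomial-size valuations, so that afterward every party can evaluate, for any $T\subseteq M$, the quantity $g(T)$ equal to the optimal welfare obtainable by allocating $T$ among the succinct bidders; this reduces the problem to welfare maximization with $n$ subadditive bidders plus a single ``meta-bidder'' whose monotone valuation $g$ is public. I would then compute $n+2$ candidate allocations and output the best: (a) allocate all of $M$ optimally among the succinct bidders, of value $g(M)$; (b) run the known polynomial-communication $2$-approximation for $n$ subadditive bidders on all of $M$; and (c) for each subadditive bidder $j$, the allocation in which $j$ receives $X_j^\ast\in\argmax_{X\subseteq M}\bigl(v_j(X)+g(M\setminus X)\bigr)$ and the succinct bidders split $M\setminus X_j^\ast$ optimally --- here bidder $j$ can compute $X_j^\ast$ locally from $v_j$ and the now-public $g$ and announce it with $O(m)$ bits. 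To analyze, write a fixed optimum as $\OPT=\OPT_S+\OPT_{\bar S}$, where $\OPT_S=g(S)$ is the succinct bidders' welfare on their bundle $S$ and $\OPT_{\bar S}=\sum_j v_j(A_j)$ is the subadditive bidders' welfare on their (disjoint) bundles $A_j$. If $W$ is the best candidate value, then (a) gives $\OPT_S\le g(M)\le W$; (b) gives $\OPT_{\bar S}\le 2W$, since the $2$-approximation recovers at least $\tfrac12\OPT_{\bar S}$ ($(A_j)$ is feasible for the subadditive-only instance on $M$); and, since $S\subseteq M\setminus A_j$, candidate (c) for $j$ has value at least $v_j(A_j)+g(S)=v_j(A_j)+\OPT_S$, so $v_j(A_j)\le W-\OPT_S$ for all $j$, whence $\OPT_{\bar S}\le n(W-\OPT_S)$. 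Combining these bounds by a short case analysis on whether $\OPT_S\le(1-2/n)W$ gives $\OPT\le(3-2/n)W$. (For $n=1$ this is exact, consistent with candidate (c) then being literally the optimum.)

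\textbf{The $3(1-4a^{-1/3})$ hardness.} Here the plan is a reduction: for each $a\le n$ I would build two families $\calI_{\mathrm{hi}},\calI_{\mathrm{lo}}$ of \SASM{} instances (with $n$ subadditive and $c$ single-minded bidders, suitably scaled, where $a$ parametrizes the reduction) such that every instance of $\calI_{\mathrm{hi}}$ admits an integral allocation of welfare $3$, every instance of $\calI_{\mathrm{lo}}$ has optimal welfare at most $1+O(a^{-1/3})$, and deciding which family a given instance lies in requires $\min\{2^{\Omega(c/a)},2^{\Omega(\sqrt{m/a^3})}\}$ communication. Since a $3(1-4a^{-1/3})$-approximation returns a welfare exceeding the $\calI_{\mathrm{lo}}$ bound precisely on $\calI_{\mathrm{hi}}$, it solves this decision problem and hence inherits the communication lower bound. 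For the construction I would augment the known communication $2$-inapproximability instance for \SA{}~\cite{Feige09} --- which lives on $\mu$ items, hides whether the subadditive bidders admit a ``perfectly packable'' configuration (welfare $\mu$ versus welfare $\le\mu/2$), and takes $2^{\Omega(\sqrt\mu)}$ communication to resolve --- with single-minded bidders that in the hi-case peel off a disjoint block of items worth exactly half the subadditive contribution (total $2+1=3$), but whose desired bundles are interleaved (via a further hidden design over $\sim c/a$ of them) with the subadditive structure so that in the lo-case satisfying a non-negligible fraction of them is incompatible with extracting more than $\approx 1$ from the subadditive bidders. The parameter $a$ then trades the number of items given to the subadditive hard core (about $m/a$, with a further $\poly(a)$ factor lost inside the square root to the interleaving, yielding $2^{\Omega(\sqrt{m/a^3})}$) against the slack in the lo-welfare bound, which optimizes to $O(a^{-1/3})$; indistinguishability is inherited from the two embedded hard problems by a direct-sum argument. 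The final sentence then follows by taking $a$ a large enough constant: $3(1-4a^{-1/3})>3-\varepsilon$, $2^{\Omega(c/a)}=2^{\Omega(c)}$, $2^{\Omega(\sqrt{m/a^3})}=2^{\Omega(\sqrt m)}$, and $c=\Omega(\sqrt m)$ makes the bound $2^{\Omega(\sqrt m)}$.

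\textbf{Where the difficulty lies.} The upper bound is essentially the averaging argument above together with a black-box call to the \SA{} $2$-approximation, so I expect it to be routine. The hard part will be the lo-case of the hardness construction: designing the augmented subadditive valuations so that they remain genuinely subadditive, and proving the \emph{quantitative} claim that no allocation can simultaneously satisfy a constant fraction of the single-minded bidders and extract more than $1+O(a^{-1/3})$ from the subadditive bidders --- i.e., that the single-minded desired sets and the ``good'' subadditive bundles interfere destructively in a robust way --- while keeping the whole instance indistinguishable from the hi-case under subexponential communication. This is the step that recasts the taxation-complexity ideas behind the $3$-bidder separation of~\cite{RonTWZ24} for $\poly(m)$ bidders and arbitrary communication protocols, and I expect it to carry the bulk of the work.
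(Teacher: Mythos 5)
Your $(3-2/n)$-approximation is essentially the same algorithm as the paper's: your candidate (c) for bidder $j$ is exactly what the paper calls $\OPT_j$ (the optimum over succinct bidders plus subadditive bidder $j$), and your candidate (a) is dominated by candidate (c) with $X=\emptyset$. Your case analysis on $\OPT_S$ versus $(1-2/n)W$ is a clean, equivalent reformulation of the paper's averaging argument, and it is correct.

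\textbf{Lower bound.} Here your proposal has a genuine gap: you describe the shape of the reduction (two families of \SASM{} instances, hi-welfare $\approx 3$ and lo-welfare $\approx 1$, hard to distinguish) and the intuition that the single-minded bundles should ``interfere destructively'' with the subadditive structure, but you do not actually supply the construction, and you explicitly flag the lo-case bound as the part you haven't worked out. That flagged part is precisely where the paper's main technical contribution lives, and it is not a routine adaptation of the $2$-inapproximability bound for \SA{} (which, incidentally, is due to~\cite{EzraFNTW19}, not~\cite{Feige09}; Feige gives the matching \emph{upper} bound). The paper arranges items into $c$ columns of $r$ items, has each single-minded bidder demand an entire column, plants an independent hard subadditive sub-instance \emph{within each column}, and makes each subadditive bidder sum its per-column values over a hidden column set from a \textsc{PromiseDisjointness} instance. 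The obstacle your sketch does not address is that an algorithm can scatter the $n$ subadditive bidders across disjoint small column sets so that each column is only ``responsible'' for $\varepsilon n$ of them; to rule this out, ordinary $2$-inapproximability is insufficient and one needs the paper's new notion of \emph{strong} $\alpha$-inapproximability --- $2$-hardness that survives even when attention is restricted to a $t$-scarce allocation on an arbitrary sub-instance of $t$ bidders. Proving \SA{} is strongly $(2-2/(t+1))$-inapproximable (Proposition~\ref{prop:SubadditiveStronglyInapproximable}) and plugging it into the column framework (\Cref{thm:HardnessOfApproximation} with $r=\sqrt{ma}$) is what delivers the $\min\{2^{\Omega(c/a)},2^{\Omega(\sqrt{m/a^3})}\}$ bound; without that ingredient, your lo-case claim is unsupported and the reduction does not close.
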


\begin{theorem}[name=,restate=ComplexityXOSSM] \label{thm:ComplexityXOSSM}
    Let there be $n$ XOS bidders and $c$ single-minded bidders. Then there exists a polynomial communication $2$-approximation algorithm for \XOSSC{}. Additionally, for all $a \leq n$, any $2(1-4a^{-1/3})$-approximation for \XOSSM{} uses $\min\{2^{\Omega(c/a)}, 2^{\Omega(m/a^2)}\}$ communication.
    
    In particular, when $c = \Omega(m)$, we have for any constant $\varepsilon > 0$ that a $(2 - \varepsilon)$-approximation for \XOSSM{} requires $2^{\Omega(m)}$ communication.
\end{theorem}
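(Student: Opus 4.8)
The plan is to prove the two parts of Theorem~\ref{thm:ComplexityXOSSM} — a polynomial-communication $2$-approximation for \XOSSC{}, and the nearly-matching hardness for \XOSSM{} — separately; the ``in particular'' statement is then immediate by taking $a$ to be a large enough constant (so that $8a^{-1/3}\le\varepsilon$, which needs only $n$ a large constant) and substituting $c=\Omega(m)$.

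\textbf{The algorithm.} I would first solve the joint configuration LP over all $n+c$ bidders: $\max\sum_{i,S}x_{i,S}v_i(S)$ subject to $\sum_S x_{i,S}\le 1$ for each bidder and $\sum_{i,S\ni j}x_{i,S}\le 1$ for each item $j$. The XOS bidders respond to demand queries and the single-minded bidders simply report their desired bundle and value, so a feasible solution of value $\ge\OPT$ is computable in $\poly(m,n,c)$ communication. The rounding treats the two populations asymmetrically. For the XOS bidders I would use the standard supporting-price / random-order contention-resolution rounding of~\cite{Feige09,DobzinskiNS10}. For the single-minded bidders I would \emph{not} round their (possibly $\Omega(\sqrt m)$-integrality-gap) LP contribution, but instead — legitimately, since communication protocols may use unbounded computation — compute an optimal integral allocation among them. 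The two sub-allocations are made compatible using the LP's fractional split of each item between ``the XOS side'' and ``the single-minded side'' to decide which items each subroutine is allowed to touch, and the claim is that the better of (i) this combined allocation and (ii) the pure-XOS rounding on all items recovers $\OPT/2$. The factor $2$ — rather than the $e/(e-1)$ of the XOS-only problem — is the price of XOS bidders and single-minded bidders possibly wanting the same items, so that splitting the item set costs a factor of two in the worst case.

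\textbf{The hardness.} For the lower bound I would build a family of hard instances indexed by $a\le n$, each a loosely-coupled composition of $\Theta(a)$ ``layers.'' Within a layer, a constant number of XOS bidders hide a planted high-welfare configuration (in the spirit of known combinatorial-auction welfare lower bounds, e.g.~\cite{DobzinskiNS10}), while a block of single-minded bidders enforces a rigidity making the layer worth (roughly) $2$ if the planted configuration is found and only $1+o(1)$ otherwise — crucially exploiting that single-minded valuations carry complementarity that XOS valuations cannot express. Summing over the $\Theta(a)$ layers gives $\OPT\approx 2a$; a fooling-set / direct-sum argument then shows that a protocol identifying the planted configuration in more than a vanishing fraction of layers — which it must to beat the ratio — requires $\min\{2^{\Omega(c/a)},2^{\Omega(m/a^2)}\}$ communication, the two terms being the bottlenecks when single-minded bidders, respectively items, are the scarcer resource. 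Optimizing the gap over the layers' parameters produces the ratio $2(1-4a^{-1/3})$.

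\textbf{Main obstacle.} On the algorithmic side, the delicate step is the rounding accounting: showing that the optimal integral single-minded allocation together with the contention-resolved XOS allocation is simultaneously feasible \emph{and} charges against $\OPT/2$ — i.e., that routing each item's LP mass to one of the two sides loses only a factor of two rather than compounding with the $e/(e-1)$ loss of the XOS rounding. On the hardness side, the crux is designing the single-minded gadget so that it pins the per-layer gap all the way to $2$ (not merely $e/(e-1)$) while keeping the XOS bidders' valuations genuinely XOS, and then propagating the per-layer communication lower bound through the $\Theta(a)$ layers with the correct $a$-dependence.
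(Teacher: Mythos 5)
Your proposal is a high-level sketch, and in both halves it diverges from what is actually needed in ways that leave real gaps.

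\textbf{The algorithm.} You solve the joint LP over all $n+c$ bidders, round the XOS side with a CRS, compute an optimal integral solution on the single-minded side, and then ``make the sub-allocations compatible using the LP's fractional split of each item.'' This last step is exactly where the proof has to live, and as stated it does not work. If each item is routed to one side with probability equal to its LP mass, a single-minded bidder wanting a bundle $T$ receives all of $T$ only with probability $\prod_{j\in T} p_j$, which can be exponentially small; and if you instead partition deterministically, you lose the $\OPT/2$ accounting you assert. Your fallback ``better of combined allocation and pure-XOS rounding'' is essentially the trivial decoupling argument and gives $1+e/(e-1)\approx 2.58$, not $2$. The mechanism the paper uses is qualitatively different: it collapses all succinct bidders into \emph{one} surrogate bidder $0$ with the induced monotone valuation, solves the configuration LP for the XOS bidders plus bidder $0$, and then rounds with a $1/2$-selectable online CRS per item in which bidder $0$ is always presented first and the accept/reject decision for bidder $0$ is \emph{correlated across all items} (possible because the OCRS accepts the first arrival with probability exactly $1/2$). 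This all-or-nothing correlation for the surrogate is precisely what replaces your underspecified ``compatibility'' step; it guarantees $\E[v_0(T_0)]=\OPT_0/2$ while each XOS bidder still sees each of her LP-sampled items with probability $\ge 1/2$, after which Feige's lemma for XOS gives $\OPT_i/2$. Without this device, the claimed factor $2$ is not justified.

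\textbf{The hardness.} You propose $\Theta(a)$ ``loosely-coupled layers,'' each with a constant number of XOS bidders and a block of single-minded bidders, and a direct-sum over layers. This is a different and weaker skeleton than what is needed. The paper's instance arranges items into an $r\times c$ grid with $r=a$, a single-minded bidder per column, and \emph{all} $a$ XOS bidders active in every column; the $i$-th XOS bidder is parameterized by a \textsc{PromiseDisjointness} input $X_i\subseteq[z]$ and a fixed family $S_1,\dots,S_z$ of sparse column sets, so that $v_i(\cdot)$ takes a max over $\ell\in X_i$ of her within-column values summed over columns in $S_\ell$. Inside each column the inner instance is the trivial diagonal one (strong $1$-inapproximability of XOS). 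The hardness comes from two coupled effects that your sketch does not capture: (i) a polynomial-communication protocol cannot locate the planted common column set $S^*$ (this reduces to disjointness), and (ii) even when the protocol scatters different XOS bidders over different columns, the per-column sub-problems remain hard on \emph{every subset} of bidders — this is exactly the ``strong inapproximability'' notion, and is why the $\alpha(t)\cdot n/t$ scaling appears. A naive layered direct-sum with constant XOS bidders per layer does not produce the $\min\{2^{\Omega(c/a)},2^{\Omega(m/a^2)}\}$ dependence; in the paper both exponents come from the same $z=2^{\Theta(\min\{c,m/a\}/a)}$, where $z$ is simultaneously the universe size of the disjointness instance and the number of sets in $\calS_i$, and the $4a^{-1/3}$ slack is the $\varepsilon=a^{-1/3}$ coming from the Chernoff bound that controls how much welfare the protocol can scrape together from \emph{any} set of $\delta c$ columns. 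Your sketch names none of these ingredients, so as written it does not establish the theorem.
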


One implication of our results pertains to the lower bounds of~\cite{RonTWZ24}. There, they show that no deterministic truthful mechanism can beat a $(1+\sqrt{3})$-approximation in polynomial communication when there are at least $3$ bidders for \SASM{}. They also show that when there are only $O(\log(m))$ bidders total, there exists a poly-communication algorithm that achieves a $2$-approximation for \SASM{},\footnote{This follows by simply exhausting over all $2^{O(\log(m))} = \poly(m)$ possible allocations to the single-minded bidders and running~\cite{Feige09}'s $2$-approximation on the remaining items for the subadditive bidders.} meaning their $3$-bidder separation extends to as many as $O(\log(m))$ bidders, because $2 < 1+\sqrt{3}$. However, because $1+\sqrt{3} < 3-o(1)$, our result shows that their separation does not extend to the general $\poly(m)$-bidder setting.

We also show that for all non-trivial\footnote{If there are two subadditive bidders, there exists a $2$-approximation algorithm for \SASM{} by \Cref{thm:ComplexitySASM}. If there is one XOS bidder, she can find the optimal allocation by herself after learning the succinct valuations.} numbers of subadditive/XOS bidders, the addition of single-minded bidders makes welfare maximization strictly more difficult.

\begin{theorem}[name=,restate=SeparationSASM] \label{thm:SeparationSASM}
    Let there be $n$ subadditive bidders and $c$ single-minded bidders. When $n \geq 3$, any $2.06$-approximation algorithm for \SASM{} uses $2^{\Omega(\min\{c, \sqrt{m}\})}$ communication.
\end{theorem}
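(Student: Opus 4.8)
The plan follows the standard template for communication lower bounds: exhibit a family of \SASM{} instances split into ``yes'' and ``no'' cases with a large welfare gap but which no low-communication protocol can tell apart, so that any sufficiently good approximation algorithm, run on a yes-instance, is forced to certify that it is not a no-instance. Concretely, set $k \coloneqq \Theta(\min\{c,\sqrt m\})$ and build a family with exactly three subadditive bidders and $\Theta(k)$ single-minded bidders in which every yes-instance has optimal welfare at least $2.06$ times the optimal welfare of every no-instance, while distinguishing the two types requires $2^{\Omega(k)}$ communication. A $2.06$-approximation algorithm on a yes-instance must then output an allocation whose value exceeds the optimum of every no-instance, hence decides the distinguishing problem and uses $2^{\Omega(k)} = 2^{\Omega(\min\{c,\sqrt m\})}$ communication. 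I would build this family by instantiating the hard-instance construction underlying the lower bound of \Cref{thm:ComplexitySASM} with a constant amplification parameter, but analyzing the resulting gap carefully enough to extract the explicit constant $2.06$ rather than just ``some constant above $2$''.

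The family has two pieces. The first is a communication-hard \emph{core}: a distribution over (a constant number of) subadditive valuations whose social optimum is either $V$ or $V/2$ depending on a hidden bit, obtained by a reduction from \textsc{Set-Disjointness} over a universe of size $2^{\Theta(k)}$ and using $\Theta(k^2)$ items, so that distinguishing the two cases costs $2^{\Omega(k)}$ communication; this is precisely the ingredient behind the $2$-hardness for \SA{}, and the quadratic item footprint (the usual $2^{\Omega(\sqrt{\#\mathrm{items}})}$ phenomenon) is what forces $k = O(\sqrt m)$. The second piece is a block of $\Theta(k)$ single-minded bidders acting as \emph{amplifiers}, together with the third subadditive bidder. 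The amplifiers' desired bundles are placed to overlap the ``balanced partition'' that witnesses value $V$ in a yes-configuration, and the third subadditive bidder is given a capped, near-additive valuation engineered so that in a yes-configuration it can absorb exactly the items a single-minded bidder would otherwise claim; thus the full single-minded value \emph{and} the balanced core split can be collected simultaneously, giving a yes-optimum of about $(1+\delta)V$. In a no-configuration, by contrast, serving the single-minded bidders necessarily destroys a constant fraction of the core welfare, so the best allocation is at most about $(\tfrac12 + \delta')V$. Calibrating $V$, the single-minded values, and the cap so that $\delta$ is a large enough multiple of $\delta'$ pushes $(1+\delta)/(\tfrac12 + \delta')$ past $2.06$. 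That there are three subadditive bidders is essential here: with only two, \Cref{thm:ComplexitySASM} already yields a $2$-approximation.

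The communication bound is then inherited from the core. Since single-minded valuations are succinct and can be fully revealed with $\poly(m)$ communication, all of the hardness lives in the core, so any distinguisher would solve the embedded \textsc{Set-Disjointness} instance over a universe of size $2^{\Theta(k)}$, costing $2^{\Omega(k)}$ bits. The $\Theta(k^2)$-item core forces $k = O(\sqrt m)$, and using $\Theta(k)$ amplifier bidders forces $k = O(c)$, so $k = \Theta(\min\{c,\sqrt m\})$; moreover for $c = O(\log m)$ one can brute-force the $2^{O(c)}$ choices for the single-minded bidders and then invoke \cite{Feige09}, so the dependence on $c$ is of the right order.

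I expect the crux to be the gap analysis in the no-case. Subadditive valuations are extremely flexible, and in a no-instance the optimizer is free to serve single-minded bidders only partially, to route items to the third subadditive bidder, and to split the core items unevenly across the three subadditive bidders; one must show that no combination of these beats a $1/2.06$ fraction of the yes-optimum, all while certifying that the three explicitly constructed valuations are genuinely subadditive and not merely monotone. A secondary technical point is choosing the numerical parameters so that the gap is provably at least $2.06$ and checking that the two constraints on $k$ combine into the clean bound $2^{\Omega(\min\{c,\sqrt m\})}$.
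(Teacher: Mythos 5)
Your proposal diverges from the paper's argument in a way that leaves a real gap, and the divergence is at the central technical step, not a side issue.

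The paper does not prove this theorem by gluing a two-subadditive-bidder $2$-hard ``core'' to a block of single-minded ``amplifiers'' with a bespoke third absorber bidder. Instead it treats all three subadditive bidders symmetrically inside a column construction: items are arranged as $c$ columns of $c^2$ items each, one single-minded bidder wants each column, and every subadditive bidder $i$ holds an exponentially large collection $\calS_i$ of column-sets of size $\approx pc$, with a common hard subadditive sub-instance living inside each column. Depending on whether a hidden disjointness instance is a $1$- or $0$-instance, either there is a single column-set $S_\ell \in \bigcap_i \calS_i$ that lets all three subadditive bidders win simultaneously while leaving $(1-p)c$ columns to the single-minded bidders, or no such set exists. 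Plugging $n=3$, $p=4/5$, $t^*=1$ into the general bound $\beta(n,p,\delta)$, together with the values $t/\alpha(t)$ coming from the strong inapproximability of \SA{}, is what yields exactly $2.06$. Your sketch is not a rephrasing of this.

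The genuine gap is that your plan has no substitute for the paper's \emph{strong inapproximability} notion (\Cref{def:StronglyInapproximable} and \Cref{thm:SubadditiveStronglyInapproximable}), and the theorem cannot be proved without something playing that role. In the ``no'' case, an algorithm is free to allocate each column to only one or two of the three subadditive bidders, i.e.\ to use a $t$-scarce allocation within a column for some $t<3$. Ordinary $2$-inapproximability of \SA{} says nothing about such sub-instances; it is entirely consistent with ordinary $2$-inapproximability that restricting to any two of the three bidders makes the column problem easy, in which case the adversary's lower bound collapses and the ``no'' welfare is not bounded by anything close to $(1/2+\delta')V$. The paper introduces strong inapproximability precisely to rule this out, proves it for subadditive valuations via a multi-player generalization of the \EFS{} problem, and the bound $t/\alpha(t)=1+\I[t=3]/2+o(1)$ it yields is what gets plugged into the $\beta$ formula. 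Your argument never confronts this, and your own crux paragraph (``the optimizer is free to \dots split the core items unevenly across the three subadditive bidders'') names the threat without giving a tool that meets it.

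Two secondary issues. First, your ``capped near-additive absorber'' mechanism is not coherent as stated: if the third bidder absorbs the items a single-minded bidder would otherwise claim, that single-minded bidder gets nothing (single-minded value is all-or-nothing), so ``the full single-minded value and the balanced core split can be collected simultaneously'' does not follow, and the asymmetric role of the third bidder is foreign to the paper's symmetric construction. Second, for the gap $(1+\delta)/(\tfrac12+\delta')$ to exceed $2.06$ you need $\delta$ to dominate $\delta'$ by a definite margin; the paper earns that margin structurally (in a yes-instance the subadditive bidders need only $4/5$ of the columns, freeing the rest for the single-minded bidders, while in a no-instance the column-sets do not align) plus a concrete numerical calculation, whereas ``calibrating $V$, the single-minded values, and the cap'' is not a proof.
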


\begin{theorem}[name=,restate=SeparationXOSSM] \label{thm:SeparationXOSSM}
    Let there be $n$ XOS bidders and $c$ single-minded bidders. When $n \geq 2$, any $(1/(1 - (1 - 1/n)^n)+0.001)$-approximation algorithm for \XOSSM{} uses $2^{\Omega(\min\{c, \sqrt{m}\})}$ communication.
\end{theorem}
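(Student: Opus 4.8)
The plan is to establish the stated bound as a communication lower bound, and the natural route is to exhibit a family of \XOSSM{} instances — with exactly $n$ XOS bidders and $\Theta(k)$ single‑minded bidders over $\Theta(k^2)$ items, where $k \coloneqq \min\{c,\sqrt m\}$ — carrying a hidden combinatorial structure of size $\Theta(k)$ such that (i) when the structure is present there is an integral allocation whose welfare is within a $1+o(1)$ factor of the configuration‑LP optimum, and (ii) when it is absent \emph{every} integral allocation achieves at most a $\rho_n + 0.0005$ fraction of that same LP optimum, where $\rho_n \coloneqq 1/(1-(1-1/n)^n) \in [4/3,\, e/(e-1))$. A protocol computing a $(\rho_n+0.001)$‑approximation would then distinguish the two cases, and by the standard fooling/simulation machinery for combinatorial‑auction communication lower bounds (the kind underlying the $e/(e{-}1)$‑hardness of XOS welfare in \cite{DobzinskiNS10} and the constructions of \cite{RonTWZ24}), deciding the hidden structure costs $2^{\Omega(k)} = 2^{\Omega(\min\{c,\sqrt m\})}$ bits. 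Unlike the $n\to\infty$ bound of \Cref{thm:ComplexityXOSSM}, whose threshold $2$ exceeds $\lim_n \rho_n = e/(e-1)$ and so must come from a more elaborate use of many single‑minded bidders, here I want the leanest construction — one tuned to realize the correlation‑gap ratio $\rho_n$ for each fixed $n$.

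The first step is to isolate the ``integrality‑gap gadget'' that explains the exponent $\rho_n$. I would build it from $n$ XOS bidders with coverage‑type (unit‑demand / ``\textsc{Or}'') valuations together with single‑minded bidders whose rigid bundles are calibrated against those valuations, so that a fractional configuration‑LP solution can spread weight $1/n$ over $n$ symmetric ways of serving the XOS bidders while still (fractionally) honoring the single‑minded demands, whereas any integral allocation must ``commit'' and thereby incurs exactly the correlation gap of an $n$‑fold \textsc{Or} evaluated at the uniform point $(1/n,\dots,1/n)$: the concave (LP) closure of that \textsc{Or} equals $1$ (witnessed by $\tfrac1n$ on each singleton), its multilinear value at the uniform point is $1-(1-1/n)^n$, and the single‑minded demands are what force the only near‑optimal fractional solution to look like the uniform point, so the ratio is $\rho_n$. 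Verifying that the constructed valuations are genuinely XOS (for the $n$ bidders) and single‑minded (for the others), and that the gadget attains the ratio \emph{exactly} — not merely ``some constant above $\rho_n$'' — is the delicate part of this step.

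The second step blows the gadget up into a communication‑hard family by taking a product with a $k\times k$ grid of items: the $j$‑th of $k$ single‑minded bidders is given a rigid bundle living in the $j$‑th block whose ``profitability'' relative to the XOS bidders is controlled by a single hidden bit $b_j$, and dually the high description complexity of XOS valuations (a max over up to $2^{k}$ additive clauses, each supported on $\sqrt m$ items) is used to hide these bits inside the XOS bidders' valuations as an instance of (a direct‑sum version of) \textsc{Set-Disjointness} on $k$ coordinates. In the ``yes'' (disjoint) case the hidden matching of profitable single‑minded bundles to XOS demands yields an integral allocation matching the LP optimum up to $1+o(1)$; in the ``no'' case a constant fraction of blocks are ``contended'', so the gadget analysis applies block‑by‑block and caps integral welfare at a $\rho_n+0.0005$ fraction of the LP optimum. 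The $\min\{c,\sqrt m\}$ form is precisely the requirement that there be enough single‑minded bidders ($c\gtrsim k$) and enough items ($m\gtrsim k^2$) to carry a size‑$k$ hidden instance simultaneously through both channels.

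The main obstacle I anticipate is making the gap \emph{both} tight and robust: the bare gadget must realize the ratio exactly $\rho_n$, which forces an exact balancing of the XOS valuations against the single‑minded values and the correlation‑gap bound, and this tightness must then survive the blow‑up — concretely, the ``no''‑case cap must hold for \emph{every} allocation and uniformly over the randomness/adversarial choice of the embedding, not just for the obvious allocations. I would control robustness on the hardness side with a smoothing/direct‑sum argument (so that an algorithm erring on an $o(1)$ fraction of blocks still reveals enough hidden coordinates to decide the \textsc{Set-Disjointness} instance), and the ``yes''‑case easiness by exhibiting the explicit near‑optimal allocation built from the hidden matching; a secondary nuisance is bookkeeping the $o(1)$ slacks so that the final additive loss stays below $0.001$ for every $n\ge 2$, in particular for $n=2$, where $\rho_2 = 4/3$ and the separation margin is smallest.
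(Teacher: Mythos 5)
Your proposal has a fundamental conceptual gap: you aim to build a gadget realizing the ratio $\rho_n \coloneqq 1/(1-(1-1/n)^n)$ ``exactly,'' but $\rho_n$ is precisely the approximation ratio that \emph{is achievable} in polynomial communication for $n$ XOS bidders (\Cref{thm:ApproxForSAAndXOS}, following \cite{DobzinskiNS10}), and it is also the known communication hardness for plain \XOS{}. Proving that \XOSSM{} is $\rho_n$-hard therefore tells us nothing beyond the existing XOS lower bound and fails to establish the claimed \emph{separation}: the theorem asserts hardness at $\rho_n + 0.001$, strictly \emph{above} what single-XOS hardness gives, and that strict excess is the entire content of the result. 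A construction whose yes/no gap is tuned to equal the XOS correlation gap cannot, by design, cross $\rho_n$. (Relatedly, even on your own accounting, you claim the no-case cap is $\rho_n + 0.0005$, which is \emph{smaller} than $\rho_n + 0.001$, so the arithmetic does not close even if the gadget were to work as stated.)

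The paper's proof goes a genuinely different route precisely to overcome this. It does not try to match the LP integrality gap of a single XOS instance. Instead it combines \emph{two independent sources of hardness}: (i) the single-minded bidders each covet a column, and the XOS bidders hold exponentially many hidden column-sets (a \textsc{PromiseDisjointness} structure) so that in the no-case the protocol cannot find a column-set that serves the XOS bidders well; and (ii) within each column, the $n$ XOS bidders themselves form a \emph{strongly $\alpha(\cdot)$-inapproximable} sub-instance (\Cref{prop:XOSStronglyInapproximable2}), so that even the subset of XOS bidders ``assigned'' to a given column cannot be served to full welfare. Multiplying these effects gives the ratio $\beta(n,p,\delta)$ with its binomial sum over $t$; the key non-obvious fact (checked computationally) is that for the right choice $p = n/(n+1)$ one indeed gets $\beta(n, n/(n+1), 0.001) \ge \rho_n + 0.001$ for every $n \ge 2$. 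Without source (ii) — i.e., without the notion of strong inapproximability letting the $\calV$-hardness survive restriction to sub-instances — the bound collapses back to $\rho_n$, which is exactly the wall your proposal runs into.

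A smaller point: your parameter accounting gives $m = \Theta(k^2)$ with $k = \min\{c,\sqrt m\}$, hence a single column should host $\Theta(k)$ items carrying a hidden XOS structure. But the hidden XOS structure per column needs its own PromiseDisjointness-style content of size exponential in the column size, and you must simultaneously support $2^{\Omega(k)}$ hidden column-sets across columns. Reconciling these two exponential demands on the same $m$ items is exactly why the paper's general theorem pays a power and lands on $m^{1/3}$ rather than $\sqrt m$ inside the exponent; your sketch does not grapple with this interaction at all.
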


\begin{theorem}[\cite{Feige09,DobzinskiNS10}] \label{thm:ApproxForSAAndXOS}
    Let $n$ be the number of bidders. There exists a polynomial communication $2$-approximation algorithm for subadditive valuations and a polynomial communication $1/(1-(1-1/n)^n)$-approximation algorithm for XOS valuations.
\end{theorem}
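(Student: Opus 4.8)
I would prove both bounds via the \emph{configuration LP} and randomized rounding, following~\cite{Feige09,DobzinskiNS10}. The configuration LP has a variable $x_{i,S}\ge 0$ for every bidder $i$ and bundle $S\subseteq M$; it maximizes $\sum_{i,S}x_{i,S}\,v_i(S)$ subject to $\sum_{S}x_{i,S}\le 1$ for each $i$ and $\sum_{i}\sum_{S\ni j}x_{i,S}\le 1$ for each item $j$. Every integral allocation corresponds to a feasible integral point, so the optimal LP value is at least $\OPT$; it therefore suffices to (i) solve the LP in polynomial communication and (ii) round the fractional optimum while losing only the stated factor.

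For (i), the dual has an item price $p_j$ for each $j$ and a variable $u_i$ for each $i$, with constraints $u_i+\sum_{j\in S}p_j\ge v_i(S)$ for all $i,S$; a separation oracle at a candidate $(\vb u,\vb p)$ just asks each bidder $i$ whether $\max_{S}\bigl(v_i(S)-\sum_{j\in S}p_j\bigr)>u_i$, which is a single demand query. Hence the ellipsoid method solves the dual in polynomially many rounds, each using one demand query per bidder, and standard arguments recover a primal optimum supported on polynomially many bundles, all within polynomial communication. Everything afterwards is a purely information-theoretic rounding requiring no further communication, and if a deterministic guarantee is wanted it can be derandomized by the method of conditional expectations using value queries on the polynomially many relevant bundles.

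For (ii), sample, independently for each bidder $i$, a bundle $S_i$ with $\Pr[S_i=S]=x_{i,S}$ (and $S_i=\emptyset$ with the leftover probability), so that $\sum_i\E[v_i(S_i)]$ equals the LP value; then resolve contention to obtain disjoint retained sets $A_i\subseteq S_i$. For XOS valuations, exploit that $v_i$ is a maximum of additive functions: fixing an additive clause supporting $v_i$ at $S_i$ and resolving contention with a scheme tuned to the number $n$ of XOS bidders gives $\E[v_i(A_i)]\ge\bigl(1-(1-1/n)^n\bigr)\E[v_i(S_i)]$ (the analysis of~\cite{DobzinskiNS10}); summing over $i$ yields welfare at least $\bigl(1-(1-1/n)^n\bigr)\OPT$. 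For subadditive valuations one has only the weaker inequality $v_i(S_i)\le v_i(A_i)+v_i(S_i\setminus A_i)$, and~\cite{Feige09}'s contention-resolution scheme is designed so that $\E[v_i(A_i)]\ge\tfrac12\E[v_i(S_i)]$, summing to a $2$-approximation.

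The subadditive rounding is the crux. XOS valuations admit an additive lower bound, so contention can be analyzed one item at a time and a ``fair'' tiebreak for each contested item is enough; subadditive valuations offer no such handle, so the scheme must be argued to preserve half of each bidder's sampled value \emph{jointly} over her items --- precisely Feige's theorem that the configuration LP has integrality gap exactly $2$ for subadditive valuations, which I would invoke as a black box. The XOS statement is then the analogous but easier result, already contained in the same line of work.
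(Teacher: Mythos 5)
The paper does not prove this theorem at all: it is stated as a black-box import from the cited references \cite{Feige09,DobzinskiNS10}, and both its upper-bound proofs (e.g., \Cref{thm:3-approx-alg-for-SASM}) and the surrounding text simply invoke it. So there is no ``paper's own proof'' to compare against; the question is whether your sketch is a faithful account of the cited works.

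Your outline --- the configuration LP, solving its dual in polynomial communication via demand-query separation, and then rounding a fractional optimum via contention resolution --- is indeed the standard route, and is consistent with the machinery the paper itself uses for its \XOSSC{} algorithm in Section 3.2. For the XOS case the per-bidder guarantee you state (each bidder retains an expected $(1-(1-1/n)^n)$ fraction of $v_i(S_i)$, via the supporting additive clause) is the right way to think about it. One caveat for the subadditive case: you phrase Feige's result as ``$\E[v_i(A_i)]\ge\tfrac12\E[v_i(S_i)]$'' for each bidder $i$, but Feige's integrality-gap-$2$ argument is not obtained by such a clean per-bidder contention-resolution guarantee --- subadditive valuations are not fractionally subadditive, so the item-by-item analysis that drives the XOS case genuinely breaks, and his rounding is more global. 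You do flag this as ``the crux'' and invoke Feige as a black box, which is the right move, but the per-bidder inequality you wrote down is not the form in which the black box is actually proved. If you want to state the guarantee precisely without reproving it, it is safer to say only that the rounding achieves total welfare at least half the LP value, rather than claiming a bidder-by-bidder bound.
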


As noted previously, it is somewhat surprising that additional succinct bidders, who can fully share their input in polynomial communication, can complicate the \emph{communication complexity} of welfare optimization. Indeed, if there were \emph{only} succinct bidders, we could trivially maximize the welfare in polynomial communication by just asking each bidder to report their full valuation.

Finally, on the way to our hardness results, we prove a strong inapproximability result for subadditive valuations which may be of independent interest.

\begin{definition}[Scarce]
    An allocation is \emph{$t$-scarce} if all but at most $t$ bidders are allocated $\emptyset$.
\end{definition}

\begin{definition}[Strong Inapproximability] \label{def:IntroHelper}
    A valuation class $\calV$ is \emph{strongly $\alpha(\cdot)$-inapproximable in $z$ communication} if for all $t \in [n]$, beating an $(\alpha(t) \cdot n/t)$-approximation with a $t$-scarce allocation requires $z$ communication. In other words, restricting attention to \emph{any} sub-instance of $t$ bidders from the given instance retains $\alpha(t)$-inapproximability.
\end{definition}

\begin{theorem}[name=,restate=SubadditiveStronglyInapproximableThm] \label{thm:SubadditiveStronglyInapproximable}
    Subadditive valuations are strongly $(2-\I[t = 1]-O(\log(n)/\log(m)))$-inapproximable in $2^{\Omega(\sqrt{m}-n^2\log(m))}$ communication.
\end{theorem}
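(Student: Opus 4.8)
The plan is to exhibit a single distribution over $n$-bidder, $m$-item subadditive instances whose hardness survives restriction to any $t$ of the bidders, refining the classical $2$-hardness construction for subadditive combinatorial auctions. Each bidder $i$ is equipped, on the shared item set $M$, with two ingredients: a \emph{generic} component — a concave/coverage-type valuation normalized so that $v_i(M) = 1$, engineered so that no low-communication protocol can hand a served bidder a bundle worth appreciably more than $\tfrac{1}{2}$ (this is what pins the barrier at $2$); and a \emph{planted} component — a small random substructure, drawn from a family that supports a $2^{\Omega(\sqrt m)}$ multi-party communication lower bound (a disjointness/coverage-type instance laid out over $\Theta(\sqrt m)$ ``coordinates''), so that substantially covering bidder $i$'s planted structure is worth close to $1$. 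Taking $v_i$ to be the (subadditive) maximum of the two, the theorem reduces to a \emph{completeness} claim, $\OPT \ge n\bigl(1 - O(\log n/\log m)\bigr)$ with high probability, and a \emph{soundness} claim, that any protocol using fewer than $2^{\Omega(\sqrt m - n^2\log m)}$ bits outputs, for each target $t$, only $t$-scarce allocations of welfare $\le \tfrac{t}{2}(1+o(1))$ when $t \ge 2$ and $\le 1 + o(1)$ when $t = 1$.

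For completeness I would show that whp the $n$ planted substructures are ``nearly disjoint'': deleting an $O(\log n/\log m)$-fraction of each bidder's planted bundle makes them pairwise disjoint (a greedy/union-bound argument, using that each planted bundle is small relative to $m$), so allocating the trimmed bundles yields welfare $n\bigl(1 - O(\log n/\log m)\bigr)$. The ``$\I[t=1]$'' term is then immediate and needs no communication bound at all: since $v_i(M) = 1$ for every $i$, every $1$-scarce allocation has welfare exactly $1$, and whp $1 \le \OPT/\bigl((1 - O(\log n/\log m))\,n\bigr)$ — precisely the statement that no $1$-scarce allocation beats an $\alpha(1) \cdot n$-approximation with $\alpha(1) = 1 - O(\log n/\log m)$.

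For soundness with $t \ge 2$, note that a $t$-scarce allocation of welfare exceeding $\tfrac{t}{2}(1 + \varepsilon)$ must, on at least one nonempty bundle, beat the generic cap, hence must ``locate'' (substantially cover) a planted substructure; embedding a $\Theta(\sqrt m)$-sized multi-party set-disjointness/coverage instance into the planted structures of the served bidders then makes this a $2^{\Omega(\sqrt m)}$-communication problem. The factor $\tfrac{1}{2}$, rather than $1$, is exactly the classical subadditive barrier, and the point of the ``strong'' formulation is that the sub-instance induced on \emph{any} $t$ bidders is again an instance of the same type, so the same ``splitting-does-not-help'' phenomenon recurs. The $n^2\log m$ loss appears at the reduction: to invoke a clean (product-distribution) communication lower bound I must first condition on the pairwise ``interaction pattern'' between the planted structures of each of the $\binom{n}{2}$ pairs of bidders, and fixing one such pattern costs $O(\log m)$ bits, leaving $\sqrt m - O(n^2\log m)$ bits of genuine hardness.

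The main obstacle — and where the real work lies — is arranging for \emph{one} instance to be hard at the correct rate $\alpha(t) \cdot n/t$ for \emph{every} $t$ simultaneously: the soundness argument for a given $t$ must defeat all $t$-scarce allocations, including an adversarial choice of which $t$ bidders to serve and how to overlap their bundles, so the $n$ planted structures must be close to mutually independent while still living on the same $m$ items and still admitting a near-perfect global allocation. It is this tension that forces both the $O(\log n/\log m)$ loss in completeness and the $O(n^2\log m)$ loss in the communication bound, and verifying that these can be driven down to the claimed rates — rather than, say, $O(n/\log m)$ or an exponent of $\sqrt m/n$ — is the delicate part.
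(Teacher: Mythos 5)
Your high-level architecture — normalize each bidder so $v_i(M)\approx 1$, enforce a ``$\approx\!\frac12$ barrier'' so that scarce allocations cap out at roughly $t/2$, plant a communication-hard substructure to make the near-disjoint good allocation invisible, and attribute the $n^2\log m$ loss to multi-player bookkeeping — matches the shape of the paper's argument, which reduces from the multi-player $\EFS{}$ problem using the set-cover valuations $f_{\vb{X}}$ of \cite{EzraFNTW19}. But there are three genuine gaps.

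First and most importantly, you have not addressed the central obstacle the paper identifies in going from $2$ to $n$ players: the $2$-player $\EFS{}$ lower bound of \cite{EzraFNTW19} uses \emph{internal} information complexity, and for any protocol with $\geq 3$ players the internal information complexity can be driven to zero (secure multiparty computation). Your sketch of ``embedding a $\Theta(\sqrt m)$-sized multi-party disjointness instance'' and ``conditioning on pairwise interaction patterns to get a product distribution'' reads like a standard internal-information direct-sum plan, which would simply fail here. The paper instead has to move to \emph{external conditional} information complexity (\Cref{def:ConditionalInformationComplexity}), prove a fresh direct-sum theorem in that regime (\Cref{thm:DirectSum}), and then lower bound $\CIC_\mu(\FS{m,n})$ via a hybrid/chain argument together with a randomized rectangle decomposition (\Cref{lemma:RandomizedRectangle}, \Cref{prop:Chaining}, \Cref{prop:Rectangling}). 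None of this is in your outline, and without it the direct-sum step does not go through for $n\geq 3$.

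Second, your ``generic cap'' is framed per-bidder (``no protocol can hand a served bidder a bundle worth more than $\frac12$''), but this cannot be literally true since $v_i(M)=1$ and a single bidder can always be handed $M$. The correct statement in the paper is a \emph{pairwise} structural bound (\Cref{lemma:AlmostPerfectlyAdditive}): in any $0$-instance, $v_i(A_i)+v_j(A_j)\leq\lambda+2$ for every allocation and every pair $i\neq j$, independent of the protocol. From this pairwise bound the $t$-scarce welfare cap of $(t+\I[t=1])(\lambda/2+1)$ follows by elementary casework; a per-bidder cap cannot, and also doesn't explain why the $t=1$ case is special. Your $t=1$ discussion gets the right conclusion but for the wrong reason — it is not about low communication at all, but about the instance itself permitting $v_i(M)\approx 1$, which in the paper is precisely $f_{\vb{X}}(M)=\lambda$ versus $f_{\vb{X}}(X_\ell)=\lambda-1$.

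Third, your account of the $n^2\log m$ loss — ``conditioning on the pairwise interaction pattern of the $\binom{n}{2}$ pairs costs $O(\log m)$ bits each'' — is not what happens. In the paper the $m^{n^2}$ factor enters as a raw counting bound in \Cref{lemma:LowInformationMeansFewBrokenLinks} (the number of inputs $\vb{X}$ consistent with a given conditioning set $D$ is $\binom{m-k+n}{n}^n\leq m^{n^2}$), which then propagates to a $\CIC_\mu(\FS{m,n})\geq 1/m^{n^2+O(1)}$ bound, and finally to $z\cdot\CIC_\mu = 2^{\Omega(\sqrt m - n^2\log m)}$. The loss is a byproduct of the link-counting in the chain argument, not of any conditioning step. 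Relatedly, your ``$\Theta(\sqrt m)$ coordinates'' should be $z=2^{\Theta(\sqrt m)}$ parallel copies of $\FS{}$ inside $\EFS{}$; the $\sqrt m$ appears in the exponent of the number of copies, not as a coordinate dimension. These are not cosmetic: getting the exponent $\sqrt m - n^2\log m$ rather than, say, $\sqrt m/n$, is exactly the ``delicate part'' you flag, and it is resolved by this specific counting, not by the conditioning scheme you propose.
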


In other words, not only is it hard to beat a $2$-approximation for subadditive valuations, it is hard to beat a $2$-approximation for \emph{any sub-instance with at least two bidders}.

\subsection{Technical Highlights}

Below, we share some technical highlights for our upper and lower bounds.

\paragraph{\XOSSM{} Lower Bound.}

Let there be $n$ XOS bidders and $c$ single-minded bidders, and for simplicity, let $m = nc$. Our \XOSSM{} lower bound works by dividing the $m$ items into $n$ rows and $c$ columns. For some small $\varepsilon$, each XOS bidder $i$ holds a collection $\calS_i$ of subsets of row $i$ of size $\varepsilon c$, and their valuation is the rank function of the downward-closed set family defined by $\calS_i$. Each single-minded bidder $j$ just wants the items in column $j$. We weight the bidders' valuations accordingly so that the maximum welfare from the single-minded bidders is $1$, and the maximum welfare from the XOS bidders is $1$.

If there exists a set $S$ of $\varepsilon c$ columns where, slightly abusing notation, $S \in \calS_i$ for all XOS bidders $i$, then the optimal welfare is approximately $2-\varepsilon$: we can give the XOS bidders the $\varepsilon c$ columns $S$, and the single-minded bidders everything else.

\newcommand{\Nrows}{4}
\newcommand{\Ncols}{6}

\newcommand{\BarWidth}{0.8}   
\newcommand{\Gap}{0.2}   
\newcommand{\Unit}{0.6}  
\pgfmathsetmacro{\Grid}{\BarWidth+\Gap}
\colorlet{SMColor}{blue!60!black}
\colorlet{XOSColor}{green!60!black}
\colorlet{AColor}{green!100!lime!80!black}
\colorlet{BColor}{red!100!magenta!100!black}
\colorlet{CColor}{orange!100!black}
\colorlet{DColor}{cyan!100!black}
\colorlet{Gray}{gray}

\def\XOSColors{{"AColor","BColor","CColor","DColor"}}

\tikzset{
  every picture/.style = {x=\Unit cm, y=\Unit cm},
  prefCol/.style = {draw=SMColor,  draw=none,
                    fill=blue!15,        fill opacity=.5,
                    rounded corners=6pt, line width=1pt},
  prefRow/.style = {draw=XOSColor, dashed,
                    fill=green!15,       fill opacity=.5,
                    rounded corners=6pt, line width=1pt},
  colBar/.style  = {fill=SMColor,  rounded corners=6pt,
                    draw=none, opacity=.65},
  rowBar/.style  = {fill=XOSColor, rounded corners=6pt,
                    draw=none, opacity=.75},
  item/.style    = {circle, draw, fill=white,
                    inner sep=2.1pt, thick}
}

\newcommand{\itemgrid}{%
  \foreach \i in {0,...,\numexpr\Nrows-1\relax}{%
    \foreach \j in {0,...,\numexpr\Ncols-1\relax}{%
      \pgfmathsetmacro{\Xc}{\j*\Grid + \Gap/2 + \BarWidth/2}%
      \pgfmathsetmacro{\Yc}{-\i}%
      \node[item] at (\Xc,\Yc) {};}}}%

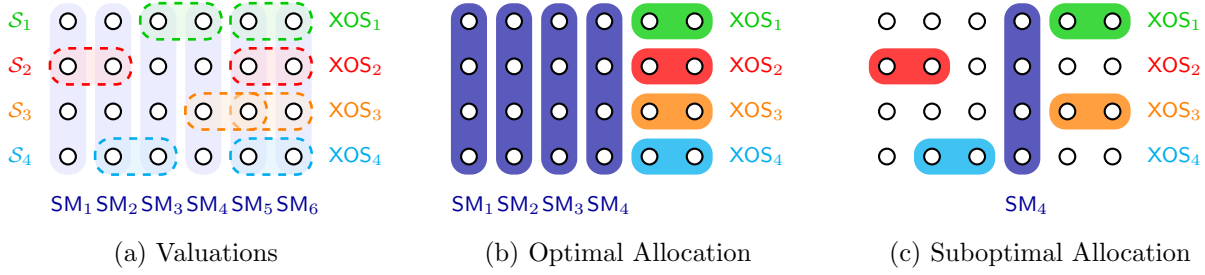
\begin{figure}[H]
\centering

\begin{subfigure}[t]{.32\linewidth}
\centering
\begin{tikzpicture}
  \pgfmathsetmacro{\Xleft}{\Gap/2}
  \pgfmathsetmacro{\Xright}{\Ncols*\Grid - \Gap/2}
  \pgfmathsetmacro{\Ytop}{ \BarWidth/2}
  \pgfmathsetmacro{\Ybot}{-(\Nrows-1) - \BarWidth/2}
  \pgfmathsetmacro{\MidX}{0.5*(\Xleft+\Xright)}
  \pgfmathsetmacro{\MidY}{-0.5*(\Nrows-1)}

  \foreach \j in {0,...,\numexpr\Ncols-1\relax}{
    \pgfmathsetmacro{\XL}{\j*\Grid + \Gap/2}
    \draw[prefCol] (\XL,\Ytop) rectangle ++(\BarWidth,\Ybot-\Ytop);
  }

  \pgfmathsetmacro{\PrefL}{4*\Grid + \Gap/2}
  \pgfmathsetmacro{\PrefW}{2*\BarWidth + \Gap}
  \foreach \i [evaluate=\i as \usecolor using {\XOSColors[\i]}] in {0,...,\numexpr\Nrows-1\relax}{
    \pgfmathsetmacro{\Yc}{-\i}
    \draw[prefRow, draw=\usecolor, fill=\usecolor!15] (\PrefL,\Yc-\BarWidth/2) rectangle ++(\PrefW,\BarWidth);
  }
  \pgfmathsetmacro{\RowW}{\PrefW}
  \draw[prefRow, draw=AColor, fill=AColor!15] (2*\Grid+\Gap/2,-0-\BarWidth/2) rectangle ++(\RowW,\BarWidth);
  \draw[prefRow, draw=BColor, fill=BColor!15] (0*\Grid+\Gap/2,-1-\BarWidth/2) rectangle ++(\RowW,\BarWidth);
  \draw[prefRow, draw=CColor, fill=CColor!15] (3*\Grid+\Gap/2,-2-\BarWidth/2) rectangle ++(\RowW,\BarWidth);
  \draw[prefRow, draw=DColor, fill=DColor!15] (1*\Grid+\Gap/2,-3-\BarWidth/2) rectangle ++(\RowW,\BarWidth);

  \itemgrid

  \foreach \i [evaluate=\i as \usecolor using {\XOSColors[\i-1]}] in {1,...,\Nrows}{
    \pgfmathsetmacro{\rowY}{1-\i}
    \node[anchor=west, scale=1, text=\usecolor]
          at (\Xright+0.15, \rowY) {$\scriptstyle\XOS_\i$};
  }
  \foreach \i [evaluate=\i as \usecolor using {\XOSColors[\i-1]}] in {1,...,\Nrows}{
    \pgfmathsetmacro{\rowY}{1-\i}
    \node[anchor=east, scale=1, text=\usecolor]
          at (\Xleft-0.15, \rowY) {$\scriptstyle\calS_\i$};
  }
  \foreach \j in {1,...,\Ncols}{
    \pgfmathsetmacro{\colX}{\j-0.5+\Gap/2}
    \node[anchor=north, scale=1, text=SMColor]
          at (\colX, \Ybot-0.25) {$\scriptstyle\SM_{\j}$};
  }
\end{tikzpicture}
\caption{Valuations}
\end{subfigure}
\hfill
\begin{subfigure}[t]{.32\linewidth}
\centering
\begin{tikzpicture}
  \pgfmathsetmacro{\Xleft}{\Gap/2}
  \pgfmathsetmacro{\Xright}{\Ncols*\Grid - \Gap/2}
  \pgfmathsetmacro{\Ytop}{ \BarWidth/2}
  \pgfmathsetmacro{\Ybot}{-(\Nrows-1) - \BarWidth/2}
  \pgfmathsetmacro{\MidX}{0.5*(\Xleft+\Xright)}
  \pgfmathsetmacro{\MidY}{-0.5*(\Nrows-1)}

  \foreach \j in {0,...,3}{
    \pgfmathsetmacro{\XL}{\j*\Grid + \Gap/2}
    \draw[colBar] (\XL,\Ytop) rectangle ++(\BarWidth,\Ybot-\Ytop);
  }

  \pgfmathsetmacro{\AllocL}{4*\Grid + \Gap/2}
  \pgfmathsetmacro{\RowW}{2*\BarWidth + \Gap}
  \foreach \i [evaluate=\i as \usecolor using {\XOSColors[\i]}] in {0,...,3}{
    \pgfmathsetmacro{\Yc}{-\i}
    \draw[rowBar, fill=\usecolor] (\AllocL,\Yc-\BarWidth/2) rectangle ++(\RowW,\BarWidth);
  }

  \itemgrid

  \foreach \i [evaluate=\i as \usecolor using {\XOSColors[\i-1]}] in {1,...,\Nrows}{
    \pgfmathsetmacro{\rowY}{1-\i}
    \node[anchor=west, scale=1, text=\usecolor]
          at (\Xright+0.15, \rowY) {$\scriptstyle\XOS_\i$};
  }
  \foreach \j in {1,...,\numexpr\Ncols-2\relax}{
    \pgfmathsetmacro{\colX}{\j-0.5+\Gap/2}
    \node[anchor=north, scale=1, text=SMColor]
          at (\colX, \Ybot-0.25) {$\scriptstyle\SM_{\j}$};
  }
\end{tikzpicture}
\caption{Optimal Allocation}
\end{subfigure}
\hfill
\begin{subfigure}[t]{.32\linewidth}
\centering
\begin{tikzpicture}
  \pgfmathsetmacro{\Xleft}{\Gap/2}
  \pgfmathsetmacro{\Xright}{\Ncols*\Grid - \Gap/2}
  \pgfmathsetmacro{\Ytop}{ \BarWidth/2}
  \pgfmathsetmacro{\Ybot}{-(\Nrows-1) - \BarWidth/2}
  \pgfmathsetmacro{\MidX}{0.5*(\Xleft+\Xright)}
  \pgfmathsetmacro{\MidY}{-0.5*(\Nrows-1)}

  \pgfmathsetmacro{\RowW}{2*\BarWidth + \Gap}

  \pgfmathsetmacro{\XLsm}{3*\Grid + \Gap/2}
  \draw[colBar] (\XLsm,\Ytop) rectangle ++(\BarWidth,\Ybot-\Ytop);

  \draw[rowBar, fill=AColor] (4*\Grid+\Gap/2,-0-\BarWidth/2) rectangle ++(\RowW,\BarWidth);
  \draw[rowBar, fill=BColor] (0*\Grid+\Gap/2,-1-\BarWidth/2) rectangle ++(\RowW,\BarWidth);
  \draw[rowBar, fill=CColor] (4*\Grid+\Gap/2,-2-\BarWidth/2) rectangle ++(\RowW,\BarWidth);
  \draw[rowBar, fill=DColor] (1*\Grid+\Gap/2,-3-\BarWidth/2) rectangle ++(\RowW,\BarWidth);

  \itemgrid

  \foreach \i [evaluate=\i as \usecolor using {\XOSColors[\i-1]}] in {1,...,\Nrows}{
    \pgfmathsetmacro{\rowY}{1-\i}
    \node[anchor=west, scale=1, text=\usecolor]
          at (\Xright+0.15, \rowY) {$\scriptstyle\XOS_\i$};
  }
  \foreach \j in {4}{
    \pgfmathsetmacro{\colX}{\j-0.5+\Gap/2}
    \node[anchor=north, scale=1, text=SMColor]
          at (\colX, \Ybot-0.25) {$\scriptstyle\SM_{\j}$};
  }
\end{tikzpicture}
\caption{Suboptimal Allocation}
\end{subfigure}

\caption{An illustrative example of a hard instance for the \XOSSM{} lower bound. Here we have $n=4$ XOS bidders, $c=6$ single-minded bidders, and $m=nc=24$ items. The valuation for $\XOS_i$ is defined by the collection of items $\calS_i$ over row $i$, while $\SM_j$ is single-minded for the items in column $j$.
The optimal allocation can only be achieved by allocating the common columns $\{5, 6\}$ to the XOS bidders and allocating the rest to the single-minded bidders. On the other hand, if the algorithm cannot identify the common column set $\{5, 6\}$, it must either fail to satisfy many single-minded bidders as in (c), or fail to satisfy many XOS bidders.}
\label{fig:xos-sm-hard-instance}
\end{figure}

However, so long as $\abs{\calS_i}$ is exponentially large, it is impossible for an algorithm using polynomial communication to find a small column set which gives the XOS bidders high welfare. If the algorithm gives the XOS bidders a large column set, then the single-minded bidders get low welfare.

Interestingly, this construction does not feature any direct contention among the XOS bidders. Instead, the high level takeaway is that single-minded bidders can be used to correlate XOS bidders' allocations in a way that other XOS valuations cannot, and simply knowing the nature of this correlation does not help a polynomial communication algorithm.

\paragraph{\SASM{} Lower Bound.}

Our \SASM{} construction leverages the single-minded bidders like the \XOSSM{} lower bound, but additionally uses the $2$-inapproximability of subadditive valuations.

More generally, let $\calV$ be a valuation class closed under addition and point-wise maximum (these are the minimal properties needed to create an \XOSSM{}-style hard instance; indeed, observe that \XOS{} is the smallest non-trivial valuation class closed under addition and point-wise maximum). Further, suppose that $\calV$ cannot be $\alpha$-approximated in polynomial communication.

Then repeat the construction of the \XOSSM{} hard instance, but instead of each column having $n$ items, make each column have $nc$ items (and let $m = nc^2$), and instead of each $\calV$-type bidder wanting a single item from each column, the $\calV$-type bidders should form a hard instance within each column. Then each $\calV$-type bidder's overall value adds up across columns in an XOS-like fashion.

Now any efficient algorithm faces two challenges:
\begin{enumerate}[topsep=4pt]
    \item As in the \XOSSM{} construction, the algorithm cannot hope to find a small column set satisfying the $\calV$-type bidders.

    \item Additionally, the algorithm cannot achieve an $\alpha$-approximation of the $\calV$-type bidders within each column.
\end{enumerate}

Therefore, by letting the $\calV$-type bidders contribute welfare $\alpha$ and the single-minded bidders contribute welfare $1$, the optimal welfare is approximately $1+\alpha-\varepsilon$, but an efficient algorithm cannot beat welfare $1$.

There is one problem: each $\calV$-type bidder only gets value from at most $\varepsilon c$ columns. But this means that we can split up the $\calV$-type bidders among the columns so that in each column, we only need to consider allocating items to $\varepsilon n$ of the $\calV$-type bidders, instead of all $n$ of them. If this easier sub-problem is no longer $\alpha$-inapproximable, then an algorithm could hope to beat welfare $1$, and therefore beat a $(1+\alpha-\varepsilon)$-approximation.

\renewcommand{\Nrows}{4}
\renewcommand{\Ncols}{6}

\renewcommand{\BarWidth}{0.8}   
\renewcommand{\Gap}{0.2}   
\renewcommand{\Unit}{0.6}  
\pgfmathsetmacro{\Grid}{\BarWidth+\Gap}
\pgfmathsetmacro{\RowW}{2*\BarWidth+\Gap}

\def\dotVspace{0.3333}
\def\dotHoffset{0.4} 

\colorlet{SMColor}{blue!60!black}
\colorlet{AColor}{green!100!lime!80!black}
\colorlet{BColor}{red!100!magenta!100!black}
\colorlet{CColor}{orange!100!black}
\colorlet{DColor}{cyan!100!black}
\colorlet{Gray}{gray}
\colorlet{White}{white}

\def\colorA{AColor}
\def\colorB{BColor}
\def\colorC{CColor}
\def\colorD{DColor}
\def\colorN{White}

\tikzset{
  every picture/.style = {x=\Unit cm, y=\Unit cm},
  prefCol/.style = {draw=SMColor,  draw=none,
                    fill=blue!15,        fill opacity=.5,
                    rounded corners=2pt, line width=1pt},
  colBar/.style  = {fill=SMColor,  rounded corners=2pt,
                    draw=none, opacity=.65},
  item/.style    = {circle, draw, fill=white,
                    inner sep=2.1pt, thick}
}

\newcommand{\confidencebaratcols}[3]{%

  \pgfmathsetmacro{\xbarcenter}{(#1 + 1)*\Grid}
  \pgfmathsetmacro{\ycenter}{-#2}
  \pgfmathsetmacro{\halfwidth}{0.5*\RowW}
  \pgfmathsetmacro{\tickheight}{0.4}  

  \draw[#3, very thick] ({\xbarcenter - \halfwidth}, {\ycenter}) -- ({\xbarcenter + \halfwidth}, {\ycenter});
  \draw[#3, very thick] ({\xbarcenter - \halfwidth}, {\ycenter - \tickheight}) -- ({\xbarcenter - \halfwidth}, {\ycenter + \tickheight});
  \draw[#3, very thick] ({\xbarcenter + \halfwidth}, {\ycenter - \tickheight}) -- ({\xbarcenter + \halfwidth}, {\ycenter + \tickheight});
}

\renewcommand{\itemgrid}{%
  \foreach \i in {0,...,\numexpr\Nrows-1\relax}{%
    \foreach \j in {0,...,\numexpr\Ncols-1\relax}{%
      \pgfmathsetmacro{\Xc}{\j*\Grid + \Gap/2 + \BarWidth/2}%
      \pgfmathsetmacro{\Yc}{-\i}%
      \foreach \dx/\dy in {-0.2/\dotVspace, 0.2/\dotVspace, -0.2/0, 0.2/0, -0.2/-\dotVspace, 0.2/-\dotVspace}{%
        \node[item, scale=0.6] at ({\Xc+\dx}, {\Yc+\dy}) {};
      }%
    }%
  }%
}

\begin{figure}[H]
\centering

\begin{subfigure}[t]{.32\linewidth}
\centering
\begin{tikzpicture}
  \pgfmathsetmacro{\Ytop}{ \BarWidth/2}
  \pgfmathsetmacro{\Ybot}{-(\Nrows-1) - \BarWidth/2}
  \pgfmathsetmacro{\Xright}{\Ncols*\Grid - \Gap/2}
  \pgfmathsetmacro{\Xleft}{\Gap/2}

  \foreach \j in {0,...,\numexpr\Ncols-1\relax}{
    \pgfmathsetmacro{\XL}{\j*\Grid + \Gap/2}
    \draw[fill=blue!15, opacity=0.5, rounded corners=2pt, draw=none] (\XL,\Ytop + 0.1) rectangle ++(\BarWidth,\Ybot-\Ytop - 0.2);
  }

  \itemgrid

  \confidencebaratcols{2}{0}{AColor}
  \confidencebaratcols{4}{0}{AColor}
  \confidencebaratcols{0}{1}{BColor}
  \confidencebaratcols{4}{1}{BColor}
  \confidencebaratcols{3}{2}{CColor}
  \confidencebaratcols{4}{2}{CColor}
  \confidencebaratcols{1}{3}{DColor}
  \confidencebaratcols{4}{3}{DColor}

  \node[anchor=west, scale=1, text=AColor]
        at (\Xright+0.15, 0) {$\scriptstyle\SA_{1}$};
  \node[anchor=west, scale=1, text=BColor]
        at (\Xright+0.15, -1) {$\scriptstyle\SA_{2}$};
  \node[anchor=west, scale=1, text=CColor]
        at (\Xright+0.15, -2) {$\scriptstyle\SA_{3}$};
  \node[anchor=west, scale=1, text=DColor]
        at (\Xright+0.15, -3) {$\scriptstyle\SA_{4}$};
  \node[anchor=east, scale=1, text=AColor]
        at (\Xleft-0.15, 0) {$\scriptstyle\calS_{1}$};
  \node[anchor=east, scale=1, text=BColor]
        at (\Xleft-0.15, -1) {$\scriptstyle\calS_{2}$};
  \node[anchor=east, scale=1, text=CColor]
        at (\Xleft-0.15, -2) {$\scriptstyle\calS_{3}$};
  \node[anchor=east, scale=1, text=DColor]
        at (\Xleft-0.15, -3) {$\scriptstyle\calS_{4}$};
  \foreach \j in {1,...,\Ncols}{
    \pgfmathsetmacro{\colX}{\j-0.5+\Gap/2}
    \node[anchor=north, scale=1, text=SMColor]
          at (\colX, \Ybot-0.25) {$\scriptstyle\SM_{\j}$};
  }
\end{tikzpicture}
\caption{Valuations}
\end{subfigure}
\hfill
\begin{subfigure}[t]{.32\linewidth}
\centering

\begin{tikzpicture}
  \pgfmathsetmacro{\Ytop}{ \BarWidth/2}
  \pgfmathsetmacro{\Ybot}{-(\Nrows-1) - \BarWidth/2}
  \pgfmathsetmacro{\Xright}{\Ncols*\Grid - \Gap/2}
  \pgfmathsetmacro{\Xleft}{\Gap/2}

  \foreach \j in {0,...,3}{
    \pgfmathsetmacro{\XL}{\j*\Grid + \Gap/2}
    \draw[colBar] (\XL,\Ytop + 0.1) rectangle ++(\BarWidth,\Ybot-\Ytop - 0.2);
  }


  \newcommand{\getcolor}[2]{%
    \ifnum#1=0 \ifnum#2=0 \colorA \else \colorA \fi\fi
    \ifnum#1=1 \ifnum#2=0 \colorA \else \colorA \fi\fi
    \ifnum#1=2 \ifnum#2=0 \colorA \else \colorB \fi\fi
    \ifnum#1=3 \ifnum#2=0 \colorA \else \colorB \fi\fi
    \ifnum#1=4 \ifnum#2=0 \colorB \else \colorB \fi\fi
    \ifnum#1=5 \ifnum#2=0 \colorB \else \colorC \fi\fi
    \ifnum#1=6 \ifnum#2=0 \colorB \else \colorC \fi\fi
    \ifnum#1=7 \ifnum#2=0 \colorC \else \colorC \fi\fi
    \ifnum#1=8 \ifnum#2=0 \colorD \else \colorC \fi\fi
    \ifnum#1=9 \ifnum#2=0 \colorD \else \colorC \fi\fi
    \ifnum#1=10 \ifnum#2=0 \colorD \else \colorD \fi\fi
    \ifnum#1=11 \ifnum#2=0 \colorD \else \colorD \fi\fi
  }

  \foreach \col in {4,5}{

    \pgfmathsetmacro{\XLcol}{\col*\Grid + \Gap/2}

    \foreach \row in {0,...,11}{
      \foreach \dotcol in {0,1}{

        \pgfmathsetmacro{\Yrect}{-\row*\dotVspace + \dotVspace/2}

        \pgfmathsetmacro{\Xrect}{\XLcol + \dotcol*\dotHoffset}

        \edef\cellcolor{\getcolor{\row}{\dotcol}}

        \draw[fill=\cellcolor, fill opacity=0.75, draw=none, rounded corners=0pt]
          (\Xrect, \Yrect) rectangle ++(\dotHoffset, \dotVspace);
      }
    }
  }

  \itemgrid


  \node[anchor=west, scale=1, text=AColor]
        at (\Xright+0.15, 0) {$\scriptstyle\SA_{1}$};
  \node[anchor=west, scale=1, text=BColor]
        at (\Xright+0.15, -1) {$\scriptstyle\SA_{2}$};
  \node[anchor=west, scale=1, text=CColor]
        at (\Xright+0.15, -2) {$\scriptstyle\SA_{3}$};
  \node[anchor=west, scale=1, text=DColor]
        at (\Xright+0.15, -3) {$\scriptstyle\SA_{4}$};
  \foreach \j in {1,...,\numexpr\Ncols-2\relax}{
    \pgfmathsetmacro{\colX}{\j-0.5+\Gap/2}
    \node[anchor=north, scale=1, text=SMColor]
          at (\colX, \Ybot-0.25) {$\scriptstyle\SM_{\j}$};
  }
\end{tikzpicture}
\caption{Optimal Allocation}
\end{subfigure}
\hfill
\begin{subfigure}[t]{.32\linewidth}
\centering
\begin{tikzpicture}
  \pgfmathsetmacro{\Ytop}{ \BarWidth/2}
  \pgfmathsetmacro{\Ybot}{-(\Nrows-1) - \BarWidth/2}
  \pgfmathsetmacro{\Xright}{\Ncols*\Grid - \Gap/2}
  \pgfmathsetmacro{\Xleft}{\Gap/2}

  \pgfmathsetmacro{\XLsm}{3*\Grid + \Gap/2}
  \draw[colBar] (\XLsm,\Ytop + 0.1) rectangle ++(\BarWidth,\Ybot-\Ytop - 0.2);


  \newcommand{\getcolorOne}[2]{%
    \ifnum#1=0 \ifnum#2=0 \colorN \else \colorN \fi\fi
    \ifnum#1=1 \ifnum#2=0 \colorN \else \colorB \fi\fi
    \ifnum#1=2 \ifnum#2=0 \colorN \else \colorB \fi\fi
    \ifnum#1=3 \ifnum#2=0 \colorN \else \colorB \fi\fi
    \ifnum#1=4 \ifnum#2=0 \colorB \else \colorB \fi\fi
    \ifnum#1=5 \ifnum#2=0 \colorB \else \colorB \fi\fi
    \ifnum#1=6 \ifnum#2=0 \colorB \else \colorB \fi\fi
    \ifnum#1=7 \ifnum#2=0 \colorN \else \colorB \fi\fi
    \ifnum#1=8 \ifnum#2=0 \colorN \else \colorN \fi\fi
    \ifnum#1=9 \ifnum#2=0 \colorN \else \colorN \fi\fi
    \ifnum#1=10 \ifnum#2=0 \colorN \else \colorN \fi\fi
    \ifnum#1=11 \ifnum#2=0 \colorN \else \colorN \fi\fi
  }
  \newcommand{\getcolorTwo}[2]{%
    \ifnum#1=0 \ifnum#2=0 \colorB \else \colorB \fi\fi
    \ifnum#1=1 \ifnum#2=0 \colorB \else \colorB \fi\fi
    \ifnum#1=2 \ifnum#2=0 \colorB \else \colorB \fi\fi
    \ifnum#1=3 \ifnum#2=0 \colorD \else \colorB \fi\fi
    \ifnum#1=4 \ifnum#2=0 \colorD \else \colorB \fi\fi
    \ifnum#1=5 \ifnum#2=0 \colorD \else \colorB \fi\fi
    \ifnum#1=6 \ifnum#2=0 \colorD \else \colorB \fi\fi
    \ifnum#1=7 \ifnum#2=0 \colorD \else \colorD \fi\fi
    \ifnum#1=8 \ifnum#2=0 \colorD \else \colorD \fi\fi
    \ifnum#1=9 \ifnum#2=0 \colorD \else \colorD \fi\fi
    \ifnum#1=10 \ifnum#2=0 \colorD \else \colorB \fi\fi
    \ifnum#1=11 \ifnum#2=0 \colorD \else \colorB \fi\fi
  }
  \newcommand{\getcolorThree}[2]{%
    \ifnum#1=0 \ifnum#2=0 \colorN \else \colorN \fi\fi
    \ifnum#1=1 \ifnum#2=0 \colorN \else \colorN \fi\fi
    \ifnum#1=2 \ifnum#2=0 \colorN \else \colorN \fi\fi
    \ifnum#1=3 \ifnum#2=0 \colorN \else \colorN \fi\fi
    \ifnum#1=4 \ifnum#2=0 \colorN \else \colorN \fi\fi
    \ifnum#1=5 \ifnum#2=0 \colorN \else \colorN \fi\fi
    \ifnum#1=6 \ifnum#2=0 \colorD \else \colorN \fi\fi
    \ifnum#1=7 \ifnum#2=0 \colorD \else \colorN \fi\fi
    \ifnum#1=8 \ifnum#2=0 \colorD \else \colorD \fi\fi
    \ifnum#1=9 \ifnum#2=0 \colorD \else \colorD \fi\fi
    \ifnum#1=10 \ifnum#2=0 \colorD \else \colorD \fi\fi
    \ifnum#1=11 \ifnum#2=0 \colorD \else \colorD \fi\fi
  }
  \newcommand{\getcolorFiveSix}[2]{%
    \ifnum#1=0 \ifnum#2=0 \colorA \else \colorA \fi\fi
    \ifnum#1=1 \ifnum#2=0 \colorC \else \colorA \fi\fi
    \ifnum#1=2 \ifnum#2=0 \colorC \else \colorA \fi\fi
    \ifnum#1=3 \ifnum#2=0 \colorC \else \colorA \fi\fi
    \ifnum#1=4 \ifnum#2=0 \colorC \else \colorA \fi\fi
    \ifnum#1=5 \ifnum#2=0 \colorC \else \colorA \fi\fi
    \ifnum#1=6 \ifnum#2=0 \colorC \else \colorA \fi\fi
    \ifnum#1=7 \ifnum#2=0 \colorA \else \colorA \fi\fi
    \ifnum#1=8 \ifnum#2=0 \colorA \else \colorC \fi\fi
    \ifnum#1=9 \ifnum#2=0 \colorA \else \colorC \fi\fi
    \ifnum#1=10 \ifnum#2=0 \colorC \else \colorC \fi\fi
    \ifnum#1=11 \ifnum#2=0 \colorC \else \colorC \fi\fi
  }

  \pgfmathsetmacro{\XLcol}{0*\Grid + \Gap/2}

  \foreach \row in {0,...,11}{
    \foreach \dotcol in {0,1}{

      \pgfmathsetmacro{\Yrect}{-\row*\dotVspace + \dotVspace/2}

      \pgfmathsetmacro{\Xrect}{\XLcol + \dotcol*\dotHoffset}

      \edef\cellcolor{\getcolorOne{\row}{\dotcol}}

      \draw[fill=\cellcolor, fill opacity=0.75, draw=none, rounded corners=0pt]
        (\Xrect, \Yrect) rectangle ++(\dotHoffset, \dotVspace);
    }
  }

  \pgfmathsetmacro{\XLcol}{1*\Grid + \Gap/2}

  \foreach \row in {0,...,11}{
    \foreach \dotcol in {0,1}{

      \pgfmathsetmacro{\Yrect}{-\row*\dotVspace + \dotVspace/2}

      \pgfmathsetmacro{\Xrect}{\XLcol + \dotcol*\dotHoffset}

      \edef\cellcolor{\getcolorTwo{\row}{\dotcol}}

      \draw[fill=\cellcolor, fill opacity=0.75, draw=none, rounded corners=0pt]
        (\Xrect, \Yrect) rectangle ++(\dotHoffset, \dotVspace);
    }
  }

  \pgfmathsetmacro{\XLcol}{2*\Grid + \Gap/2}

  \foreach \row in {0,...,11}{
    \foreach \dotcol in {0,1}{

      \pgfmathsetmacro{\Yrect}{-\row*\dotVspace + \dotVspace/2}

      \pgfmathsetmacro{\Xrect}{\XLcol + \dotcol*\dotHoffset}

      \edef\cellcolor{\getcolorThree{\row}{\dotcol}}

      \draw[fill=\cellcolor, fill opacity=0.75, draw=none, rounded corners=0pt]
        (\Xrect, \Yrect) rectangle ++(\dotHoffset, \dotVspace);
    }
  }

  \foreach \col in {4,5}{

    \pgfmathsetmacro{\XLcol}{\col*\Grid + \Gap/2}

    \foreach \row in {0,...,11}{
      \foreach \dotcol in {0,1}{

        \pgfmathsetmacro{\Yrect}{-\row*\dotVspace + \dotVspace/2}

        \pgfmathsetmacro{\Xrect}{\XLcol + \dotcol*\dotHoffset}

        \edef\cellcolor{\getcolorFiveSix{\row}{\dotcol}}

        \draw[fill=\cellcolor, fill opacity=0.75, draw=none, rounded corners=0pt]
          (\Xrect, \Yrect) rectangle ++(\dotHoffset, \dotVspace);
      }
    }
  }

  \itemgrid

  \node[anchor=west, scale=1, text=AColor]
        at (\Xright+0.15, 0) {$\scriptstyle\SA_{1}$};
  \node[anchor=west, scale=1, text=BColor]
        at (\Xright+0.15, -1) {$\scriptstyle\SA_{2}$};
  \node[anchor=west, scale=1, text=CColor]
        at (\Xright+0.15, -2) {$\scriptstyle\SA_{3}$};
  \node[anchor=west, scale=1, text=DColor]
        at (\Xright+0.15, -3) {$\scriptstyle\SA_{4}$};
  \foreach \j in {4}{
    \pgfmathsetmacro{\colX}{\j-0.5+\Gap/2}
    \node[anchor=north, scale=1, text=SMColor]
          at (\colX, \Ybot-0.25) {$\scriptstyle\SM_{\j}$};
  }
\end{tikzpicture}
\caption{Suboptimal Allocation}
\end{subfigure}

\caption{The \SASM{} hard instance. Instead of wanting a single item from each column like in the \XOSSM{} hard instance, each subadditive bidder has a complex valuation over the items in each column so that the optimal allocation among subadditive bidders is hard to find. \\
\hspace*{18pt} It is possible for a suboptimal allocation like (c) to split the subadditive bidders across columns so that each column is ``responsible'' for giving value to significantly fewer subadditive bidders, since each subadditive bidder only wants an $\varepsilon$ fraction of the columns. Therefore, the standard notion of $2$-inapproximability that subadditive valuations are known to satisfy does not suffice: we need a stronger notion where almost every sub-instance of the bidders is $2$-inapproximable as well. In this example, there is only one bidder in columns $1$ and $3$, so finding an optimal allocation for it is trivial. If we could also efficiently find an optimal allocation between any two subadditive bidders, then we could recover the full subadditive welfare, meaning the instance is no harder to approximate than the simpler \XOSSM{} hard instance. \\
\hspace*{18pt} It turns out that subadditive valuations satisfy a notion of \emph{strong $2$-inapproximability} that allows for the \SASM{} hard instance to work as intended. Note that such a property is not a given; for example, XOS valuations do \emph{not} satisfy any such strong inapproximability notion, despite being $e/(e-1)$-inapproximable in the standard sense. This is why the \XOSSM{} hard instance cannot be made meaningfully harder with more complexity within each column.}
\label{fig:sa-sm-hard-instance}
\end{figure}

Our key technical contribution here is a notion of \emph{strong inapproximability}, where a hard instance remains hard even when all but a not-too-small subset of the bidders are removed. This notion and the following result for subadditive valuations may be of independent interest for other communication lower bounds.

We show that subadditive valuations are strongly $(2-\varepsilon)$-inapproximable. In particular, for all sub-polynomial $n$ (a slightly weaker result holds when $n = \poly(m)$), there exist $n$ subadditive valuations whose welfare is $2n$, but no polynomial communication algorithm can find an allocation where $(1)$ no more than $t \in [2, n]$ bidders receive a non-empty set of items and $(2)$ the welfare is at least $(1+\varepsilon)t$. The construction is a natural generalization of the $2$-bidder construction in~\cite{EzraFNTW19} to $n$ bidders, but the proof approach is qualitatively different;~\cite{EzraFNTW19} uses \emph{internal information complexity} for their lower bound, but we must use \emph{external} information complexity because the internal information complexity can be $0$ for any $> 2$-player communication protocol.

The \VSM{} construction above with $\calV = \SA{}$ then yields a $(3-\varepsilon)$-inapproximability result for $\SASM{}$. We remark that our framework allows any future strong inapproximability results for valuation classes such as $q$-partitioning valuations~\cite{BangachevW25} to be lifted to a hardness result with additional single-minded bidders.

\paragraph{Algorithms.}

For any $\alpha$-approximable valuation class $\calV$, there is a trivial $(1+\alpha)$-approximation for \VSC{}: optimize the welfare of the succinct bidders and $\alpha$-approximate the $\calV$-type bidders separately, and choose the higher welfare allocation. It turns out that if $\calV$-type valuations are strongly $(\alpha-\varepsilon)$-inapproximable, this trivial algorithm is asymptotically optimal. Indeed, because subadditive valuations are both $2$-approximable and strongly $(2-\varepsilon)$-inapproximable, the trivial $3$-approximation algorithm for \SASC{} is optimal.

The \XOSSC{} algorithm is more interesting. The optimal $e/(e-1)$-approximation algorithm for XOS valuations solves an LP and rounds the solution using (correlated) \emph{contention resolution schemes} (CRS) for each item. Omitting details, the $e/(e-1)$-approximation stems from the fact that the optimal CRS is \emph{$(e-1)/e$-selectable}. Importantly, while the rounding procedure introduces correlations between the items, approximation guarantees for XOS valuations are unaffected by such correlations.

On the other hand, succinct bidders can be \emph{very} affected by correlations. For example, single-minded bidders get nonzero value only when receiving their entire desired set. To handle this, our algorithm first learns the full valuations of the succinct bidders, which allows them to be treated as a single surrogate bidder. Then our algorithm rounds the LP solution using \emph{online contention resolution schemes} (OCRS) for each item. The benefit of using an OCRS is that we can correlate them into all making the same decision for the first bidder. Then by presenting the surrogate succinct bidder to the OCRSs first, we avoid any unfavorable correlations. We again omit the details, but the $2$-approximation we get for \XOSSC{} stems from the fact that the optimal OCRS is $1/2$-selectable. Thus, the seemingly unrelated optimal approximation ratios of $e/(e-1)$ without succinct bidders and $2$ with succinct bidders actually stem from a deeper connection to contention resolution.

One final note is that the \XOSSC{} (\SASC{}, respectively) algorithm works for XOS (subadditive, respectively) valuations, plus a single general monotone bidder, which succinct bidders can be reduced to. On the other hand, our lower bounds only require single-minded bidders, which are a narrower class than succinct bidders. Thus, there is an interesting equivalence in the optimal approximation induced by the addition of single-minded bidders, succinct bidders, and a single general monotone bidder.

\subsection{Related Work}

As discussed previously, there has been extensive study in the communication complexity of algorithms for combinatorial auctions under a variety of valuation classes~\cite{LehmannOS02,DobzinskiNS06,BlumrosenN05a,Feige09,EzraFNTW19,DobzinskiNS10,DobzinskiV13,FeigeV10}. Our work seeks to understand the communication complexity of algorithms for combinatorial auctions with additional succinct bidders, which is directly motivated by applications in~\cite{RonTWZ24} (and to a lesser extent, applications in~\cite{QiuW24}) to lower bounds for deterministic truthful mechanisms~\cite{AssadiKSW20,Dobzinski16b}.

There is also significant work in many adjacent settings. For example, while ``truthful'' is usually shorthand for ``it is an ex-post Nash equilibrium for bidders to behave truthfully,''~\cite{DobzinskiRV22,RubinsteinSTWZ21} considers the communication complexity of \emph{dominant strategy truthful} mechanisms. There is also a long line of work in developing \emph{randomized} truthful mechanisms~\cite{DobzinskiNS06,Dobzinski07,KrystaV12,Dobzinski16a,AssadiS19,AssadiKS21}. Additionally, there is much study on the \emph{computational} complexity of truthful mechanisms, both deterministic and randomized~\cite{Dobzinski11,DughmiV11,DobzinskiV12a,DobzinskiV12b,DobzinskiV16,QiuW24}.

\section{Preliminaries}
\label{sec:prelim}

\paragraph{Combinatorial Auctions.}

In a combinatorial auction, there are a set of items $M \coloneqq [m]$ and bidders $N \coloneqq [n]$. Each bidder $i \in N$ holds a private valuation function $v_i : 2^M \to \mathbb{R}_+$ that is monotone, i.e., $v_i(S) \leq v_i(T)$ for all $S \subseteq T$. We often restrict the valuations to a valuation class $\calV$ depending on the problem setting. An \emph{allocation} is an $n$-tuple of pairwise disjoint subsets of $M$, and we denote the set of all allocations by $\Sigma$. The objective is to find an allocation $\vb{A} = (A_1, \dots, A_n) \in \Sigma$ that approximately maximizes the social welfare $\sum_{i \in N} v_i(A_i)$. Unless otherwise noted, we take the number of bidders $n \leq \poly(m)$. For notational convenience, we may also define $n$ to be the number of non-succinct bidders.

\paragraph{Communication Complexity and Protocols.}

We study the combinatorial auction problem in the general communication model. In particular, we consider the $n$-player \emph{number-in-hand blackboard communication model}, where the input known to each player $i \in N$ is her valuation function $v_i \in \calV$, and the messages sent are visible to every bidder.
The \emph{communication cost} of a protocol is the maximum expected number of bits that are written to the blackboard in the worst case, and the \emph{communication complexity} of a problem is the minimum communication cost of a protocol that solves it w.p.\ at least $2/3$.

\paragraph{Approximation.}

An allocation algorithm $\vb{A} : \calV^N \to \Sigma$ takes a valuation profile $\vb{v} = (v_1, \dots, v_n)$ as input and outputs a (possibly randomized) allocation $\vb{A}(\vb{v}) =(A_1(\vb{v}), \dots, A_n(\vb{v})) \in \Sigma$.

\begin{definition}[$\alpha$-Approximation]
    An allocation algorithm $\vb{A} : \calV^N \to \Sigma$ is an \emph{$\alpha$-approximation for $\calV$} if for every valuation profile $\vb{v} \in \calV^N$,
    \[
        \E\bigg[\sum_{i \in N} v_i(A_i(\vb{v}))\bigg] \quad \ge \quad \frac{1}{\alpha} \cdot \max_{\vb{A}^* \in \Sigma} \sum_{i \in N} v_i(A_i^*) \enspace .
    \]
\end{definition}

\begin{definition}[$\alpha$-Approximation]
    An allocation $\vb{A} \in \Sigma$ is an \emph{$\alpha$-approximation for $\vb{v}$} if
    \[
        \sum_{i \in N} v_i(A_i) \quad \ge \quad \frac{1}{\alpha} \cdot \max_{\vb{A}^* \in \Sigma} \sum_{i \in N} v_i(A_i^*) \enspace .
    \]
\end{definition}

\begin{remark} \label{remark:ApproximationEquivalence}
    Any $\alpha$-approximation algorithm $\ALG$ can be turned into an algorithm that outputs an $\alpha$-approximation w.p.\ $\Omega(1)$ by running $\ALG$ for $\ceil{\alpha}$ repetitions and outputting the best allocation. Hence, a $z$ communication lower bound for outputting an $\alpha$-approximation w.p.\ $\Omega(1)$ yields a $z/\alpha$ communication lower bound for $\alpha$-approximation algorithms.
\end{remark}

\paragraph{Valuation Classes.}

We mainly consider three canonical valuation classes.

\begin{definition}[Single-Minded]
    A valuation $v_i : 2^M \to \mathbb{R}_+$ is \emph{single-minded} if there exists a weight $w \ge 0$ and a set $T \subseteq M$ such that $v(S)= w\cdot \I[S \supseteq T]$ for all $S \subseteq M$. We denote the class of all single-minded valuation functions over $M$ by $\SM=\SM_M$.
\end{definition}

\begin{definition}[Subadditive]
    A valuation $v_i : 2^M \to \mathbb{R}_+$ is \emph{subadditive} if for all $S,T \subseteq M$, we have $v(S \cup T) \le v(S) + v(T)$. We denote the class of all subadditive valuation functions over $M$ by $\SA=\SA_M$.
\end{definition}

\begin{definition}[XOS]
    A valuation $v_i : 2^M \to \mathbb{R}_+$ is \emph{XOS} if there exists a collection of additive clauses $\calC \in \mathbb{R}^M_+$ such that $v(S)= \max_{C \in \calC} \sum_{i \in S} c_i$ for all $S \subseteq M$. We denote the class of all XOS valuation functions over $M$ by $\XOS=\XOS_M$.
\end{definition}

Note that $\XOS_M \subseteq \SA_M$. Combinatorial auctions with these classes are extensively studied, and we know communication-efficient protocols that give an asymptotically optimal $2$-approximation for \SA{}~\cite{Feige09} and asymptotically optimal $1/(1 - (1 - 1/n)^n)$-approximation for \XOS{}~\cite{DobzinskiNS10}.

\paragraph{Succinct Valuations.}

We call a valuation class $\calU$ \emph{succinct} if $\abs{\calU} \leq 2^{\poly(m)}$. In plain terms, a succinct valuation class is one whose member valuations can be ``represented'' in polynomial communication. Note that any succinct valuation class has a polynomial communication welfare maximizer.

Since representability is always relative to some communication protocol, the notion of ``succinct valuations'' is not a well-defined concept, though we will informally use the term ``succinct valuation'' and refer to \SC{} as the ``class of succinct valuations'' for convenience. Formally, for any valuation class $\calV$, we say that
\begin{itemize}[topsep=4pt]
    \item There exists an $\alpha$-approximation for \VSC{} in $z$ communication if for every succinct valuation class $\calU$, there exists an $\alpha$-approximation algorithm for $\calV\,\cup\,\calU$ in $z$ communication.

    \item Any $\alpha$-approximation for \VSC{} uses $z$ communication if for some succinct valuation class $\calU$, any $\alpha$-approximation algorithm for $\calV\,\cup\,\calU$ uses $z$ communication.
\end{itemize}

\section{Algorithms}
\label{sec:algo}

In this section, we design communication-efficient algorithms for \SASC{} and \XOSSC{}.

\begin{theorem} \label{thm:3-approx-alg-for-SASM}
    There exists a $(3-2/n)$-approximation for \SASC{} in $\poly(m,n)$ communication.
\end{theorem}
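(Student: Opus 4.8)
The plan is to first collapse the succinct bidders into one public bidder and then attack the resulting instance with the configuration LP. Since the succinct bidders can each describe their valuations in $\poly(m)$ bits, I would have them all write their valuations on the blackboard; this costs $\poly(m,n)$ communication. Once these are common knowledge, the succinct bidders can be replaced by a single \emph{public} monotone bidder with valuation $V(S) \coloneqq \max\bigl\{\sum_k v_{j_k}(S_k) : (S_k)_k \text{ is a partition of } S\bigr\}$, i.e.\ their welfare-optimal merge: any allocation for the instance $(v_1,\dots,v_n,V)$ converts into one of equal welfare for the original instance, and $\OPT$ is unchanged. It thus suffices to $(3-2/n)$-approximate the welfare of $n$ subadditive private bidders together with one arbitrary monotone \emph{public} bidder $V$.

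Next I would solve the configuration LP over $v_1,\dots,v_n,V$ -- variables $x_{i,S}$ (subadditive bidder $i$ gets bundle $S$) and $y_S$ ($V$ gets $S$), bidder constraints $\sum_S x_{i,S}\le 1$ and $\sum_S y_S\le 1$, and item packing constraints $\sum_{S\ni e}(y_S+\sum_i x_{i,S})\le 1$. Its optimum $L$ satisfies $L\ge\OPT$ and, as in~\cite{Feige09}, can be computed to within a $(1+o(1))$ factor in $\poly(m,n)$ communication by running the ellipsoid method on the dual with demand queries to the $v_i$ and the (publicly computable) demand oracle of $V$ as separation oracle. Write $L=L_V+L_{\SA}$ for the contributions of $V$ and of the subadditive bidders in the optimal solution. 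The rounding then serves $V$ \emph{first and contention-free}: sample $T\sim\{y_S\}_S$, award $T$ to $V$ (so it realizes its full fractional value $\E[V(T)]=L_V$), and only afterwards run the Feige-style randomized rounding for the $n$ subadditive bidders on $M\setminus T$, processing them in a uniformly random order; one also takes the better of this allocation and the trivial ``everything to $V$'' allocation. The point is that an item $e$ that $V$ deletes with probability $p_e\coloneqq\sum_{S\ni e}y_S$ is wanted by the subadditive bidders with total fractional mass at most $1-p_e$ by the packing constraint, so the extra contention $V$ creates is compensated by reduced contention among the subadditive bidders.

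The crux -- and the step I expect to be the main obstacle -- is the rounding analysis showing the expected welfare is at least $\tfrac{1}{3-2/n}\,L$. Two points need care. First, a subadditive bidder's \emph{large} items -- those worth a constant fraction of a sampled bundle on their own -- cannot be union-bounded and must be handled by a separate matching argument in the style of~\cite{Feige09,EzraFNTW19}, now complicated by a prioritized public bidder that can delete them. Second, landing on the exact ratio $3-2/n$ (rather than a weaker constant) should come from combining the packing trade-off above with an $n$-aware accounting of the random-order rounding among the $n$ subadditive bidders -- replacing the crude factor $\tfrac{1}{2}$ by the familiar $1-(1-\tfrac{1}{n})^n$-type savings -- together with an appeal to LP optimality (complementary slackness) to preclude solutions that dump the subadditive bidders' fractional value onto items $V$ almost surely grabs. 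As consistency checks the endpoints behave correctly: at $n=1$ the lone subadditive bidder, having learned $V$, simply announces an exact optimum ($3-2/1=1$), and at $n=2$ the guarantee degrades to the optimal $2$-approximation for two subadditive bidders, consistent with the fact that additional succinct bidders cannot hurt in that regime.
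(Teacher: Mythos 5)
Your proposal takes a genuinely different and considerably more complex route than the paper's, and the step you yourself flag as ``the crux'' and ``the main obstacle'' is exactly where the argument is missing. You set up a configuration LP and a rounding scheme that serves a merged public bidder $V$ first, then runs a Feige-style random rounding for the subadditive bidders; but the analysis that this achieves the ratio $3-2/n$ is only sketched via three unverified ideas (a modified matching argument for large items under a prioritizing public bidder, an ``$n$-aware accounting'' that would replace $\tfrac12$ by a $1-(1-1/n)^n$-type factor, and an appeal to complementary slackness). The middle of these is particularly suspect: the $1-(1-1/n)^n$ savings is an XOS phenomenon (it is the selectability of an offline CRS and relies on \Cref{prop:XOSApprox}), and Feige's subadditive rounding does \emph{not} recover an $n$-dependent factor beyond $\tfrac12$. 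So even granting the other pieces, it is unclear your rounding would land on $3-2/n$ rather than a weaker constant. As written, this is a plan, not a proof.

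The paper's proof is far simpler and sidesteps LP rounding entirely. It runs two candidate protocols and outputs the better allocation: (a) for each subadditive bidder $i$, have the succinct bidders broadcast their valuations and let bidder $i$ compute the \emph{exact} optimum $\OPT_i$ over the succinct bidders together with herself; (b) run Feige's $2$-approximation on the subadditive bidders alone, ignoring the succinct bidders. Writing $W_S$ and $W_A$ for the succinct and subadditive shares of $\OPT = W_S + W_A$, strategy (a) gives $\max_i \OPT_i \ge W_S + W_A/n$ (since some subadditive bidder has at least $W_A/n$ in the optimum), and strategy (b) gives at least $W_A/2$. Balancing the two cases at $W_S = \tfrac{n-2}{3n-2}\OPT$ yields $\tfrac{n}{3n-2}\OPT$ in both, i.e.\ a $(3-2/n)$-approximation. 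Note two concrete differences from your plan: the $n$-dependence comes purely from the pigeonhole $W_A/n$ (not from any $n$-aware rounding), and the ``everything to $V$'' fallback in your scheme is strictly weaker than the paper's $\max_i\OPT_i$, which also captures a $1/n$ fraction of the subadditive value. Your LP approach is closer in spirit to the paper's $2$-approximation for \XOSSC{}, where the ``serve the public bidder first'' idea does appear, realized via correlated OCRSs — but that analysis leans on \Cref{prop:XOSApprox} and does not port directly to subadditive valuations.
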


\begin{theorem} \label{thm:2-approx-alg-for-XOSSM}
    There exists a $2$-approximation for \XOSSC{} in $\poly(m,n)$ communication.
\end{theorem}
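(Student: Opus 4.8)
The plan is to follow the template of the optimal XOS algorithm --- solve the configuration LP and round it --- but to replace the offline contention resolution scheme (which yields the $e/(e-1)$ bound) by an \emph{online} one, exploiting that the optimal single-item OCRS is $1/2$-selectable \emph{and}, crucially, can be coupled across items on its first bidder. The first step is to get rid of the succinct bidders: have each of them write their entire valuation on the blackboard, costing $\poly(m)$ bits apiece and $\poly(m,n)$ in total, and replace all of them by a single \emph{surrogate} bidder $0$ whose valuation $v_0(S)$ is the maximum welfare obtainable by partitioning $S$ among the original succinct bidders. Then $v_0$ is monotone and now common knowledge, welfare maximization for the new instance (the $n$ XOS bidders together with $v_0$) coincides with that of the original \XOSSC{} instance, and a demand query to $v_0$ at any prices can be answered with no further communication. (In particular the algorithm is really an algorithm for XOS valuations plus one arbitrary monotone bidder.)

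Next, solve the configuration LP with variables $x_{i,S} \ge 0$ for $i \in \{0,\dots,n\}$ and $S \subseteq M$, maximizing $\sum_{i,S} x_{i,S} v_i(S)$ subject to $\sum_S x_{i,S} \le 1$ for each $i$ and $\sum_i \sum_{S \ni j} x_{i,S} \le 1$ for each item $j$; its optimum upper-bounds $\OPT$. The separation oracle for the dual is exactly a demand query, which XOS bidders and $v_0$ answer in $\poly(m)$ communication, so the ellipsoid method solves the LP to optimality with $\poly(m,n)$ communication, returning a sparse solution. For each $i$ this gives a distribution $\calD_i$ over subsets of $M$ (with mass $1 - \sum_S x_{i,S}$ on $\emptyset$) satisfying $\E_{S_i \sim \calD_i}[v_i(S_i)] = \sum_S x_{i,S} v_i(S)$ and $\sum_i \Pr_{S_i \sim \calD_i}[j \in S_i] \le 1$ for every $j$.

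For the rounding, sample $S_i \sim \calD_i$ independently across bidders, write $y_{ij} = \Pr[j \in S_i]$, and for each item $j$ run the greedy single-item OCRS that processes bidders in the order $0, 1, \dots, n$ and, when bidder $i$ claims $j$ (i.e.\ $j \in S_i$) and $j$ is still unassigned, assigns $j$ to $i$ with probability $\tfrac{1/2}{1 - \tfrac12 \sum_{i' < i} y_{i'j}}$; a one-line induction shows this probability lies in $[0,1]$ (using $\sum_{i'<i} y_{i'j} \le 1$) and that each claimant of $j$ receives it with probability exactly $1/2$. Because bidder $0$ is processed first, its rule is simply ``assign $j$ to $0$ iff $j \in S_0$ and $U_j \le 1/2$'' for a uniform $U_j$; replace all of these by a single shared $U$. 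This leaves every item's marginal analysis unchanged but now makes bidder $0$ receive the \emph{entire} set $S_0$ with probability $1/2$. Hence $\E[v_0(\text{alloc})] = \tfrac12 \E[v_0(S_0)]$ by monotonicity alone, while for an XOS bidder $i \ge 1$, conditioning on $S_i$ and picking an additive clause $a$ with $\sum_{j \in S_i} a_j = v_i(S_i)$ gives $\E[v_i(\text{alloc}) \mid S_i] \ge \sum_{j \in S_i} a_j \Pr[i \text{ gets } j \mid S_i] \ge \tfrac12 v_i(S_i)$. Summing over all bidders, $\E[\text{welfare}] \ge \tfrac12 \sum_i \E[v_i(S_i)] = \tfrac12 \cdot (\text{LP optimum}) \ge \tfrac12 \OPT$, a $2$-approximation, and all communication happened while learning the succinct valuations and solving the LP.

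The subtle part, and the main obstacle, is the rounding step: one contention-resolution rule must simultaneously be $1/2$-selectable for the XOS bidders \emph{and} be couplable across items so that the surrogate's whole sampled set arrives together --- without the latter, an arbitrary monotone $v_0$ (think single-minded) is worthless on a strict subset of its target set. An offline CRS, even the $(1-1/e)$-selectable one, cannot be coupled this way without destroying selectability for later bidders. It is precisely the first-come-first-served structure of the OCRS, together with the fact that the first bidder can be accepted with probability \emph{exactly} $1/2$ via a single shared coin while every later bidder still retains its $1/2$ guarantee, that reconciles the two demands; verifying that the coupling perturbs no per-item computation and that the accept probabilities are valid are then the only routine checks.
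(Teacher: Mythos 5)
Your proof is correct and takes essentially the same approach as the paper: build the surrogate bidder from the publicly announced succinct valuations, solve the configuration LP for the surrogate plus the XOS bidders, and round with a rank-$1$ OCRS whose first-bidder acceptance event is coupled across items via a single shared coin, so the surrogate gets all of $S_0$ or nothing while each XOS bidder retains a per-item marginal of $1/2$. The only cosmetic difference is that you inline the explicit greedy $1/2$-OCRS and the XOS clause-picking argument, whereas the paper cites these (the $1/2$-OCRS of Alaei/Lee--Singla and Feige's lemma on XOS values of correlated random sets) as black boxes.
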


\subsection{\texorpdfstring{$(3-2/n)$}{(3-2/n)}-approximation for \texorpdfstring{\SASC{}}{SA+Succ}}

The $(3-2/n)$-approximation for \SASC{} is straightforward: since we can optimize the welfare of just the succinct bidders, and we can $2$-approximate the welfare from just the subadditive bidders, we can $3$-approximate the overall welfare by just choosing the better of the two approaches.

\begin{proof}[Proof of \Cref{thm:3-approx-alg-for-SASM}]
For any subadditive bidder $i$, we can find the optimal welfare $\OPT_i$ among all the succinct bidders and subadditive bidder $i$: have the succinct bidders send their valuations to the subadditive bidder, who now knows all the relevant valuations and can compute $\OPT_i$.

If the succinct bidders constitute at least a $(n-2)/(3n-2)$ fraction of the welfare in the optimal allocation, then some $\OPT_i$ achieves at least $(n-2)/(3n-2) + (1 - (n-2)/(3n-2))/n = n/(3n-2)$ fraction of the optimal welfare. Otherwise, the subadditive bidders constitute at least $2n/(3n-2)$ fraction of the welfare in the optimal allocation, of which we can get half using~\Cref{thm:ApproxForSAAndXOS}~\cite{Feige09}. Thus, this algorithm guarantees a $(3n-2)/n = (3 - 2/n)$-approximation.

The algorithm only has the succinct bidders send their valuations and the subadditive bidders run the $2$-approximation algorithm of~\cite{Feige09}, both of which use polynomial communication.
\end{proof}

\subsection{\texorpdfstring{$2$}{2}-approximation for \texorpdfstring{\XOSSC{}}{XOS+Succ}}

To obtain the $2$-approximation for \XOSSC{} valuations, we start by grouping the succinct bidders into a single surrogate bidder with a general monotone valuation, solving the configuration LP for this bidder and the XOS bidders to obtain a fractional solution.
Then, we make use of \emph{online contention resolution schemes} (OCRS) for rank $1$ matroids to round the fractional solution in a correlated way.
The correlation ensures the succinct bidders will always get either the entirety of their desired bundle or nothing, without taking away too much from the XOS bidders.

Let $N_{\SC}$ be the set of succinct bidders and $N_{\XOS}$ be the set of XOS bidders (so $N_{\XOS} \sqcup N_{\SC} = N$).
We first create a surrogate bidder $0$ with valuation $v_0$ defined by
\begin{equation}\label{eq:surrogate-valuation}
  v_0(S) \; = \; \max_{\vb{A} \in \Sigma(S)} \sum_{i \in N_\SM} v_i(A_i) \qquad \forall S \subseteq M \enspace .  
\end{equation}
Here, $\Sigma(S)$ is the collection of all feasible allocations $\vb{A}$ where $A_i \in S$ for all $i \in N_{\SM}$, and $A_i \cap A_j = \emptyset$ for all $i \ne j$.
We want to solve the following configuration LP for $N'=N_\XOS \cup \{0\}$ (i.e., the XOS bidders and the surrogate bidder $0$):
\begin{equation} \label{eq:configuration-lp}
\begin{aligned}
\max\;& \sum_{i \in N'}\sum_{S\subseteq M} v_i(S)\,x_{i,S}\\
\text{s.t.}\;& \sum_{i \in N'}\sum_{S\ni j} x_{i,S}\le 1 &&\forall j\in M \, , \\
& \sum_{S\subseteq M} x_{i,S}\le 1 &&\forall i\in N' \, , \\
& x_{i,S}\ge 0 &&\forall i\in N'\,,\;S\subseteq M \, .
\end{aligned}
\end{equation}
Even though the LP has exponentially many variables, classical results \cite{NisanS06, BlumrosenN05a} show that it can be solved in $\poly(m,n)$ communication by solving its dual using demand queries as a separation oracle.
The most straightforward way to round the fractional solution is to choose a set $S_i$ w.p. $x_{i,S}$ for each bidder $i \in N'$, but this can result in an infeasible allocation.

Instead, our algorithm will make use of the $1/2$-OCRS for rank 1 matroids~\cite{FeldmanSZ16, Alaei11, LeeS18} to round the fractional solution and obtain an $1/2$-approximation.

\begin{definition}[$\alpha$-OCRS~\cite{FeldmanSZ16} for Rank $1$ Matroids]
    Let $\vb{p} \in [0, 1]^k$ be such that $\norm{\vb{p}}_1 \leq 1$, and let $X_1, X_2, \ldots, X_k$ be independent Bernoulli random variables such that $\Pr[X_i = 1] = p_i$ for all $i \in [k]$. 
    An \emph{online contention resolution scheme} (OCRS) is an algorithm which in each time step $i \in [k]$ sees the value of $X_i$. If $X_i = 0$, nothing happens. If $X_i = 1$, the algorithm can choose to accept or reject $X_i$, subject to only accepting at most once. An $\alpha$-OCRS is an OCRS which accepts each $X_i$ w.p.\ at least $\alpha$, conditioned on $X_i = 1$.
\end{definition}

\begin{proposition}[\cite{Alaei11, LeeS18}]
    There exists a $1/2$-OCRS for rank 1 matroids. Additionally, the OCRS accepts the first element $X_1$ w.p.\ $1/2$ whenever $X_1 = 1$.
\end{proposition}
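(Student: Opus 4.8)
The plan is to exhibit an explicit $1/2$-OCRS for rank $1$ matroids with the additional property that the first element is accepted with probability exactly $1/2$ whenever it is active, and then to verify the two required guarantees: that no two elements are ever accepted, and that every element is accepted with probability at least $1/2$ conditioned on being active. Since a rank $1$ matroid only permits a single element to be selected, ``accepting subject to a rank $1$ matroid'' is the same as ``accepting at most once,'' so the only content is the uniform $1/2$-selectability.

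First I would define the scheme by a simple threshold/greedy rule. Given the activation probabilities $p_1, \dots, p_k$ with $\sum_i p_i \le 1$, process the elements in order $i = 1, 2, \dots, k$; when $X_i = 1$ is revealed, accept $i$ if and only if nothing has been accepted yet \emph{and} an independent fair coin (or more precisely, a random choice calibrated so that the marginal acceptance probability works out) comes up heads. The cleanest version is the following: accept element $i$ (given $X_i = 1$ and no prior acceptance) with probability $q_i$, where $q_1 = 1/2$ and, inductively, $q_i$ is chosen so that $\Pr[\text{no acceptance among } 1, \dots, i-1] \cdot q_i = 1/2$ whenever this is feasible, i.e. whenever $\Pr[\text{no acceptance among } 1,\dots,i-1] \ge 1/2$. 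The key lemma to establish is that $\Pr[\text{no element among } 1,\dots,i-1 \text{ is accepted}] \ge 1/2$ for every $i$, which guarantees $q_i \le 1$ is a valid probability; this follows because the total accepted probability $\sum_{j < i}\Pr[j\text{ accepted}] = \sum_{j<i} p_j \cdot q_j \le \sum_{j<i} p_j \cdot (1/2 \big/ \Pr[\text{no acceptance before }j]) $, but more directly, each element contributes exactly mass $p_j \cdot \tfrac{1/2}{\Pr[\text{no acc.\ before }j]}$ to the ``accepted'' event, and telescoping / a short induction shows the survival probability stays $\ge 1 - \tfrac12 \sum_{j<i} p_j \ge 1 - \tfrac12 = \tfrac12$.

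The main obstacle — really the only delicate point — is controlling $\Pr[\text{no acceptance among the first } i-1 \text{ elements}]$ to show it never drops below $1/2$, since that is exactly what is needed both for $q_i$ to be a legitimate probability and for the final $1/2$-selectability bound. I would handle this by an induction on $i$: letting $r_i \coloneqq \Pr[\text{no acceptance among } 1, \dots, i-1]$, we have $r_1 = 1$ and $r_{i+1} = r_i - p_i q_i r_i / r_i$... more carefully, the probability element $i$ is accepted is $p_i q_i r_i$, which by construction equals $p_i/2$, so $r_{i+1} = r_i - p_i/2$, giving $r_{i+1} = 1 - \tfrac12\sum_{j \le i} p_j \ge 1 - \tfrac12 = \tfrac12$ using $\sum_j p_j \le 1$. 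This makes the whole argument transparent: the marginal acceptance probability of every active element $i$ is $q_i r_i = 1/2$, and in particular $q_1 r_1 = 1/2$, yielding both claims. (For the degenerate edge case where some $r_i$ would force $q_i > 1$ — which the computation above shows cannot happen under $\sum p_j \le 1$ — one simply sets $q_i = 1$; this is never triggered.) Finally, I would remark that this matches the known $1/2$ bound and is tight, citing \cite{Alaei11, LeeS18}, and note that the ``first element accepted w.p.\ $1/2$'' property is immediate from $q_1 r_1 = q_1 = 1/2$, which is precisely the feature the \XOSSC{} algorithm exploits to route the surrogate succinct bidder first.
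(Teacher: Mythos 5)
The paper does not prove this proposition; it simply cites \cite{Alaei11, LeeS18}, so there is no paper proof to compare against. Your construction is a correct, self-contained proof of the cited result, and it is in fact the standard calibrated-threshold scheme from those references: accept element $i$, conditioned on being active and on no prior acceptance, with probability $q_i = \tfrac{1/2}{r_i}$ where $r_i$ is the survival probability, and show inductively that $r_i = 1 - \tfrac12\sum_{j<i}p_j \ge \tfrac12$ so that $q_i \le 1$ and every active element is accepted with probability exactly $r_i q_i = 1/2$. Since $r_1 = 1$ forces $q_1 = 1/2$, the first-element property is immediate, which is exactly the feature the paper's rounding algorithm exploits. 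So the argument is sound and reaches the right conclusion.

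One note on presentation: a few of the intermediate expressions in your second paragraph are wrong as written and only come out right because you redo the calculation carefully in the third. Specifically, the total accepted mass is $\sum_{j<i}\Pr[j\text{ accepted}] = \sum_{j<i} p_j q_j r_j = \sum_{j<i} p_j/2$, not $\sum_{j<i} p_j q_j$; likewise ``each element contributes exactly mass $p_j\cdot\tfrac{1/2}{r_j}$'' drops the factor of $r_j$ (the correct contribution is $p_j q_j r_j = p_j/2$); and the recursion should read $r_{i+1} = r_i - p_i q_i r_i$, not $r_{i+1} = r_i - p_i q_i r_i/r_i$. You catch all of these yourself when you write ``the probability element $i$ is accepted is $p_i q_i r_i$, which by construction equals $p_i/2$,'' so the final induction $r_{i+1} = r_i - p_i/2 \ge 1/2$ and the selectability bound are both correct, but the earlier formulas should be cleaned up to match.
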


Now we are ready to formally state our algorithm:
\begin{enumerate}[topsep=4pt]
    \item Let $N_\XOS$ be the set of XOS bidders and $N_\SC$ be the set of succinct bidders. Create a surrogate bidder~$0$ with valuation function defined in~\eqref{eq:surrogate-valuation}.
    
    \item Solve the configuration LP~\eqref{eq:configuration-lp} for $N'=N_\XOS \cup \{0\}$ to obtain an optimal fractional solution $\{x_{i,S}^*\}_{i\in N',\,S\subseteq M}$.

    \item For each bidder $i\in N'$, independently draw one bundle $S_i$ according to the distribution over subsets $S \subseteq M$ where $\Pr[S_i=S]=x_{i,S}^*$.

    \item For each item $j \in M$, run the $1/2$-OCRS of~\cite{Alaei11, LeeS18} on $\{X_i\}_{i \in N'}$, where $X_i = 1$ if $j \in S_i$ and $X_i = 0$ otherwise. The OCRSs are correlated so that the decision to accept the first element $X_0$ (if $X_0 = 1$) is the same across all $m$ instances, which is possible because $X_0$ (if $X_0 = 1$) is selected w.p. exactly $1/2$.

    \item Define the sets $T_i \subseteq S_i$ for $i \in N'$ to be exactly the items $j$ for which the OCRS on item $j$ accepted $X_i$. We output the allocation $\vb{A} \in \Sigma$, where $A_i = T_i$ for all $i \in N_\XOS$, and where the items in $T_0$ are allocated among the succinct bidders in a way that maximizes their welfare.
\end{enumerate}

Our analysis of the approximation ratio relies on the following property of XOS functions over a (possibly correlated) random set.

\begin{proposition}[\cite{Feige09}] \label{prop:XOSApprox}
    Let $v : 2^M \to \mathbb{R}_+$ be XOS. For any $T \subseteq M$ and $p \in [0, 1]$, let $S$ be a random set where for each $j \in T$, $\Pr[j \in S] \geq p$, not necessarily independently. Then $\E[v(S)] \geq p \cdot v(T)$.
\end{proposition}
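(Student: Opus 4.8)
The plan is to exploit the defining structure of XOS valuations: $v$ is a pointwise maximum of additive functions. First I would fix a \emph{supporting clause} for the target set $T$: by the XOS definition, there is an additive clause $C \in \mathbb{R}^M_+$ in the collection $\calC$ defining $v$ such that $v(T) = \sum_{j \in T} c_j$. The point of choosing the clause that is tight \emph{at $T$} (rather than any fixed clause) is that the same clause simultaneously lower-bounds $v$ everywhere else: $v(S) = \max_{C' \in \calC} \sum_{j \in S} c'_j \geq \sum_{j \in S} c_j \geq \sum_{j \in S \cap T} c_j$ for every $S \subseteq M$, where the last step uses nonnegativity of the coefficients to discard items outside $T$.

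Next I would take expectations over the random set $S$ and apply linearity: $\E[v(S)] \geq \E\big[\sum_{j \in S \cap T} c_j\big] = \sum_{j \in T} c_j \cdot \Pr[j \in S]$. This is exactly the point where the (possible) correlations among the events $\{j \in S\}_{j \in T}$ become irrelevant — linearity of expectation depends only on the marginals $\Pr[j \in S]$, not on their joint distribution, and these marginals are precisely what the hypothesis controls.

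Finally, I would substitute the marginal lower bound $\Pr[j \in S] \geq p$ for each $j \in T$; since every $c_j \geq 0$, this yields $\E[v(S)] \geq p \sum_{j \in T} c_j = p \cdot v(T)$, as claimed.

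I do not anticipate any real obstacle here; the only idea with any content is the standard one of selecting the clause that certifies $v(T)$ so that it doubles as a uniform lower bound for the random $v(S)$. Nonnegativity of the clause coefficients is used twice (to drop the items outside $T$, and to preserve the inequality after weighting by $\Pr[j \in S] \ge p$), while monotonicity of $v$ is not needed at all.
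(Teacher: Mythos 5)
Your proof is correct and is exactly the standard argument: pin down a clause certifying $v(T)$, use it as a uniform additive lower bound for $v(S)$, and apply linearity of expectation to the marginals. The paper simply cites \cite{Feige09} for this proposition without reproducing a proof, and what you wrote is the argument that reference uses.
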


\begin{proof}[Proof of \Cref{thm:2-approx-alg-for-XOSSM}]
Observe that for every allocation $\vb{A} \in \Sigma$, the LP~\eqref{eq:configuration-lp} solution
\[
    x_{i,S} \quad = \quad \begin{cases}
        1 & i \in N_\XOS, S = A_i \\
        1 & i = 0, S = \bigcup_{i' \in N_{\SC}} A_{i'} \\
        0 & o.w.
    \end{cases}
\]
is feasible and has at least as much value (more value, if the succinct bidders are allocated suboptimally in $\vb{A}$), so the value $\OPT$ of the LP upper bounds the optimal welfare.

Let $\OPT_i$ be the value that bidder $i \in N'$ contributes to $\OPT$, i.e., $\OPT_i = \sum_{S \subseteq M} x_{i,S}^* v_i(S)$.

\begin{lemma}
    For all $i \in N'$, $\E[v_i(T_i)] \geq \OPT_i/2$.
\end{lemma}
\begin{proof}
$\E[v_0(T_0)] = \OPT_0/2$ because we have correlated the $1/2$-OCRS for each item to set $T_0 = S_0$ w.p.\ exactly $1/2$. Additionally, the condition $\norm{p}_1 \leq 1$ of the $1/2$-OCRS for rank 1 matroid is satisfied due to the item constraint in the configuration LP, so when XOS bidder $i \in N_\XOS$ draws the bundle $S_i$ of each XOS bidder $i \in N_\XOS$, they receive each item $j \in S_i$ w.p.\ at least $1/2$ by the guarantee of the $1/2$-OCRS. Therefore, by \Cref{prop:XOSApprox}, $\E[v_i(T_i)] \geq \OPT_i/2$ for all $i \in N_\XOS$.
\end{proof}

The allocation $\vb{A} \in \Sigma$ output by the algorithm has welfare at least $\sum_{i \in N'} v_i(T_i)$ because each bidder $i \in N_\XOS$ receives $T_i$ exactly, and the succinct bidders are given the allocation of the items in $T_0$ which gives welfare precisely $v_0(T_0)$. Thus, the algorithm gives a $2$-approximation for \XOSSC{}.

Finally, observe that the algorithm only requires the bidders to solve the configuration LP, after which a designated bidder can run the rounding procedure by herself. Since $v_0$ can be specified in polynomial communication (only need each succinct bidder to send their valuation), any query to $v_0$ can be simulated in polynomial communication. Thus, the configuration LP can be solved (and hence the algorithm can be run) in polynomial communication.
\end{proof}

\section{Hardness of Approximation}
\label{sec:hardness}

Let $\calV$ be a valuation class closed under addition and point-wise maximum (i.e., for $f, g \in \calV$, $f + g \in \calV$ and $\max\{f, g\} \in \calV$). Note that \XOS{} is the minimal non-trivial valuation class satisfying these properties.

We now define an important property and state the main theorem.

\begin{definition}[name=Scarce,restate=Scarce]
    An allocation is \emph{$t$-scarce} if all but at most $t$ bidders are allocated $\emptyset$.
\end{definition}

\begin{definition}[Strong Inapproximability (Informal)]
    Let $\alpha : [n] \to \mathbb{R}_+$ be a function. A valuation class is \emph{strongly $\alpha(\cdot)$-inapproximable in $z$ communication} if for all $t \in [n]$, beating an $(\alpha(t) \cdot n/t)$-approximation with a $t$-scarce allocation requires $z$ communication.
\end{definition}

\begin{theorem} \label{thm:HardnessOfApproximation}
    Let $\calV$ be a valuation class closed under addition and point-wise maximum, and let there be $n$ $\calV$-type bidders and $c$ single-minded bidders. For any $\lambda \geq 1$ let $\alpha_\lambda(t) \coloneqq \lambda(1 - 1/(t+1))$. Fix any $a \leq n$. Let $r$ be such that in an instance with $r$ items, $\calV$ is strongly $\alpha_\lambda(\cdot)$-inapproximable in $2^{\Omega(\min\{c, m/r\}/a)}$ communication. Then any $(1 + \lambda)(1 - 4a^{-1/3})$-approximation for \VSM{} uses $2^{\Omega(\min\{c, m/r\}/a)}$ communication.
\end{theorem}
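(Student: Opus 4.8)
The plan is to build a hard instance for \VSM{} by stacking together the two ingredients advertised in the technical overview: the ``XOSSM-style'' column gadget (which leverages the fact that $\calV$ is closed under addition and point-wise maximum so that a bidder's value can accumulate in an XOS-like manner across columns), and, inside each column, a strongly-inapproximable $\calV$ instance on $r$ items. Concretely, I would partition the $m$ items into $c$ columns of $m/c$ items each, and within each column further partition into $(m/c)/r$ blocks of $r$ items, each block hosting an independent copy of the hard $\calV$-instance witnessing strong $\alpha_\lambda(\cdot)$-inapproximability. Single-minded bidder $j$ desires exactly all items of column $j$, weighted so its maximum contribution is $1$. Each $\calV$-type bidder is assigned (via a random balanced design) a set of $\varepsilon c$ columns; on those columns it plays its role in the corresponding hard $\calV$-instances, and we take the point-wise max across its assigned columns (so its total value is $\lambda$ when it gets the ``good'' set in any one of its columns), which is legal since $\calV$ is closed under $\max$ and $+$. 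Choosing the weights so the $\calV$-bidders can collectively earn $\lambda$ and the single-minded bidders $1$, the optimum is $\approx 1+\lambda$ whenever there is a common column-set of size $\varepsilon c$ that is ``good'' for all $\calV$-bidders. I would plant such a common set.

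Next I would argue the lower bound on what an efficient algorithm can do. There are two barriers, run in tandem via a Yes/No distribution over instances. First, as in the pure \XOSSM{} argument, a polynomial-communication algorithm cannot locate the planted common column-set, because each $\calV$-bidder's collection of ``good'' column-sets is exponentially large and essentially uncorrelated across bidders; so any allocation giving the $\calV$-bidders near-$\lambda$ welfare must hand them $\Omega(c)$ columns' worth of items, costing all but $o(1)$ of the single-minded welfare. This is where the $2^{\Omega(c/a)}$ term enters. Second, even within any single column that the algorithm does devote to $\calV$-bidders, it cannot beat an $\alpha_\lambda(t)=\lambda(1-1/(t+1))$-approximation on a $t$-scarce allocation, by the assumed strong inapproximability of $\calV$ on $r$ items — and crucially, because each $\calV$-bidder lives in only $\varepsilon c$ columns, the relevant sub-instance inside a column involves only $\approx \varepsilon n$ of the bidders, which is exactly why the \emph{strong} (sub-instance) inapproximability, not merely ordinary $\lambda$-inapproximability, is needed. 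Summing the per-column bound $\lambda(1-1/(t_k+1))\cdot(\text{scale})$ over columns, with $\sum_k t_k$ controlled by how many $\calV$-bidders can receive nonempty sets, and using concavity of $\alpha_\lambda$ together with the fact that spreading bidders thinly only helps the algorithm by a bounded amount, the $\calV$-welfare any efficient algorithm attains is at most $\approx \lambda\cdot(1-\Omega(a^{-1/3}))$ when it also wants to retain the single-minded welfare $1$; trading these off gives total attainable welfare $\le (1+\lambda)(1-4a^{-1/3})$ up to lower-order terms, against optimum $\approx 1+\lambda$.

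The technical heart, and the step I expect to be the main obstacle, is the careful bookkeeping that combines the two barriers simultaneously: I must show that an algorithm cannot ``cheat'' by taking a large column-set (paying for barrier 1) in exchange for trivially solving the within-column $\calV$-problems, nor by taking few columns but hoping the per-column sub-instances became easy because only $\varepsilon n$ bidders are present. The parameter $a$ is the knob governing this tradeoff: $a$ controls both how thinly the $\calV$-bidders are spread (hence the communication budget $2^{\Omega(\cdot/a)}$ the strong-inapproximability hypothesis must tolerate) and the slack $4a^{-1/3}$ in the final ratio. I would set $\varepsilon \approx a^{-1/3}$ (or a similar power), verify via a Chernoff/probabilistic-construction argument that a random assignment of $\varepsilon c$ columns to each $\calV$-bidder simultaneously (i) admits a planted common set, (ii) places only $O(\varepsilon n)$ bidders in each column, and (iii) is ``spread out'' enough that the per-column sub-instances are genuine $t$-bidder sub-instances of the strongly-inapproximable family; then feed the resulting structure into a reduction from the strong-inapproximability communication lower bound (combined with the indexing/fooling-set style argument for the column-finding barrier) via \Cref{remark:ApproximationEquivalence}. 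The remaining work is arithmetic: checking that the chosen $\varepsilon$, the $O(\log n)$-type losses from the probabilistic construction, and the concavity slack all fit inside the stated $(1+\lambda)(1-4a^{-1/3})$ bound and the $2^{\Omega(\min\{c,m/r\}/a)}$ communication budget.
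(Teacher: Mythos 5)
Your outline correctly identifies the two barriers (locating the planted column set, and in-column strong inapproximability), but as written the construction has a gap that is not merely a bookkeeping issue. You describe each $\calV$-bidder as being ``assigned \ldots a set of $\varepsilon c$ columns'' and as taking a ``point-wise max across its assigned columns''; taken literally, this gives each bidder a \emph{single, fixed} column set (so there is nothing for a protocol to fail to locate), and a max-across-columns (rather than a sum within the selected column set, max over the collection), which destroys the tension with the single-minded bidders. What is actually needed is that each bidder $i$ holds an exponentially large \emph{collection} of column sets encoded by an input $X_i\subseteq[z]$ to a communication problem (the paper uses \textsc{PromiseDisjointness}), with $v_i(S)=\lambda\max_{\ell\in X_i}\sum_{j\in S_\ell}f_i(S\cap(R\times\{j\}))$; the $S_1,\dots,S_z\subseteq C$ are \emph{fixed} sets satisfying a deterministic overlap bound for every choice of $n$ indices and every column set $S$ (the paper's Lemma~4.7), precisely because the $X_i$'s are adversarial, not random, so your per-column ``Chernoff on a random assignment'' claim (ii) does not apply. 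Relatedly, the ``$O(\varepsilon n)$ bidders per column'' picture is a heuristic: the actual argument never controls any single column, only the aggregate $\sum_i|J\cap S_{\ell_i}|$.

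The other part you explicitly defer --- the bookkeeping combining the two barriers --- is also where your plan diverges from what works. Strong $\alpha_\lambda(\cdot)$-inapproximability with $\alpha_\lambda(t)=\lambda(1-1/(t+1))$ is tailored so that in a $0$-instance of \textsc{GapWelfare} the per-column bound is $\OPT(T)\le|T|+1$, a \emph{linear} function of $|T|$; this is what lets one sum across columns and charge the ``$+1$'' to an $\varepsilon^2\varepsilon n c$ lower-order term, giving a total $\calV$-welfare of $(\delta+3\varepsilon+\varepsilon^2)\varepsilon n c$ against the $(1+\lambda-\varepsilon)\varepsilon n c$ optimum and hence the $(1+\lambda)(1-4\varepsilon)$ ratio. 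Concavity of $\alpha_\lambda$ does not give you this; you would also need to be explicit that the hardness is the simultaneous solution of two promise problems (\textsc{PromiseDisjointness} on $[z]$ and $(\calV,\alpha_\lambda(\cdot))$-\textsc{GapWelfare} on $r$ items) on the same set of players, which requires a combining lemma (the paper's Proposition~4.12) rather than the ``indexing/fooling-set style argument'' you gesture at.
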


We first outline the main ideas of the proof. For simplicity, let $a = n$ (otherwise, discard the additional bidders) and let $m = rc$ (otherwise, discard the additional bidders/items). The hard instance for \VSM{} is constructed as follows:
\begin{itemize}[topsep=4pt]
    \item Arrange the items into $c$ columns of $r$ items $R \times C$, and let $\varepsilon > 0$ be small.

    \item Each single-minded bidder $j$ gets value $\varepsilon$ for all the items in column $j$.

    \item The $\calV$-type bidders form (the same) arbitrary instance with welfare $\lambda$ on the items within each of the columns.

    \item Each $\calV$-type bidder $i$ holds an exponentially-sized collection of column-sets $\calS_i \subseteq C$, where every $S \in \calS_i$ is of size $\varepsilon c$. Across columns, each $\calV$-type bidder $i$ adds up their values in the column-set $S \in \calS_i$ which maximizes their value (this is where we need $\calV$ to be closed under addition and point-wise maximum).

    \item There exists a single set $S^* \in \calS_1 \cap \dots \cap \calS_n$.
\end{itemize}

The optimal allocation gives columns $S^*$ to the $\calV$-type bidders and the remaining columns to the single-minded bidders, which yields welfare $(1 + \lambda - \varepsilon)\varepsilon c$.

However, since $\calS_1, \dots, \calS_n$ are exponentially-sized, an efficient algorithm cannot hope to find $S^*$. Further, by (roughly) $\lambda$-inapproximability, an efficient algorithm cannot obtain more than $1/\lambda$ of the welfare from the $\calV$-type bidders. So giving everything to the single-minded bidders yields welfare $\varepsilon c$, and giving everything to the $\calV$-type bidders (via a $\lambda$-approximation algorithm) also yields welfare $\varepsilon c$.

There are two additional technical considerations:
\begin{enumerate}[topsep=4pt]
    \item Though an algorithm may not be able to find $S^*$ exactly, if it can find a set of columns $S$ which nonetheless is able to satisfy many of the $\calV$-type bidders, then we can eke out additional value by giving columns $S$ to the $\calV$-type bidders and giving everything else to the single-minded bidders. For example, if $n \ll 1/\varepsilon$, we can easily find a set of $\varepsilon nc \ll c$ columns (without learning anything about $S^*$) which allow the $\calV$-type bidders to achieve full welfare while still being able to satisfy most of the single-minded bidders.

    \item Since each $\calV$-type bidder only gains value from $\varepsilon c$ columns at a time, an algorithm can split them up so that each column is only ``responsible'' for $\varepsilon n$ of the $\calV$-type bidders. We therefore need the (roughly) $\lambda$-inapproximability of the instance to hold even when only a small fraction of the bidders are being considered.
\end{enumerate}

We define the notion of strong inapproximability precisely to handle $(2)$. The two results we will use are given below, the former of which is proved in~\Cref{sec:inapproximable}.

\begin{proposition}[name=,restate=SubadditiveStronglyInapproximableProp] \label{prop:SubadditiveStronglyInapproximable}
    \SA{} is strongly $(2-2/(t+1))$-inapproximable in $2^{\Omega(m/n^2)}$ communication.
\end{proposition}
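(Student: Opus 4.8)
The plan is to generalize the two-bidder subadditive hard instance of~\cite{EzraFNTW19} to $n$ bidders and verify that it is strongly inapproximable in the sense of Definition~\ref{def:IntroHelper}, with $\alpha(t) = 2 - 2/(t+1)$, noting that for $t = 1$ this reads $\alpha(1) = 1$ (trivial) and for $t \geq 2$ it is bounded away from the target $2$ only by a $\Theta(1/t)$ term, which is absorbed into the $O(\log(n)/\log(m))$ slack of Theorem~\ref{thm:SubadditiveStronglyInapproximable}. First I would partition the $m$ items into $n$ blocks $B_1, \dots, B_n$ of $m/n$ items each, and for each bidder $i$ plant a hidden ``good'' subset $G_i \subseteq B_i$ (of size roughly $|B_i|/2$, drawn uniformly from some large, hard-to-find family $\calF_i$); bidder $i$'s valuation is the subadditive function that gives value $2$ for any set containing $G_i$, value $1$ for any nonempty set not containing $G_i$ (capped appropriately to stay subadditive — the standard construction sets $v_i(S) = \min\{2, 1 + \I[S \supseteq G_i], |S|\cdot(\text{something})\}$, following~\cite{EzraFNTW19}), and $0$ for $\emptyset$. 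The maximum welfare is $2n$, achieved by giving each bidder $i$ exactly $G_i$ (these are disjoint since they lie in disjoint blocks). The point is that any $t$-scarce allocation can give value $2$ to at most the $t$ nonempty bidders, and value $2$ to bidder $i$ requires allocating a superset of $G_i$; without $\Omega(\log|\calF_i|) = 2^{\Omega(m/n^2)}$ communication (choosing $|\calF_i|$ appropriately, e.g. via a subcube or sunflower-free family of subsets of $B_i$), no protocol can identify any $G_i$, so each nonempty bidder receives value at most $1 + o(1)$ — wait, that's too weak. The correct bound is more subtle: a $t$-scarce allocation that beats an $(\alpha(t)\cdot n/t)$-approximation must get welfare exceeding $\alpha(t)\cdot n/t \cdot (2n)^{-1}\cdot$\ldots, i.e., must get total welfare $> \frac{t}{\alpha(t)\cdot n} \cdot 2n = \frac{2t}{\alpha(t)} = \frac{2t}{2 - 2/(t+1)} = t+1$; but a $t$-scarce allocation with each of $t$ bidders getting value in $\{1, 2\}$ has welfare at most $2t$, and welfare $> t+1$ forces at least two of the $t$ nonempty bidders to receive value $2$, i.e., to receive a superset of their hidden $G_i$.

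So the key reduction is: a protocol beating an $(\alpha(t)\cdot n/t)$-approximation must, for at least two distinct bidders $i$, output an allocation containing $G_i$ — hence must ``find'' at least two of the hidden sets. I would reduce from a suitable multi-party set-disjointness-type or hidden-subset problem: choosing the families $\calF_i$ so that finding even one $G_i$ requires $2^{\Omega(m/n^2)}$ communication gives the bound immediately (finding two is only harder). The exponent $m/n^2$ arises because each block has $m/n$ items and the hard family on $m/n$ items has communication complexity $2^{\Omega((m/n)/n)} = 2^{\Omega(m/n^2)}$ when the "revealing" must survive a union bound / fooling-set argument across the $n$ bidders — the $n^2$ (rather than $n$) is the price of making the construction robust to the adversary being allowed to pick which bidder to reveal, and of keeping the planted sets mutually ``invisible.'' Concretely I'd use the standard trick: the hard distribution plants $G_i$ among $2^{\Theta(m/n^2)}$ candidates per block, argue via a cut-and-paste / fooling-set argument (as in~\cite{EzraFNTW19}) that distinguishing requires that much communication, and take a union bound over the $n$ bidders to pay only a polynomial factor, absorbed into $2^{\Omega(\cdot)}$.

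The main obstacle — and where the excerpt explicitly flags a departure from~\cite{EzraFNTW19} — is the information-complexity argument. For $n = 2$ players, \cite{EzraFNTW19} bound the \emph{internal} information complexity of the induced two-party problem, but with $n \geq 3$ players the internal information cost of a blackboard protocol can be $0$ (e.g., a protocol where one designated player does all the talking reveals nothing to that player about the others), so that route collapses. The fix is to instead lower-bound the \emph{external} information complexity: the mutual information between the transcript and the collection of hidden sets $(G_1, \dots, G_n)$, as measured by an outside observer. I would (i) set up the hard distribution so that conditioned on the transcript revealing nothing, the posterior on each $G_i$ is still close to uniform over $\calF_i$; (ii) show that an allocation containing two of the $G_i$'s implies the transcript has external information $\Omega(\log|\calF_i|)$ about at least one of them (a "finding implies learning" step, via Fano or a direct counting argument); and (iii) convert external information cost to communication cost, which is standard (communication $\geq$ external information for product-ish distributions, or use the observer as an extra silent party). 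The delicate point in (ii) is handling the correlation the protocol may induce \emph{between} blocks — but since the blocks are disjoint and the $G_i$ are independent across $i$ under the planted distribution, a direct-sum / superadditivity argument for external information across the $n$ blocks gives the needed $2^{\Omega(m/n^2)}$ bound after dividing by $n$ for the union bound and by the sampling overhead. I expect steps (ii)–(iii) to be the technical heart; the combinatorial setup and the welfare arithmetic (\(\,t+1 < 2t\) for \(t \geq 2\), etc.) are routine.
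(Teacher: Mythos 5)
Your construction has a fatal flaw. You partition $M$ into disjoint blocks $B_1, \dots, B_n$ and plant each bidder's hidden ``good'' set $G_i$ inside her own block $B_i$. But then the allocation $A_i = B_i$ is valid, achieves welfare $2n$ (each bidder receives a superset of her own $G_i$, which she knows), and requires \emph{zero} communication. The same holds for any $t$-scarce sub-allocation: give $t$ bidders their full blocks and you get welfare $2t > t+1$. Hardness cannot come from difficulty in ``finding'' $G_i$ when there is no contention: the protocol need not learn $G_i$, it can simply hand bidder $i$ every item that could possibly help her. The whole point of the construction must be that the bidders' desired bundles \emph{overlap}, and the paper achieves this with $z$ uniformly random partitions $\vb{A}_1, \dots, \vb{A}_z$ of the full item set, arranged (via Chernoff + union bound, whence the $1/n^2$ exponent) to satisfy an \emph{Intersecting Property}: $A_{k,i} \cap A_{\ell,j} \ne \emptyset$ for every $i \ne j$ and $k \ne \ell$. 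Bidder $i$ holds a PromiseDisjointness input $X_i \subseteq [z]$ and her valuation is $v_i(S) = \I[S \ne \emptyset] + \max_{\ell \in X_i} \I[S \supseteq A_{\ell,i}]$; when the $X_i$ are pairwise disjoint the Intersecting Property forces any feasible allocation to give value $2$ to at most one bidder, so a $t$-scarce allocation caps out at $t+1$, while a common $\ell$ yields $2n$. The gap $2n/(t+1) = (2-2/(t+1))\cdot n/t$ then reduces directly to multi-party \textsc{PromiseDisjointness} on $[z]$ with $z = 2^{\Theta(m/n^2)}$.

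You have also conflated two results: the external-vs-internal information-complexity obstruction you discuss is the technical heart of Theorem~\ref{thm:SubadditiveStronglyInapproximable} (the $\EFS{}$-based result used for the separation theorems), not of Proposition~\ref{prop:SubadditiveStronglyInapproximable}. The latter is a clean reduction from \textsc{PromiseDisjointness} (citing its known $\Omega(z/n)$ lower bound) and needs no information-complexity machinery at all. Your welfare arithmetic ($> t+1$ is the threshold) is right, but the conclusion you draw from it --- ``at least two bidders must get value $2$, hence must find two hidden sets'' --- is the wrong lesson: the right conclusion is that the \emph{construction must make it infeasible} for two bidders to simultaneously get value $2$ unless the protocol has learned a common $\ell$, which is exactly what the Intersecting Property buys.
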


\begin{proposition} \label{prop:XOSStronglyInapproximable}
    \XOS{} is strongly $1$-inapproximable when $m \geq n$.
\end{proposition}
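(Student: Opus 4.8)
The plan is to prove this with a single explicit hard instance; no genuine communication argument is needed, since the obstruction is purely combinatorial and therefore survives \emph{any} amount of communication (so the statement holds unconditionally, i.e.\ for arbitrarily large $z$). Concretely, fix $M = [m]$ with $m \ge n$ and give every bidder $i \in [n]$ the \emph{same} valuation $v(S) = \I[S \ne \emptyset]$. This $v$ is monotone, and it is XOS via the clause collection of unit vectors $\{e_j : j \in M\}$, since $\max_j \sum_{k \in S}(e_j)_k = \max_j \I[j \in S] = \I[S \ne \emptyset]$.

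I would then record the two relevant welfare quantities. On one hand, because $m \ge n$ we may give item $i$ to bidder $i$ for each $i \in [n]$; this allocation is feasible and has welfare $n$, and since $v \le 1$ pointwise no allocation can exceed welfare $n$, so $\OPT = n$ — this is the only place the hypothesis $m \ge n$ is used. On the other hand, in any $t$-scarce allocation all but at most $t$ bidders receive $\emptyset$, and each of the remaining bidders contributes at most $v(\cdot) \le 1$ to the welfare, so every $t$-scarce allocation has welfare at most $t = \tfrac{t}{n}\,\OPT$.

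Consequently, for each $t \in [n]$ a $t$-scarce allocation attains at best an $(n/t)$-approximation and can never \emph{strictly} beat it, no matter which protocol produced it; that is, the communication required to beat an $(\alpha(t)\cdot n/t)$-approximation with a $t$-scarce allocation, for $\alpha \equiv 1$, is unbounded. Since this holds for all $t$ simultaneously, XOS is strongly $1$-inapproximable whenever $m \ge n$.

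I do not anticipate any real obstacle here; the only points needing (minimal) care are verifying that $v$ is genuinely XOS and that the quantifier over $t$ in the definition of strong inapproximability is matched. If desired, the identical counting sharpens the same instance to strong $\alpha_\lambda(\cdot)$-inapproximability for every $\lambda \le 1 + 1/n$, since a $t$-scarce allocation has welfare $t \le \tfrac{t+1}{\lambda}$ exactly when $\lambda \le 1 + 1/t$.
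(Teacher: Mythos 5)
Your proof is correct and takes essentially the same approach as the paper: a single explicit instance with $\OPT = n$ and every $t$-scarce allocation bounded by welfare $t$, which makes the inapproximability purely combinatorial and independent of communication. The only cosmetic difference is the choice of valuation — the paper gives each bidder $i$ the personalized additive valuation $v_i(S)=\I[i\in S]$, while you give every bidder the common XOS valuation $v(S)=\I[S\neq\emptyset]$; both are valid and lead to the identical counting.
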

\begin{proof}
Consider the instance where bidder $i$ has marginal value $1$ for item $i$ and $0$ for everything else. Then the optimal welfare is $n$, and no $t$-scarce allocation can have welfare more than $t$.
\end{proof}

The following lemma handles $(1)$, formalizing the meaning of ``without uncovering $S^*$, it is impossible to find good $S$ for the $\calV$-type bidders.''

\begin{lemma} \label{lemma:HardToGatherSets}
    Let $\varepsilon = n^{-1/3}$ and $z = 2^{\Theta(c/n)}$. There exist sets $S_1, \dots, S_z \subseteq C$ of size $\varepsilon c$ such that for all $\delta \in [0, 1]$, all $S \subseteq C$ of size $\delta c$, and all $L \subseteq [z]$ of size $n$,
    \[
        \sum_{\ell \in L} \abs{S \cap S_\ell} \quad \leq \quad (\delta + 3\varepsilon) \varepsilon nc \enspace .
    \]
\end{lemma}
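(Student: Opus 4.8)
The plan is to use the probabilistic method: draw $S_1, \dots, S_z$ independently and uniformly at random from the $\binom{c}{\varepsilon c}$ subsets of $C$ of size $\varepsilon c$, and show that with positive probability the desired bound holds for all choices of $S$ and $L$ simultaneously. The key quantity to control is, for a fixed target set $S$ of size $\delta c$ and a fixed index $\ell$, the random variable $|S \cap S_\ell|$, which is hypergeometric with mean $\mu := \delta \varepsilon c$. Summing over $\ell \in L$ (with $|L| = n$), the sum $\sum_{\ell \in L} |S \cap S_\ell|$ has mean $\delta \varepsilon n c$, and we want to show it rarely exceeds $(\delta + 3\varepsilon)\varepsilon n c = \delta \varepsilon n c + 3 \varepsilon^2 n c$. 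So we need an upper tail bound showing the sum exceeds its mean by an additive $3\varepsilon^2 n c$ with probability at most $\exp(-\Omega(c/n))$ per $(S, L)$ pair, small enough to survive a union bound over all such pairs.

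First I would set up the concentration estimate. Since the $S_\ell$ are independent, $\sum_{\ell \in L} |S \cap S_\ell|$ is a sum of $n$ independent hypergeometric random variables, each bounded in $[0, \varepsilon c]$. By a Chernoff/Hoeffding bound for independent bounded variables (or a Bernstein-type bound using that each term has variance $O(\delta \varepsilon c)$), the probability that this sum exceeds its mean by $t$ is at most $\exp\bigl(-\Omega(t^2 / (n \varepsilon^2 c^2))\bigr)$ using the crude range bound $\varepsilon c$ per term. Plugging in $t = 3\varepsilon^2 n c$ gives an exponent of order $\varepsilon^4 n^2 c^2 / (n \varepsilon^2 c^2) = \varepsilon^2 n = n^{1/3}$, recalling $\varepsilon = n^{-1/3}$. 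Hmm — that is only $\exp(-\Omega(n^{1/3}))$, which is not obviously enough; the number of $(S, L)$ pairs is roughly $2^c \cdot \binom{z}{n}$, and $\binom{z}{n} \approx z^n = 2^{\Theta(c)}$, so the union bound needs the failure probability per pair to beat $2^{-\Theta(c)}$, i.e. we need exponent $\Omega(c)$, far more than $n^{1/3}$.

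This is the main obstacle, and the fix is to not use the worst-case range bound $\varepsilon c$ for each term but instead to argue more carefully, and crucially to only union-bound over $S$ of a restricted form. The right move: it suffices to prove the bound for $S$ of size \emph{exactly} $\delta c$ for $\delta c$ an integer, and moreover we only care about $\delta \le 1$; but the real savings comes from observing that we should take $z = 2^{\Theta(c/n)}$ precisely so that $\binom{z}{n} \le z^n = 2^{\Theta(c)}$ and then we need per-pair failure $2^{-\omega(c)}$. To get that, fix $L$ and note that $\sum_{\ell \in L}|S \cap S_\ell| = \sum_{j \in S} |\{\ell \in L : j \in S_\ell\}|$; for a uniformly random size-$\varepsilon c$ set $S_\ell$, each column $j$ lies in $S_\ell$ with probability $\varepsilon$, and across $\ell \in L$ these are independent, so $|\{\ell \in L : j \in S_\ell\}| \sim \mathrm{Binom}(n, \varepsilon)$ has mean $\varepsilon n$. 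Then $\sum_{\ell \in L}|S\cap S_\ell| = \sum_{j \in S} Y_j$ where the $Y_j$ are i.i.d.\ $\mathrm{Binom}(n,\varepsilon)$ across $j$ (for fixed $L$, independence across $j$ holds since the $S_\ell$ are sampled by including each column independently — or one passes to this model and corrects for the size constraint via a standard negative-association argument). Now $\sum_{j \in S} Y_j$ is a sum of $|S| = \delta c$ independent variables each bounded by $n$ with mean $\varepsilon n$, so by a multiplicative Chernoff bound it exceeds $(\delta + 3\varepsilon)\varepsilon n c$ — i.e.\ exceeds its mean $\delta \varepsilon n c$ by a factor $(1 + 3\varepsilon/\delta)$, equivalently by additive $3\varepsilon^2 nc$ — with probability at most $\exp(-\Omega(\varepsilon^2 n c)) = \exp(-\Omega(n^{1/3} c))$, provided $\delta \ge \varepsilon$ (the regime $\delta < \varepsilon$, where $S$ is tiny, is handled directly since then even $\sum_\ell |S \cap S_\ell| \le |S| n = \delta n c \le \varepsilon n c \le 3 \varepsilon^2 n c \cdot (\text{something})$ — actually one checks $(\delta+3\varepsilon)\varepsilon nc \ge \delta \cdot \varepsilon nc$ trivially bounds it when $\delta n c \le (\delta + 3\varepsilon)\varepsilon nc$, i.e.\ when $\delta \le (\delta+3\varepsilon)\varepsilon$, which holds for $\delta \le \varepsilon \cdot O(1)$; the boundary cases need a small separate check). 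Finally, union-bound: there are at most $2^c$ choices of $S$, at most $z^n = 2^{O(c)}$ choices of $L$, and $\exp(-\Omega(n^{1/3} c)) \cdot 2^{O(c)} = \exp(-\Omega(c (n^{1/3} - O(1)))) \to 0$ for $n$ large enough (and for small constant $n$ the lemma can be arranged by taking the constant in $z = 2^{\Theta(c/n)}$ appropriately), so a good family $S_1, \dots, S_z$ exists.

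To summarize the steps in order: (i) fix the sampling model (uniform size-$\varepsilon c$ sets, or equivalently the product model with a negative-association correction); (ii) for fixed $S$ and $L$, rewrite $\sum_{\ell \in L}|S \cap S_\ell| = \sum_{j \in S} Y_j$ with $Y_j \sim \mathrm{Binom}(n, \varepsilon)$ independent across $j$; (iii) apply a multiplicative Chernoff bound to get per-pair failure probability $\exp(-\Omega(\varepsilon^2 n c))$ in the main regime $\delta \ge \varepsilon$, and dispatch the small-$\delta$ regime by a trivial deterministic bound; (iv) union-bound over the $\le 2^c \cdot z^n$ pairs, using $z = 2^{\Theta(c/n)}$ and $\varepsilon = n^{-1/3}$ so that the exponent $\varepsilon^2 nc = n^{1/3}c$ dominates the union-bound cost $O(c)$. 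The one genuinely delicate point is (iii): getting the concentration to beat the union bound requires exploiting the per-column $\mathrm{Binom}(n,\varepsilon)$ structure rather than the crude per-set range bound, and handling the size constraint on each $S_\ell$ — this is where I expect to spend the most care, though the negative-association property of uniform fixed-size samples makes it routine.
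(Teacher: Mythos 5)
Your approach is essentially the same as the paper's: recast $\sum_{\ell \in L}\abs{S \cap S_\ell}$ as a sum of $\delta n c$ (nearly) independent Bernoulli$(\approx\varepsilon)$ indicators, apply a multiplicative Chernoff bound, and union-bound over the $2^c$ choices of $S$ and $\le z^n$ choices of $L$. The paper handles the ``exactly size $\varepsilon c$'' constraint a bit more cleanly than your negative-association route: it samples $T_\ell$ by including each column independently with probability $\varepsilon/(1-\varepsilon/4)$ (so $Z_{S,L} \sim \Binom(\delta n c,\, \varepsilon/(1-\varepsilon/4))$ exactly), aborts in the unlikely event some $\abs{T_\ell} < \varepsilon c$, and otherwise takes $S_\ell \subseteq T_\ell$ of size $\varepsilon c$; since $\abs{S\cap S_\ell}\le\abs{S\cap T_\ell}$, the tail bound for the product-measure sets immediately transfers downward. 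Your negative-association variant is equally valid but needs a bit more machinery.

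One genuine slip worth flagging: your claimed per-pair failure probability $\exp(-\Omega(\varepsilon^2 n c)) = \exp(-\Omega(n^{1/3}c))$ is not what multiplicative Chernoff actually gives in the relevant worst case. With mean $\mu = \delta\varepsilon nc$ and multiplicative overshoot $\eta = 3\varepsilon/\delta$, the exponent is $\Theta(\eta^2\mu/(2+\eta)) = \Theta(\varepsilon^3 nc/\delta)$ once $\eta = O(1)$; at $\delta$ close to $1$ this is only $\Theta(\varepsilon^3 n c) = \Theta(c)$ (recall $\varepsilon^3 n = 1$), \emph{not} $\Theta(n^{1/3}c)$. The $\Theta(n^{1/3}c)$ figure is what you get only when $\delta$ is as small as $\Theta(\varepsilon)$. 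This matters because your union-bound step leans on having an $n^{1/3}$-factor of slack (``$\exp(-\Omega(c(n^{1/3} - O(1))))$''), which you don't have: you only have $e^{-\Theta(c)}$ versus $2^c\cdot z^n = 2^{\Theta(c)}$, so the base-of-exponent constants must genuinely cooperate, which in turn constrains the hidden constant in $z = 2^{\Theta(c/n)}$ for \emph{all} $n$, not just small $n$. The paper's calculation lands exactly on $e^{-c}$ and relies on choosing the $\Theta$ in $z$ small enough; your plan works too, but you should redo the constant accounting rather than appeal to a large-$n$ asymptotic that isn't there. Relatedly, your split into $\delta \ge \varepsilon$ and $\delta < \varepsilon$ is unnecessary: multiplicative Chernoff with $\eta = 3\varepsilon/\delta$ gives an exponent $\ge \Omega(c)$ uniformly over all $\delta \in (0,1]$, so no separate deterministic case is needed.
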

\begin{proof}
Let $T_1, \dots, T_z \subseteq C$ be sets sampled by independently including each item w.p. $\varepsilon/(1-\varepsilon/4)$. If $\abs{T_\ell} < \varepsilon c$ for any $\ell \in [z]$, we abort. Otherwise, we let each $S_\ell$ be an arbitrary subset of $T_\ell$ of size $\varepsilon c$. Observe that the lemma follows for $S_1, \dots, S_z$ so long as it follows for $T_1, \dots, T_z$.

Fix $\delta \in [0, 1]$, $S \subseteq C$ of size $\delta c$, and $L \subseteq [z]$ of size $n$. Observe that $Z_{S, L} \coloneqq \sum_{\ell \in L} \abs{S \cap T_\ell} \sim \Binom(\delta nc, \varepsilon/(1-\varepsilon/4))$, so
\begin{align*}
    \Pr[Z_{S, L} \geq (\delta + 3\varepsilon)\varepsilon nc] \quad &\leq \quad \Pr[Z_{S, L} \geq \E[Z_{S, L}] + 2\varepsilon\E[Z_{C, L}]] \\
    &\leq \quad \exp\bigg(-\bigg(2\varepsilon\frac{\E[Z_{C, L}]}{\E[Z_{S, L}]}\bigg)^2 \frac{\E[Z_{S, L}]}{2+\varepsilon}\bigg) \\
    &\leq \quad \exp\bigg(-\frac{4\varepsilon^2}{2+\varepsilon}\E[Z_{C, L}]\bigg) \quad \leq \quad \exp(-\varepsilon^3 nc) \quad = \quad e^{-c} \enspace .
\end{align*}
Taking a union bound over all $2^c$ choices of $S$ and all at most $z^n = 2^{\Theta(c)}$ choices of $L$, we see that $Z_{S, L} \leq (\delta + 3\varepsilon)\varepsilon ny$ for all $S, L$ simultaneously w.h.p.

Additionally, by a Chernoff bound and union bound, the probability that $\abs{T_\ell} < \varepsilon c$ for some $\ell \in [z]$ is at most $z e^{-\Theta(\varepsilon^3 c)} \ll 1$, so $S_1, \dots, S_z$ satisfies the lemma w.h.p.
\end{proof}

Before we prove the main result, we will need some definitions and a fact (see its proof in~\Cref{sec:missing-proofs}) regarding communication protocols.

\begin{definition}[\textsc{PromiseDisjointness}]
    In \textsc{PromiseDisjointness}, each player receives as input a set $X_i \subseteq [z]$. The players must output
    \begin{itemize}[topsep=4pt]
        \item $1$, if there exists $\ell \in \bigcap_{i \in [n]} X_i$.

        \item $0$, if $X_1, \dots, X_n$ are pairwise disjoint.

        \item Anything, if neither of the above is true.
    \end{itemize}
    Solving \textsc{PromiseDisjointness} w.p. $2/3$ requires $\Omega(z/n)$ communication \cite{Gro09}.
\end{definition}

\begin{definition}[\textsc{$(\calV, \alpha(\cdot))$-GapWelfare}]
    In \textsc{$(\calV, \alpha(\cdot))$-GapWelfare}, each player receives a $\calV$-type valuation $f_i : 2^R \to \mathbb{R}_+$. The players must output
    \begin{itemize}[topsep=4pt]
        \item $1$, if the welfare of the instance $f_1, \dots, f_n$ is $n$.

        \item $0$, if for all $t \in [n]$, there does not exist a $t$-scarce allocation with welfare greater than $t/\alpha(t)$.

        \item Anything, if neither of the above is true.
    \end{itemize}
\end{definition}

\begin{definition}[name=Strong Inapproximability,restate=StronglyInapproximable] \label{def:StronglyInapproximable}
    A valuation class $\calV$ is \emph{strongly $\alpha(\cdot)$-inapproximable in $z$ communication} if solving \textsc{$(\calV, \alpha(\cdot))$-GapWelfare} w.p.\ $2/3$ requires $z$ communication.
\end{definition}

\begin{proposition}[name=,restate=TwoProblems] \label{lemma:TwoProblems}
    Let $\calX, \calY$ be some domains, let $\calX^* \subseteq \calX^n, \calY^* \subseteq \calY^n$, and let $f : \calX^n \to \{0, 1\}$ and $g : \calY^n \to \{0, 1\}$. Let $\calQ_f$ be the $n$-player communication problem where the players receive as input $\vb{x} \in \calX^n$, and must output $f(\vb{x})$ if $\vb{x} \in \calX^*$ (and may output anything otherwise), and let $\calQ_g$ be the $n$-player communication problem where the players receive as input $\vb{y} \in \calY^n$, and must output $g(\vb{y})$ if $\vb{y} \in \calY^*$ (and may output anything otherwise). Let $\calQ$ be the $n$-player communication problem where the players receive as input $(\vb{x}, \vb{y}) \in (\calX \times \calY)^n$, and must output $f(\vb{x})$ if $f(\vb{x}) = g(\vb{y})$, and may output anything if $f(\vb{x}) \ne g(\vb{y})$. Then if $\calQ_f$ and $\calQ_g$ require $\Omega(z)$ communication to solve w.p. $2/3$, $\calQ$ also requires $\Omega(z)$ communication to solve w.p. $2/3$.
\end{proposition}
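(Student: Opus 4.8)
The plan is to prove the contrapositive-style statement via a reduction: assume we have a protocol $\Pi$ solving $\calQ$ with communication $o(z)$ and success probability $2/3$, and build from it protocols for $\calQ_f$ and for $\calQ_g$ each using $o(z)$ communication, contradicting the hypothesis that at least one of them requires $\Omega(z)$. (Strictly, $\calQ_f$ and $\calQ_g$ \emph{both} require $\Omega(z)$; I only need to contradict one of them, but the construction handles both symmetrically.) The key idea is a public-coin "padding" reduction: to solve $\calQ_f$ on input $\vb{x} \in \calX^*$, the players use public randomness to sample a \emph{fixed} valid instance $\vb{y} \in \calY^*$ for which they already know the answer bit $b = g(\vb{y})$ — but this doesn't directly work because $f(\vb{x})$ might not equal $b$. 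So instead the reduction is: the players hardcode, using shared randomness, \emph{one} instance $\vb{y}^{(0)} \in \calY^*$ with $g(\vb{y}^{(0)}) = 0$ and \emph{one} instance $\vb{y}^{(1)} \in \calY^*$ with $g(\vb{y}^{(1)}) = 1$ (assuming both exist; the degenerate cases where $g$ is constant on $\calY^*$ are handled trivially, since then $\calQ_g$ has a zero-communication protocol and the hypothesis forces $\calX^*$ to carry all the hardness). Player $i$ then runs $\Pi$ on input $(x_i, y^{(0)}_i)$ and separately on input $(x_i, y^{(1)}_i)$. In the first run the combined instance $(\vb{x}, \vb{y}^{(0)})$ satisfies $g(\vb{y}^{(0)}) = 0$, so if $f(\vb{x}) = 0$ the protocol must output $0$; in the second run $g(\vb{y}^{(1)}) = 1$, so if $f(\vb{x}) = 1$ the protocol must output $1$. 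Exactly one of the two runs is in the "agreement" regime $f(\vb{x}) = g(\vb{y}^{(b)})$, and in that run $\Pi$ outputs $f(\vb{x})$ w.p.\ $2/3$; the players can recover $f(\vb{x})$ by taking the two outputs $o_0, o_1$ and returning: $0$ if $o_0 = 0$ and $o_1 = 1$ (the "consistent" case), and otherwise outputting whichever value the runs agree on / a default — more carefully, note that when $f(\vb{x}) = 0$ the first run is correct w.h.p.\ giving $o_0 = 0$, and when $f(\vb{x}) = 1$ the second run is correct w.h.p.\ giving $o_1 = 1$; so the decision rule "output $0$ if $o_0 = 0$, else output $1$" recovers $f(\vb{x})$ with probability $\ge 2/3$ (it errs only if the unique agreement-regime run errs). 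This uses $2 \cdot o(z) = o(z)$ communication. The reduction for $\calQ_g$ is symmetric, hardcoding instead $\vb{x}^{(0)}, \vb{x}^{(1)} \in \calX^*$ with the two values of $f$.

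The main steps, in order, are: (i) dispose of the degenerate cases where $f$ is constant on $\calX^*$ or $g$ is constant on $\calY^*$ — in either case one of $\calQ_f, \calQ_g$ is solvable with no communication, so the $\Omega(z)$ hardness assumption must be attributed to the other problem, and then $\calQ$ trivially inherits it by restricting to the corresponding slice (fixing the easy coordinates to the known constant-answer instance) — wait, this needs care, see below; (ii) in the generic case, pick witnesses $\vb{x}^{(0)}, \vb{x}^{(1)}, \vb{y}^{(0)}, \vb{y}^{(1)}$ as above; (iii) build the reduction from $\calQ_f$ to $\calQ$ as described, verifying the combined instances land in the agreement regime and the success probability is preserved; (iv) conclude $\calQ$ requires $\Omega(z)$ since otherwise $\calQ_f$ would not.

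The subtle point — and where I expect the real work to be — is step (i) combined with making sure the reduction's boosting/agreement argument is airtight. For the degenerate case: if, say, $g \equiv 1$ on all of $\calY^*$, then $\calQ_g$ is solved by the constant protocol outputting $1$, so it does \emph{not} require $\Omega(z)$; the hypothesis "$\calQ_f$ and $\calQ_g$ require $\Omega(z)$" would be vacuously unsatisfiable, so there is nothing to prove. Hence we may assume both $f$ and $g$ are non-constant on their promise sets, guaranteeing the four witnesses exist. The second subtlety is that in the reduction we feed $\Pi$ inputs $(\vb{x}, \vb{y}^{(b)})$ which lie in $(\calX \times \calY)^n$ but we must check $\Pi$'s guarantee applies: $\Pi$ is only promised to output $f(\vb{x})$ when $f(\vb{x}) = g(\vb{y}^{(b)})$, and indeed we arranged $g(\vb{y}^{(0)}) = 0$, $g(\vb{y}^{(1)}) = 1$, so exactly the run with $b = f(\vb{x})$ is a valid instance — this is the crux and it works. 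One should also note the two runs can reuse independent fresh public randomness (or the same — independence is cleaner for the $2/3$ bound, though since only one run needs to succeed, even a single run suffices and we get $2/3$ directly without any boosting). I would therefore streamline: \emph{run $\Pi$ only on $(\vb{x}, \vb{y}^{(1-f(\vb{x}))})$}? — no, the players don't know $f(\vb{x})$; that's the whole point. So both runs are necessary, the decision rule "output $0$ iff $o_0 = 0$" is correct whenever the agreement-regime run succeeds, and the $2/3$ bound transfers directly with no repetition needed. Total communication: twice that of $\Pi$, hence still $o(z)$, contradiction.
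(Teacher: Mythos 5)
Your reduction has a genuine gap: the decision rule ``output $0$ if $o_0 = 0$, else output $1$'' does not recover $f(\vb{x})$ with probability $\ge 2/3$, and the parenthetical claim that ``it errs only if the unique agreement-regime run errs'' is false. When $f(\vb{x}) = 1$, the agreement-regime run is the one producing $o_1$, which your rule never consults; $o_0$ comes from the run on $(\vb{x}, \vb{y}^{(0)})$ with $f(\vb{x}) = 1 \ne 0 = g(\vb{y}^{(0)})$, which is outside $\Pi$'s promise, so $o_0$ is adversarial. Concretely, $\Pi$ may simply output $g(\vb{y})$ on the two slices $\{\vb{y}^{(0)}, \vb{y}^{(1)}\}$: then $(o_0, o_1) = (0,1)$ always, independently of $\vb{x}$, so \emph{no} decision rule based on $(o_0, o_1)$ distinguishes $f(\vb{x}) = 0$ from $f(\vb{x}) = 1$, yet such a $\Pi$ satisfies the $\calQ$-promise perfectly. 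You might object that this $\Pi$ is ``essentially solving $\calQ_g$'' and so must cost $\Omega(z)$ --- and that intuition is exactly the missing ingredient --- but your proof never establishes it: $\Pi$ is only required to behave this way on the two hardcoded $\vb{y}$-values, not on all of $\calY^*$, so restricting $\Pi$ to a fixed $\vb{x}$ does not obviously yield a protocol for $\calQ_g$.

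The paper closes this hole with a minimax argument rather than a fixed-pair reduction. For each distribution $\mu$ on $\calY^*$, it builds a reduction $\Pi_\mu$ that samples $10$ inputs from $\mu_0$ and $10$ from $\mu_1$, outputs $1$ if any $\mu_0$-run says $1$, else $0$ if any $\mu_1$-run says $0$, and otherwise \emph{aborts} --- the abort branch is exactly what your rule lacks, and it is what lets them cleanly split into two cases. If some $\mu$ makes $\Pi_\mu$ correct w.p.\ $\ge 2/3$ on all of $\calX^*$, the hardness of $\calQ_f$ gives the bound. Otherwise, for every $\mu$ there is an input $\vb{x}_\mu$ on which $\Pi_\mu$ fails w.p.\ $\ge 1/3$, and their Lemma~\ref{lemma:DistributionalComplexity} shows that this failure mode can only occur if $\Pi(\vb{x}_\mu, \cdot)$ already computes $g$ correctly w.p.\ $\ge 2/3$ under $\mu$; Yao's minimax principle then upgrades this per-$\mu$ distributional guarantee into a worst-case protocol for $\calQ_g$, so the hardness of $\calQ_g$ gives the bound. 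You should adopt this ``either the reduction succeeds, or its failure exposes a protocol for $\calQ_g$'' structure; the fixed-pair reduction cannot be patched by a cleverer decision rule.
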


\begin{proof}[Proof of~\Cref{thm:HardnessOfApproximation}]
WLOG, let $a = n$ and $m = rc$ as before (otherwise, discard the additional bidders/items), and label the items $M \coloneqq R \times C$. Let $\varepsilon = n^{-1/3}$ and let $z = 2^{\Theta(c/n)}$. The theorem statement then becomes, ``any $(1+\lambda)(1-4\varepsilon)$-approximation for \VSM{} uses $z$ communication.''

Let $S_1, \dots, S_z \subseteq C$ be sets satisfying the property of \Cref{lemma:HardToGatherSets}. Let $X_1, \dots, X_n \subseteq [z]$ be a \textsc{PromiseDisjointness} instance, and let $f_1, \dots, f_n : 2^R \to \mathbb{R}_+$ be a \textsc{$(\calV, \alpha_\lambda(\cdot))$-GapWelfare} instance, such that both are $1$-instances or both are $0$-instances.

Let $u_1, \dots, u_c$ be single-minded bidders defined $u_j(S) \coloneqq \varepsilon n \cdot \I(S \supseteq (R \times \{j\}))$. Let $v_1, \dots, v_n : 2^M \to \mathbb{R}_+$ be defined
\[
    v_i(S) \quad \coloneqq \quad \lambda\max_{\ell \in X_i} \sum_{j \in S_\ell} f_i(S \cap (R \times \{j\})) \enspace ,
\]
and observe that $v_i \in \calV$ because $\calV$ is closed under addition and point-wise maximum. In plain terms, $v_i$ is the valuation which participates in a $\calV$-type instance in each column, and adds its values across columns in an XOS manner.

If both communication problems are $1$-instances, there exists $\ell \in \bigcap_{i \in [n]} X_i$, and the optimal welfare of $f_1, \dots, f_n$ is $n$. Then allocating $R \times S_\ell$ optimally among bidders $v_1, \dots, v_n$ and allocating the remaining columns to $u_1, \dots, u_c$ yields $\varepsilon c \cdot \lambda n + (1 - \varepsilon) c \cdot \varepsilon n = (1 + \lambda - \varepsilon) \varepsilon nc$.

If both communication problems are $0$-instances, consider any set of columns $J \subseteq C$ of size $\delta c$ given to $v_1, \dots, v_n$. For $T \subseteq [n]$, let $\OPT(T)$ denote the optimal welfare among bidders $T$ in a single column, and observe that $\OPT(T) \leq \lambda\abs{T}/(\lambda(1-1/(\abs{T}+1))) = \abs{T} + 1$ because we are in a $0$-instance of \textsc{$(\calV, \alpha_\lambda(\cdot))$-GapWelfare}. Then denoting $\vb{\ell} = (\ell_1, \dots, \ell_n)$ and $\vb{X} = (X_1, \dots, X_n)$, the optimal welfare among $v_1, \dots, v_n$ is
\begin{align*}
    \max_{\vb{\ell} \in \vb{X}} \sum_{j \in J} \OPT(\{i \in [n] : j \in S_{\ell_i}\}) \quad &\leq \quad \max_{\vb{\ell} \in \vb{X}} \sum_{j \in J} (\abs{\{i \in [n] : j \in S_{\ell_i}\}} + 1) \\
    &\leq \quad c + \max_{\vb{\ell} \in \vb{X}} \sum_{i \in [n]} \abs{J \cap S_{\ell_i}} \\
    &\leq \quad c + \max_{\substack{L \subseteq [z] \\ \abs{L} = n}} \sum_{\ell \in L} \abs{J \cap S_\ell} \\
    &\leq \quad (\delta + 3\varepsilon)\varepsilon nc \\
    &\leq \quad (\delta + 3\varepsilon + \varepsilon^2)\varepsilon nc \enspace ,
\end{align*}
where the third line comes from the fact that $\vb{\ell}$ has distinct coordinates due to the disjointness of $X_1, \dots, X_n$, and the fourth line comes from \Cref{lemma:HardToGatherSets}. Thus, the total welfare cannot exceed $(1 - \delta)\varepsilon nc + (\delta + 3\varepsilon + \varepsilon^2)\varepsilon nc = (1 + 3\varepsilon + \varepsilon^2)\varepsilon nc$.

Therefore, beating a $(1 + \lambda - \varepsilon)/(1 + 3\varepsilon + \varepsilon^2) \geq (1 + \lambda)(1 - 4\varepsilon)$-approximation distinguishes between the cases where for instances \textsc{PromiseDisjointness} and \textsc{$(\calV, \alpha_\lambda(\cdot))$-GapWelfare}, both are $1$-instances or both are $0$-instances. Since \textsc{PromiseDisjointness} has communication complexity $\Omega(z/n)$ and \textsc{$(\calV, \alpha_\lambda(\cdot))$-GapWelfare} has communication complexity $z$, this requires $\Omega(z/n) = 2^{\Omega(c/n)}$ communication by~\Cref{lemma:TwoProblems}.\footnote{Note that the precise statement we have proved is of the form ``any algorithm which finds an $\alpha$-approximation w.p.\ $2/3$ uses $z$ communication,'' but this implies ``any $\alpha$-approximation algorithm uses $z$ communication'' by~\Cref{remark:ApproximationEquivalence}.}
\end{proof}

\ComplexitySASM*
\begin{proof}
The lower bound follows by \Cref{prop:SubadditiveStronglyInapproximable} and \Cref{thm:HardnessOfApproximation} with $r = \sqrt{ma}$, and the upper bound follows by~\Cref{thm:3-approx-alg-for-SASM}.
\end{proof}

\ComplexityXOSSM*
\begin{proof}
The lower bound follows by \Cref{prop:XOSStronglyInapproximable} and \Cref{thm:HardnessOfApproximation} with $r = a$, and the upper bound follows by~\Cref{thm:2-approx-alg-for-XOSSM}.
\end{proof}

\begin{remark}
    The polynomial communication $2$-approximation for \XOSSM{} implies that \XOS{} is not strongly $(1 + \varepsilon)$-inapproximable for any constant $\varepsilon > 0$.
\end{remark}

\section{Separations with Additional Succinct Bidders}
\label{sec:separations}

In the previous section, we gave the correct hardness of approximations for \SASM{} and \XOSSM{} of $3$ and $2$, respectively, as the number of non-succinct bidders $n \to \infty$. In this section, we demonstrate that for all (non-trivial) $n$, there is a constant gap in the hardness of approximations between \SASM{} and \SA{}, and between \XOSSM{} and \XOS{}.

Observe that there are optimal (up to lower order terms) trivial algorithms for \XOS{} when $n = 1$ and \SA{} when $n = 2$; simply give everything to the higher value bidder. These algorithms are easily extended to trivial algorithms for \XOSSM{} and \SASM{} which achieve the same approximation ratio: ask each bidder to maximize the welfare among the succinct bidders and herself, and choose the highest welfare allocation. We call these values of $n$ trivial for hardness of approximation results because the single-minded bidders can only improve the approximation ratio; the only interesting values of $n$ to consider are $n \geq 3$ for \SASM{}, and $n \geq 2$ for \XOSSM{}.

Our constructions for the separation results are similar to those for the asymptotic hardness of approximation results: we divide the items into columns $C \coloneqq [c]$, with a single-minded bidder interested in each column. However, since $n$ can now be very small, we cannot force the optimal allocation to give the non-succinct bidders a very small column set, because then we would have enough columns to give each non-succinct bidder a disjoint column set. Therefore, we need the non-succinct bidders to receive the majority of columns in the optimal allocation. In fact, because XOS valuations are not hard to approximate unless most of the items are under contention, we need each XOS bidder to be interested very large column sets.

We will need the following two strong inapproximability results, the former of which is proved in~\Cref{sec:inapproximable}, as well as a probabilistic lemma to prove our separations.

\SubadditiveStronglyInapproximableThm*

\begin{proposition} \label{prop:XOSStronglyInapproximable2}
    For constant $n$, \XOS{} is strongly $((t/n)/(1 - (1 - 1/n)^t) - o(1))$-inapproximable in $2^{\Omega(\sqrt{m})}$ communication.
\end{proposition}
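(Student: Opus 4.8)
The plan is to prove \Cref{prop:XOSStronglyInapproximable2} by a direct reduction from \textsc{PromiseDisjointness} on a ground set of size $z = 2^{\Theta(\sqrt m)}$, whose $n$-player communication complexity is $\Omega(z/n) = 2^{\Omega(\sqrt m)}$ when $n = O(1)$. The construction I would use: arrange the items into $k \coloneqq m/n$ columns of $n$ items each, $M = [k]\times[n]$, and fix (once, via the probabilistic argument below) $z$ ``patterns'' $\sigma^{(1)},\dots,\sigma^{(z)}$, where $\sigma^{(\ell)} = (\sigma^{(\ell)}_s)_{s\in[k]}$ assigns a permutation $\sigma^{(\ell)}_s$ of $[n]$ to each column, drawn i.i.d.\ uniformly. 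Given a \textsc{PromiseDisjointness} instance $X_1,\dots,X_n \subseteq [z]$, player $i$ forms
\[
  v_i(S) \;\coloneqq\; \frac1k \max_{\ell\in X_i}\,\bigl|\{\,s\in[k] : (s,\sigma^{(\ell)}_s(i))\in S\,\}\bigr| \qquad \forall S\subseteq M,
\]
a maximum of additive clauses (one per $\ell\in X_i$), hence \XOS{}, with $v_i(M)=1$. Intuitively, under pattern $\ell$ bidder $i$ wants item $(s,\sigma^{(\ell)}_s(i))$ in each column and earns $1/k$ per column won, choosing whichever $\ell\in X_i$ is best for the given bundle.

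For completeness, if $\ell^*\in\bigcap_i X_i$, the bundles $A_i\coloneqq\{(s,\sigma^{(\ell^*)}_s(i)):s\in[k]\}$ partition $M$ (each column's items are split according to $\sigma^{(\ell^*)}_s$) and give every bidder value $1$; since $\sum_i v_i(\cdot)\le\sum_i v_i(M)=n$ always, the optimal welfare is exactly $n$, a $1$-instance of \textsc{$(\XOS,\alpha(\cdot))$-GapWelfare}. For soundness, suppose the $X_i$ are pairwise disjoint, fix $t\in[n]$, and consider any $t$-scarce allocation, i.e.\ a partition of $M$ among a set $T$ of $t$ bidders. Since within a column distinct wanted colors can be handed to distinct bidders, its welfare equals $\tfrac1k\max_{(\ell_i)\in\prod_{i\in T}X_i}\sum_{s\in[k]} r_s$, where $r_s\in\{1,\dots,t\}$ is the number of distinct values in $\{\sigma^{(\ell_i)}_s(i):i\in T\}$.

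Disjointness forces the $\ell_i$ to be distinct, so for a fixed tuple $(\ell_i)$ these values are independent and uniform over $[n]$, whence $\E[r_s] = n\bigl(1-(1-1/n)^t\bigr)$ and $\sum_s r_s$ is a sum of $k$ independent terms in $[0,t]$. A Hoeffding bound gives deviation probability $\exp(-\Omega(\delta^2 m))$ for constant $n,t$, and a union bound over the $\le \sum_{t}\binom nt z^t = 2^{O(\sqrt m)}$ distinct-coordinate tuples $(T,(\ell_i))$ — together with the single random draw of the patterns — shows a fixed realization exists for which, simultaneously for every $t\in[n]$, every pairwise-disjoint $(X_i)$, and every $t$-scarce allocation, the welfare is at most $n\bigl(1-(1-1/n)^t\bigr)+o(n)$, which is below the threshold $t/\alpha(t)$ of \Cref{def:StronglyInapproximable} for $\alpha(t)=(t/n)/(1-(1-1/n)^t)-o(1)$; thus this is a $0$-instance. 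Fixing such patterns, any protocol for \textsc{$(\XOS,\alpha(\cdot))$-GapWelfare} decides \textsc{PromiseDisjointness} (output ``common element'' iff it outputs $1$), hence needs $\Omega(z/n)=2^{\Omega(\sqrt m)}$ communication.

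The main obstacle is the soundness step: one needs the \emph{exact} constant $n\bigl(1-(1-1/n)^t\bigr)$ rather than a crude bound (a crude bound would contradict the $e/(e-1)$-approximation of \Cref{thm:ApproxForSAAndXOS} at $t=n$), and one needs it \emph{uniformly} over all $t$ and all $t$-scarce allocations against an adversarially chosen partition. This is exactly what pins down the parameters: $k=\Theta(m)$ must be large enough that the per-tuple Hoeffding deviation $\exp(-\Omega(m))$ beats the $2^{O(\sqrt m)}$ union bound over patterns, which caps $z$ at $2^{\Theta(\sqrt m)}$ (so the lower bound is $2^{\Omega(\sqrt m)}$, not $2^{\Omega(m)}$), and is also why the statement is restricted to constant $n$, so that the $[0,t]$ ranges in Hoeffding and the $o(1)$ slack remain harmless.
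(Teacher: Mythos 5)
Your proof is correct and takes essentially the same approach as the paper: fix $z = 2^{\Theta(\sqrt m)}$ random partition-like structures, define each XOS valuation as a maximum over additive clauses indexed by $X_i$, reduce from \textsc{PromiseDisjointness}, argue completeness via a common index $\ell^*$, and argue soundness via concentration plus a union bound over tuples of distinct indices that identifies $1-(1-1/n)^t$ as the sharp constant. The only difference is cosmetic: the paper samples unrestricted random partitions $\vb{A}_1,\dots,\vb{A}_z$ of $M$ (each item assigned to a uniformly random bidder) and bounds $Z_{T,L}=\lvert\bigcup_{i\in T}A_{\ell_i,i}\rvert\sim\Binom(m,1-(1-1/n)^t)$ directly by Chernoff, whereas you use column-structured transversals (one permutation per column) with a $1/k$ normalization and apply Hoeffding to $\sum_s r_s$; both choices of random object give the same per-item coverage probability and the same $2^{O(\sqrt m)}$ union bound.
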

\begin{proof}
For $z = 2^{\Theta(\sqrt{m})}$, sample uniformly random partitions $\vb{A}_1, \dots, \vb{A}_z \in \Sigma$. For all $T \subseteq N$ and $L \subseteq [z]$ of size $n$, we have $Z_{T,L} \coloneqq \abs{\bigcup_{i \in T} A_{\ell_i,i}} \sim \Binom(m, 1 - (1 - 1/n)^t)$, so by a Chernoff bound,
\[
    \Pr[Z_{T,L} \geq (1-(1-1/n)^t)m + m^{3/4}] \quad = \quad 2^{-\Omega(\sqrt{m})} \enspace .
\]
By a union bound over all at most $2^n z^n = 2^{\Theta(\sqrt{m})}$ choices of $T, L$, there exist partitions $\vb{A}_1, \dots, \vb{A}_z \in \Sigma$ for which $Z_{T,L} \leq (1-(1-1/n)^t)m + m^{3/4}$ for all $T \subseteq N$ and $L \subseteq [z]$ of size $n$. Fix any such partitions.

Consider a \textsc{PromiseDisjointness} instance with inputs $X_1, \dots, X_n \subseteq [z]$. For $v_1, \dots, v_n : 2^M \to \mathbb{R}_+$, define $v_i(S) \coloneqq \max_{\ell \in X_i} \abs{S \cap A_{\ell,i}}$, and observe that $v_i$ is XOS.

Suppose there exists $\ell \in \bigcap_{i \in N} X_i$. Then the allocation $\vb{A}_\ell$ yields $m$ welfare.

Suppose $X_1, \dots, X_n$ are pairwise disjoint, and consider any $t$-scarce allocation $\vb{A}$ on bidders $T \subseteq N$. Then by disjointness of $X_1, \dots, X_n$ and the above property of $\vb{A}_1, \dots, \vb{A}_z$,
\[
    \sum_{i \in T} \max_{\ell_i \in X_i} \abs{A_i \cap A_{\ell_i,i}} \quad \leq \quad \max_{\vb{\ell} \in \vb{X}} \bigg\vert \bigcup_{i \in T} A_{\ell_i,i} \bigg\vert \quad \leq \quad (1 - (1 - 1/n)^t)m + m^{3/4} \enspace .
\]

Therefore, any algorithm which decides if there exists a $t$-scarce allocation with more than a $((1 - (1 - 1/n)^t)m + m^{3/4})/m = 1 - (1 - 1/n)^t + o(1)$ fraction of the optimal welfare is able to solve \textsc{PromiseDisjointness} on $[z]$, which requires $\Omega(z/n) = 2^{\Omega(\sqrt{m})}$ communication.
\end{proof}

\begin{lemma} \label{lemma:WhatWeExpect}
    Let $n$ and $p, \delta \in (0, 1)$ be constants. Then for $z = 2^{\Theta(c)}$, there exist sets $S_1, \dots, S_z \subseteq C$ of size in $[(1 - \delta)pc, (1 + \delta)pc]$ such that for any disjoint $K, L \subseteq [z]$ of combined size $n$,
    \[
        \bigg\vert \bigg(\bigcap_{\ell \in K} \closure{S}_\ell\bigg) \cap \bigg(\bigcap_{\ell \in L} S_\ell\bigg) \bigg\vert \quad \leq \quad (1 + \delta)(1-p)^{\abs{K}} p^{\abs{L}} c \enspace .
    \]
\end{lemma}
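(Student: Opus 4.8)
The plan is a routine first-moment / union-bound argument, structurally identical to the proof of \Cref{lemma:HardToGatherSets}. I would sample $S_1,\dots,S_z \subseteq C$ independently, where each $S_\ell$ is formed by including every item of $C$ independently with probability $p$. I will then show that, with positive probability, \emph{all} of the $S_\ell$ simultaneously meet the size constraint \emph{and all} of the intersection bounds hold, provided $z = 2^{\gamma c}$ for a sufficiently small constant $\gamma = \gamma(n,p,\delta)$. Since $n,p,\delta$ are all constants, every $\Omega(\cdot)$ and $\Theta(\cdot)$ below has a well-defined constant.

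Two concentration estimates drive the argument. First, for each fixed $\ell$, $\abs{S_\ell} \sim \Binom(c,p)$ has mean $pc$, so a Chernoff bound gives $\Pr[\abs{S_\ell} \notin [(1-\delta)pc,(1+\delta)pc]] \leq 2\exp(-\delta^2 pc/3) = \exp(-\Omega(c))$. Second, fix disjoint $K,L \subseteq [z]$ with $\abs{K}+\abs{L} = n$. For each item $j \in C$, by independence across the sampled sets, $j$ lies in $(\bigcap_{\ell \in K}\overline{S}_\ell) \cap (\bigcap_{\ell \in L} S_\ell)$ with probability exactly $q \coloneqq (1-p)^{\abs{K}} p^{\abs{L}}$, and these events are independent over $j$; hence $Z_{K,L} \coloneqq \abs{(\bigcap_{\ell \in K}\overline{S}_\ell) \cap (\bigcap_{\ell \in L} S_\ell)} \sim \Binom(c,q)$ with mean $qc$. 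Since $q \geq \min\{p,1-p\}^n$ is a positive constant, another Chernoff bound yields $\Pr[Z_{K,L} > (1+\delta)qc] \leq \exp(-\delta^2 qc/3) = \exp(-\Omega(c))$, the $\Omega(\cdot)$ constant depending only on $n,p,\delta$. It is cleaner to fold the size events into the union bound in this way rather than conditioning on the $\abs{S_\ell}$, since that keeps the items independent within each $S_\ell$ and makes the $\Binom(c,q)$ description of $Z_{K,L}$ exact.

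Finally I would union-bound over the $z$ size events and over the at most $2^n z^n$ disjoint pairs $(K,L)$ with $\abs{K}+\abs{L}=n$ (choose the $n$-element union, then $2$-color it), giving total failure probability at most $2z\exp(-\delta^2 pc/3) + 2^n z^n \exp(-\delta^2 qc/3)$. Taking $z = 2^{\gamma c}$ with $\gamma$ small enough that $n\gamma < \min\{\delta^2 p/3, \delta^2 q/3\}\log_2 e$ (using the uniform lower bound $q \geq \min\{p,1-p\}^n$), this is $o(1)$ as $c \to \infty$, so for all sufficiently large $c$ the desired $S_1,\dots,S_z$ exist. The only point that needs care — and the reason $z$ cannot be taken larger than $2^{\Theta(c)}$ with the hidden constant depending on $n$ — is precisely this union bound, where the number of relevant $(K,L)$ pairs is $\approx z^n$, exponential in $nc$, and must be absorbed by the $\exp(-\Omega(c))$ failure probability; beyond that the lemma is immediate.
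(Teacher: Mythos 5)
Your proposal is correct and matches the paper's proof essentially line for line: both sample each $S_\ell$ by including items of $C$ independently with probability $p$, observe $Z_{K,L} \sim \Binom(c,(1-p)^{|K|}p^{|L|})$, apply Chernoff bounds to the sizes and to each $Z_{K,L}$, and union-bound over the $z$ size events and the $\leq 2^n z^n$ choices of $(K,L)$. Your explicit choice of $\gamma$ and the lower bound $q \geq \min\{p,1-p\}^n$ make the constants clearer, but the argument is the same.
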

\begin{proof}
Let $S_1, \dots, S_z \subseteq C$ be sets sampled by independently including each item w.p. $p$. $p, \delta = \Omega(1)$, so a union bound and a Chernoff bound show $\abs{S_\ell} \in [(1 - \delta)pc, (1 + \delta)pc]$ for all $\ell \in [z]$ w.h.p.

Now, fix disjoint $K, L \subseteq [z]$ of combined size $n$, and observe that
\begin{align*}
    Z_{K,L} \quad \coloneqq \quad \bigg\vert \bigg(\bigcap_{\ell \in K} \closure{S}_\ell\bigg) \cap \bigg(\bigcap_{\ell \in L} S_{\ell}\bigg) \bigg\vert \quad &\sim \quad \Binom(c, (1 - p)^{\abs{K}} p^{\abs{L}}) \enspace .
\end{align*}

Then by a Chernoff bound, $\Pr[Z_{K,L} \geq (1 + \delta)(1-p)^{\abs{K}} p^{\abs{L}} y] \leq e^{-\delta^2 (1-p)^{\abs{K}} p^{\abs{L}} c/3} = e^{-\Theta(c)}$. Taking a union bound over all at most $z^n = e^{\Theta(c)}$ choices of $K, L$ completes the proof.
\end{proof}

\begin{theorem}
    Let $\calV$ be a valuation class closed under addition and point-wise maximum, and let there be $n$ $\calV$-type bidders and $c$ single-minded bidders, where $n = O(1)$. Suppose that $\calV$ is strongly $\alpha(\cdot)$-inapproximable in $2^{\Omega(\sqrt{m})}$ communication. Then for any constants $p, \delta \in (0, 1)$, a $\beta(n, p, \delta)$-approximation for \VSM{} uses $\min\{2^{\Omega(\min\{c, m^{1/3}\})}\}$ communication, where (defining $0/\alpha(0) = 0$)
    \[
        \beta(n, p, \delta) \;\; \coloneqq \;\; \max_{t^* \in [n]} \bigg\{\bigg(pn + (1-p) \frac{t^*}{\alpha(t^*)}\bigg)\bigg/\bigg(\sum_{t=0}^n \binom{n}{t}(1-p)^{n-t} p^t \max\bigg\{\frac{t}{\alpha(t)},\, \frac{t^*}{\alpha(t^*)}\bigg\}\bigg)\bigg\} - \delta \enspace .
    \]
\end{theorem}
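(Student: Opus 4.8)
The plan is to adapt the proof of \Cref{thm:HardnessOfApproximation} to the constant-$n$ regime, replacing the small column-sets of size $\varepsilon c$ with sets of \emph{constant} density $p$ furnished by \Cref{lemma:WhatWeExpect}, and calibrating the per-column value of each single-minded bidder to $t^*/\alpha(t^*)$, where $t^*$ is the index attaining the maximum in the definition of $\beta$. First I would reduce, without loss of generality, to $\hat c \coloneqq \min\{c, m^{1/3}\}$ columns with $m = r\hat c$ and $r \ge m^{2/3}$ items per column, discarding surplus columns, single-minded bidders, and items (this only weakens the conclusion to $2^{\Omega(\min\{c, m^{1/3}\})}$, the claimed bound). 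I would arrange the items as $R \times C$ with $\abs R = r$, $\abs C = \hat c$, set $z = 2^{\Theta(\hat c)}$, and fix, via \Cref{lemma:WhatWeExpect} with a sufficiently small constant $\delta_0 = \delta_0(n,p,\delta)$, sets $S_1, \dots, S_z \subseteq C$ of size in $[(1-\delta_0)p\hat c,\, (1+\delta_0)p\hat c]$ satisfying its intersection bound. Then I would couple a \textsc{PromiseDisjointness} instance $X_1, \dots, X_n \subseteq [z]$ with a \textsc{$(\calV, \alpha(\cdot))$-GapWelfare} instance $f_1, \dots, f_n : 2^R \to \mathbb R_+$ so that both are $1$-instances or both are $0$-instances (the setting of \Cref{lemma:TwoProblems}), and define the single-minded bidders $u_j(S) \coloneqq \tfrac{t^*}{\alpha(t^*)}\,\I[S \supseteq R \times \{j\}]$ and the $\calV$-type bidders $v_i(S) \coloneqq \max_{\ell \in X_i} \sum_{j \in S_\ell} f_i(S \cap (R \times \{j\}))$; as in \Cref{thm:HardnessOfApproximation}, $v_i \in \calV$ since $\calV$ is closed under addition and point-wise maximum.

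For the completeness direction ($1$-instances) there is $\ell^* \in \bigcap_i X_i$ and the welfare of $f_1, \dots, f_n$ is $n$; allocating the columns $S_{\ell^*}$ to the $\calV$-bidders according to the optimal allocation of $f_1, \dots, f_n$ within each such column, and the remaining columns to their single-minded bidders, achieves welfare $n\abs{S_{\ell^*}} + \tfrac{t^*}{\alpha(t^*)}(\hat c - \abs{S_{\ell^*}}) \ge (1-\delta_0)\hat c\,(pn + (1-p)\tfrac{t^*}{\alpha(t^*)})$, using $n \ge t^*/\alpha(t^*)$. For the soundness direction ($0$-instances) I would note that a single-minded bidder contributes $t^*/\alpha(t^*)$ when it receives its entire column and $0$ otherwise, while within a column used by $t$ of the $\calV$-bidders the realized $\calV$-welfare is at most $t/\alpha(t)$ by the $0$-instance property of \textsc{GapWelfare}. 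Fixing for each $\calV$-bidder $i$ a value-maximizing index $\ell_i \in X_i$ and noting that disjointness of $X_1, \dots, X_n$ forces $\ell_1, \dots, \ell_n$ to be distinct, the welfare of any allocation is at most $\sum_{j \in C} \max\{t^*/\alpha(t^*),\, g(\abs{\{i : j \in S_{\ell_i}\}})\}$ where $g(t) \coloneqq t/\alpha(t)$. Since \Cref{lemma:WhatWeExpect} bounds the number of columns with profile $T \subseteq [n]$ by $(1+\delta_0)(1-p)^{n-\abs T} p^{\abs T}\hat c$, grouping columns by profile size $t$ bounds the welfare of any allocation by $(1+\delta_0)\hat c \sum_{t=0}^n \binom n t (1-p)^{n-t} p^t \max\{t/\alpha(t),\, t^*/\alpha(t^*)\}$.

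Comparing the two bounds, their ratio for the maximizing $t^*$ is $\tfrac{1-\delta_0}{1+\delta_0}\bigl(\beta(n,p,\delta)+\delta\bigr)$; since $n = O(1)$ forces $\beta(n,p,\delta) = O(1)$ (with the relevant denominator $\Omega(1)$), choosing $\delta_0$ a small enough constant makes this strictly exceed $\beta(n,p,\delta)$, so (after the amplification of \Cref{remark:ApproximationEquivalence}) any $\beta(n,p,\delta)$-approximation algorithm distinguishes the two cases and hence solves the coupled problem. As both \textsc{PromiseDisjointness} on $[z]$ ($\Omega(z/n) = 2^{\Omega(\hat c)}$ communication) and \textsc{$(\calV, \alpha(\cdot))$-GapWelfare} on $r \ge m^{2/3}$ items ($2^{\Omega(\sqrt r)} \ge 2^{\Omega(m^{1/3})} \ge 2^{\Omega(\hat c)}$ communication) require $2^{\Omega(\hat c)}$ communication, \Cref{lemma:TwoProblems} yields the stated $2^{\Omega(\min\{c, m^{1/3}\})}$ lower bound. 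I expect the main obstacle to be bookkeeping rather than conceptual: checking that the $(1 \pm \delta_0)$ slack from \Cref{lemma:WhatWeExpect} — which scales both the completeness welfare and the soundness bound — can be absorbed uniformly over the constantly many choices of $t^*$ and $t$ into the single additive $-\delta$ of $\beta$, and carefully justifying the reduction of an arbitrary algorithmic allocation to the ``whole column to its single-minded bidder, or its items to the $\calV$-bidders'' dichotomy used in the soundness bound (a single-minded bidder being worthless unless it receives its entire column).
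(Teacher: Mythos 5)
Your proof is correct and takes essentially the same approach as the paper's: reduce to $\min\{c, m^{1/3}\}$ columns, invoke \Cref{lemma:WhatWeExpect} for the column-sets, couple a \textsc{PromiseDisjointness} instance with a \textsc{$(\calV,\alpha(\cdot))$-GapWelfare} instance via \Cref{lemma:TwoProblems}, and bound the $0$-instance welfare column-by-column by grouping columns by their profile $T = \{i : j \in S_{\ell_i}\}$. Your use of a separate slack constant $\delta_0$ absorbed into the additive $-\delta$ of $\beta$ is slightly cleaner than the paper's implicit rescaling of $\delta$, but this is a presentational refinement rather than a different route.
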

\begin{proof}
WLOG, let $m = c^3$ (otherwise, discard the additional bidders/items), and label the items $M \coloneqq R \times C$ (where $R \coloneqq [c^2])$. Let $z \coloneqq 2^{\Theta(c)}$, and fix any $t^* \in [n]$.

Let $S_1, \dots, S_z \subseteq C$ be sets satisfying the property of~\Cref{lemma:WhatWeExpect}. Let $X_1, \dots, X_n \subseteq [z]$ be a \textsc{PromiseDisjointness} instance, and let $f_1, \dots, f_n : 2^R \to \mathbb{R}_+$ be a \textsc{$(\calV, \alpha(\cdot))$-GapWelfare} instance, such that both are $1$-instances or both are $0$-instances.

Let $u_1, \dots, u_c$ be single-minded bidders defined $u_j(S) \coloneqq t^*/\alpha(t^*) \cdot \mathbbm{1}(S \supseteq (X \times \{j\})$. Let $v_1, \dots, v_n$ be defined
\[
    v_i(S) \quad \coloneqq \quad \max_{\ell \in X_i} \sum_{j \in S_\ell} f_i(S \cap (R \times \{j\})) \enspace ,
\]
and observe that $v_i \in \calV$ because $\calV$ is closed under addition and point-wise maximum.

If both communication problems are $1$-instances, there exists $\ell \in \bigcap_{i \in [n]} X_i$, and the optimal welfare of $f_1, \dots, f_n$ is $n$. Then $\abs{S_\ell} \geq (1-\delta)pc$ by \Cref{lemma:WhatWeExpect}, and allocating $R \times S_\ell$ optimally among bidders $v_1, \dots, v_n$ and allocating the remaining columns to $u_1, \dots, u_c$ yields $\abs{S_\ell} \cdot n + (c - \abs{S_\ell}) \cdot t^*/\alpha(t^*) \geq (1-\delta)(pn + (1-p) t^*/\alpha(t^*))c$ welfare.

If both communication problems are $0$-instances, let $\OPT(T)$ denote the optimal welfare among bidders $T$ in a single column, and observe that $\OPT(T) \leq \abs{T}/\alpha(\abs{T})$ because we are in a $0$-instance of \textsc{$(\calV, \alpha(\cdot))$-GapWelfare}. Then the optimal welfare is
\[
    \OPT \quad = \quad \max_{\vb{\ell} \in \vb{X}} \bigg\{\sum_{T \subseteq [n]} \bigg(\bigg\vert \bigg(\bigcap_{i \in T} S_{\ell_i}\bigg) \cap \bigg(\bigcap_{i \not\in T} \closure{S}_{\ell_i}\bigg) \bigg\vert \max\bigg\{\frac{\abs{T}}{\alpha(\abs{T})},\, \frac{t^*}{\alpha(t^*)}\bigg\}\bigg)\bigg\} \enspace .
\]

Since $X_1, \dots, X_n$ are disjoint, $\vb{\ell} \in \vb{X}$ has distinct coordinates, and hence by~\Cref{lemma:WhatWeExpect},
\begin{align*}
    \OPT \quad &\leq \quad \sum_{T \subseteq [n]} \bigg((1+\delta)(1-p)^{n-t} p^t c \max\bigg\{\frac{\abs{T}}{\alpha(\abs{T})},\, \frac{t^*}{\alpha(t^*)}\bigg\}\bigg) \\
    &= \quad (1 + \delta) \sum_{t=0}^n \binom{n}{t} (1-p)^{n-t} p^t c \max\bigg\{\frac{\abs{T}}{\alpha(\abs{T})},\, \frac{t^*}{\alpha(t^*)}\bigg\}\bigg) \enspace .
\end{align*}

Appropriately scaling $\delta$, we see that $\beta_\delta(n)$ is precisely the ratio (minus a little bit) between when the communication problems are both $1$-instances and both $0$-instances, so since \textsc{PromiseDisjointness} requires $\Omega(z/n) = 2^{\Omega(c)}$ communication and \textsc{$(\calV, \alpha(\cdot))$-GapWelfare} requires $2^{\Omega(\sqrt{r})} = 2^{\Omega(c)}$ communication, we have by~\Cref{lemma:TwoProblems} that a $\beta(n, p, \delta)$-approximation for \VSM{} requires $2^{\Omega(c)}$ communication.
\end{proof}

\SeparationSASM*
\begin{proof}
Let $n = 3$, $p = 4/5$, $\delta = 0.01$, and $t^* = 1$. By~\Cref{thm:SubadditiveStronglyInapproximable}, we have that $t/\alpha(t) = 1 + \I[t = 3]/2 + o(1)$. Plugging everything in, we get a hardness of approximation of at least
\[
    \frac{(4/5)(3) + (1/5) (1 + o(1))}{(4/5)^3 (3/2 + o(1)) + (1 - (4/5)^3) (1 + o(1))} - 0.01 \quad \geq \quad 2.06 \enspace . \qedhere
\]
\end{proof}

\SeparationXOSSM*
\begin{proof}
Let $\beta(n) \coloneqq \beta(n, n/(n+1), 0.001)$. The theorem is verified by a direct computation of $\beta(n)$ for all $n \in [2, 150]$ (using~\Cref{prop:XOSStronglyInapproximable2} to compute $t/\alpha(t)$), and additionally verifying that $\beta(150) \geq e/(e - 1) + 0.001$. Sample code can be found in~\Cref{sec:code}.
\end{proof}

\section{Strong Inapproximability of Subadditive Valuations}
\label{sec:inapproximable}

In this section, we prove two strong inapproximability results for subadditive valuations, which are used to prove the hardness of approximation for \SASM{} and the separation in approximability between \SASM{} and \SA{}. We first restate the results and relevant definitions.

\Scarce*

\StronglyInapproximable*

\SubadditiveStronglyInapproximableProp*

\SubadditiveStronglyInapproximableThm*

The first result involves a standard reduction from promise disjointness using ``one-two'' subadditive valuations. The second result is far more involved, with a reduction from the multi-player generalization of the \EFS{} communication problem (the $2$-player variant having been introduced in~\cite{EzraFNTW19}) using modified set cover subadditive valuations (also introduced in~\cite{EzraFNTW19}).

Notice the qualitative differences between the two results: the first gives stronger convergence to $2$-inapproximability when $t$ is large, but suffers a constant gap when $t = O(1)$. This is suitable for our hardness of approximation results, where we deal with large numbers of bidders.

On the other hand, the second result gives a $(2-o(1))$-inapproximability which holds for all $t \in [2, n]$ when $n = 2^{o(\log(m))}$. In other words, the result is stronger in the sense that for \emph{any} number of bidders, the sub-problem on \emph{any} non-trivial subset of the bidders is hard. This is suitable for our separation results, where we need to account for small $n$ to show that welfare maximization over \SASM{} is strictly harder than over \SA{} for all $n \geq 3$. The property of ``every sub-problem is as hard as the main problem'' may also be of interest in other settings.

\subsection{Strong \texorpdfstring{$(2 - 2/(t+1))$}{(2 - 2/(t+1))}-Inapproximability}

\begin{proof}[Proof of~\Cref{prop:SubadditiveStronglyInapproximable}]
For $z = 2^{\Theta(m/n^2)}$, sample uniformly random partitions $\vb{A}_1, \dots,\vb{A}_z \in \Sigma$. For all distinct $i, j \in N$ and distinct $k, \ell \in [z]$, we have $Z_{i,j,k,\ell} \coloneqq \abs{A_{k,i} \cap A_{\ell,j}} \sim \Binom(m, 1/n^2)$. Then by a Chernoff bound, $\Pr[Z_{i,j,k,\ell} = 0] \leq 2^{-\Omega(m/n^2)}$. Union bounding over all at most $n^2 z^2 = 2^{\Theta(m/n^2)}$ choices of $i, j, k, \ell$, we have that w.h.p., $Z_{i,j,k,\ell} \ne 0$ for all distinct $i, j \in N$ and distinct $k, \ell \in [z]$ simultaneously. Therefore, there exist partitions $\vb{A}_1, \dots, \vb{A}_z \in \Sigma$ satisfying this ``Intersecting Property;'' fix any such partitions.

Consider a \textsc{PromiseDisjointness} instance with inputs $X_1, \dots, X_n \subseteq [z]$. For $v_1, \dots, v_n : 2^M \to \mathbb{R}_+$, define $v_i(S) \coloneqq \I[S \ne \emptyset] + \max_{\ell \in X_i} \I[S \supseteq A_{\ell,i}]$, and observe that $v_i$ is subadditive.

\begin{lemma} \label{lemma:KindaAdditive}
    If $X_1, \dots, X_n$ are pairwise disjoint, then for any allocation $\vb{A} \in \Sigma$ and $i, j \in N$, $v_i(A_i) + v_j(A_j) \leq 3$.
\end{lemma}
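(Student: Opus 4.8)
The plan is to argue by contradiction, leaning entirely on the ``one-two'' shape of the $v_i$'s together with the Intersecting Property of the partitions $\vb{A}_1, \dots, \vb{A}_z$ fixed at the start of the reduction. First I would record the elementary structural fact that each $v_i$ is $\{0,1,2\}$-valued: $v_i(S) = 0$ exactly when $S = \emptyset$; $v_i(S) = 2$ exactly when $S \supseteq A_{\ell,i}$ for some $\ell \in X_i$; and $v_i(S) = 1$ in every remaining case. Consequently, for distinct bidders $i \ne j$ the inequality $v_i(A_i) + v_j(A_j) \le 3$ can only fail if \emph{both} summands equal $2$, so it suffices to rule that out.

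So suppose toward a contradiction that $v_i(A_i) = v_j(A_j) = 2$ for some $i \ne j$. Then there is an index $k \in X_i$ with $A_{k,i} \subseteq A_i$ and an index $\ell \in X_j$ with $A_{\ell,j} \subseteq A_j$. Because $X_1, \dots, X_n$ are pairwise disjoint and $i \ne j$, we have $X_i \cap X_j = \emptyset$, and hence $k \ne \ell$. Now I would invoke the Intersecting Property, which guarantees $A_{k,i} \cap A_{\ell,j} \ne \emptyset$ for all distinct bidders $i,j$ and all distinct partition indices $k,\ell$. Since $A_{k,i} \subseteq A_i$ and $A_{\ell,j} \subseteq A_j$, this forces $A_i \cap A_j \ne \emptyset$, contradicting the fact that $\vb{A} \in \Sigma$ is an allocation and therefore has pairwise disjoint parts. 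This completes the argument.

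There is no genuine obstacle here; the proof is a two-line deduction once the Intersecting Property is available. The only point requiring a moment's care is the derivation $k \ne \ell$: the Intersecting Property is stated only for \emph{distinct} partition indices, so the disjointness hypothesis on $X_1, \dots, X_n$ is doing exactly the work of ensuring the two special blocks witnessing value $2$ come from different partitions. This is also why the statement is phrased for two distinct bidders rather than a single one — for a single bidder, $v_i(A_i)$ alone can already equal $2$, so $v_i(A_i) + v_i(A_i) \le 3$ simply does not hold.
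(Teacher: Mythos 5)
Your proof is correct and follows exactly the same contradiction argument as the paper: assume both bidders get value $2$, extract the witnessing blocks $A_{k,i}$ and $A_{\ell,j}$, use disjointness of $X_i, X_j$ to get $k \ne \ell$, and invoke the Intersecting Property to contradict feasibility of $\vb{A}$. You are slightly more careful than the paper on two minor points: you explicitly note that the lemma only makes sense for $i \ne j$ (which the paper leaves implicit), and you write the witnessing blocks as $A_{k,i}, A_{\ell,j}$ rather than the paper's typo $X_{k,i}, X_{\ell,j}$.
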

\begin{proof}
Suppose for contradiction that $v_i(A_i) = v_j(A_j) = 2$. Then there exist $k \in X_i, \ell \in X_j$ such that $A_i \supseteq X_{k,i}, A_j \supseteq X_{\ell,j}$, and $k \ne \ell$ because $X_i \cap X_j = \emptyset$. But by the Intersecting Property, we know that $X_{k,i} \cap X_{\ell,j} \ne \emptyset$, contradicting that $\vb{A}$ was a valid allocation.
\end{proof}

Suppose there exists $\ell \in \bigcap_{i \in N} X_i$. Then the allocation $\vb{A}_\ell$ yields $2n$ welfare.

Suppose $X_1, \dots, X_n$ are pairwise disjoint, and consider any $t$-scarce allocation. If no bidder has value $2$, then the welfare is at most $t$. If some bidder has value $2$, then by~\Cref{lemma:KindaAdditive}, no other bidder has value $2$, so the welfare is at most $t+1$.

Therefore, any algorithm which decides if there exists a $t$-scarce allocation with more than a $(t+1)/(2n)$ fraction of the optimal welfare is able to solve \textsc{PromiseDisjointness} on $[z]$, which requires $\Omega(z/n) = 2^{\Omega(m/n^2)}$ communication.
\end{proof}

\subsection{Strong \texorpdfstring{$(2 - \I[t = 1] - O(\log(n)/\log(m)))$}{(2 - O(log(n)/log(m))}-Inapproximability}

As mentioned previously, the second result uses a reduction from a custom communication problem, \EFS{}. We first formally state the problem.

For a set $S \in 2^M$, let $\closure{S}$ denote its complement, and for a list of sets $\vb{S} \in (2^M)^z$, let $\vb{\closure{S}}$ denote its set-wise complement.

\begin{definition}[Set Cover]
    A collection of sets $\calS \subseteq 2^M$ \emph{covers} a set $T \subseteq M$ if $\bigcup_{S \in \calS} S \supseteq T$. For a collection of sets $\calS \subseteq 2^M$ covering $M$, define $\varphi_\calS(T)$ to be the minimum number of sets from $\calS$ needed to cover $T$.
\end{definition}

\begin{definition}[$\lambda$-Sparse]
    For $\lambda \in \mathbb{Z}_+$, a collection of sets $\calS \subseteq 2^M$ is \emph{$\lambda$-sparse} if $\varphi_\calS(M) > \lambda$.
\end{definition}

\begin{definition}[\FS{}]
    For $m, n \in \mathbb{Z}_+$ such that $m = n \bmod{n^2}$, \FS{m,n} is an $n$ player communication problem where:
    \begin{itemize}[itemsep=0pt,topsep=0pt]
        \item \textbf{Input:} $X_1, \dots, X_n \in 2^M$.

        \item \textbf{Output:} $\textsc{Disj}(X_1, \dots, X_n)$.

        \item \textbf{Promise 1:} $\abs{X_i} = m/n$ for all $i \in [n]$.
        
        \item \textbf{Promise 2:} Either $\abs{X_1 \cap \dots \cap X_n} = m/n - n$, or $X_1, \dots, X_n$ are pairwise disjoint.
    \end{itemize}
\end{definition}

\begin{definition}[\EFS{}] \label{def:EFS}
    For $m, n, z \in \mathbb{Z}_+$ such that $m = n \bmod{n^2}$, \EFS{m,n,z} is an $n$ player communication problem where:
    \begin{itemize}[itemsep=0pt,topsep=0pt]
        \item \textbf{Input:} $\vb{X}_1, \dots, \vb{X}_n \in (2^M)^z$.

        \item \textbf{Output:} $\bigvee_{\ell \in [z]} \textsc{Disj}(X_{1,\ell}, \dots, X_{n,\ell})$.

        \item \textbf{Promise 1:} $\abs{X_{i,\ell}} = m/n$ for all $i \in [n], \ell \in [z]$.

        \item \textbf{Promise 2:} For each $\ell \in [z]$, either $\abs{X_{1,\ell} \cap \dots \cap X_{n,\ell}} = m/n - n$, or $X_{1,\ell}, \dots, X_{n,\ell}$ are pairwise disjoint.

        \item \textbf{Promise 3:} $\vb{\closure{X}}_i$ is $\log(m^{1/3})/\log(2n)$-sparse for all $i \in [n]$.

        \item \textbf{Promise 4:} For all $i \in [n]$ and $S \subseteq M$ of size $n\log(m)$, there exists $\ell \in [z]$ such that $S \subseteq \closure{X}_{i,\ell}$.
    \end{itemize}
\end{definition}

Observe that \EFS{m,n,z} is the OR of $z$ \FS{m,n} instances, with the inclusion of an additional promise on each player's input. The promises in \EFS{} are necessary to construct a hard subadditive instance from the communication problem. The following is a lower bound on the communication complexity of \EFS{}, which we prove in~\Cref{sec:exists-far-sets}.

\begin{theorem}[name=,restate=ExistsFarSets] \label{thm:EFS}
    For $z = 2^{\Theta(\sqrt{m})}$, the randomized communication complexity of \emph{\EFS{m,n,z}} is $2^{\Omega(\sqrt{m}-n^2\log(m))}$.
\end{theorem}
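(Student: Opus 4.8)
The plan is to prove \Cref{thm:EFS} by a direct-sum argument reducing $\EFS{m,n,z}$ to a single copy of $\FS{m,n}$, following the template of~\cite{EzraFNTW19} but replacing internal information complexity with \emph{external} information complexity. As the introduction notes, this replacement is forced: for $n \ge 3$ players the relevant single-coordinate (``collapsed'') distribution for $\FS{m,n}$ can be solved with zero internal information cost, so an internal-information direct sum is vacuous, whereas the external information cost of a protocol $\Pi$ — the mutual information $\MI(\Pi;\vb X_1,\dots,\vb X_n)$ between the transcript and the \emph{entire} input tuple — is still a lower bound on communication and is still superadditive over independent coordinates.

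First I would fix a hard input distribution $\mu$ that is, up to an $o(1)$-probability event, supported on valid $\EFS{m,n,z}$ inputs. The natural choice plants a uniformly random pivot coordinate $\ell^\star \in [z]$: coordinate $\ell^\star$ is a fresh $\FS{m,n}$ instance which is pairwise disjoint with probability $1/2$ and $(m/n-n)$-intersecting with probability $1/2$, while every other coordinate is an independent $(m/n-n)$-intersecting $\FS{m,n}$ instance built from a shared common core of size $m/n-n$ together with small private padding. Under $\mu$ the value of $\EFS$ equals exactly the indicator that coordinate $\ell^\star$ is disjoint, so any correct protocol must in effect locate and read $\ell^\star$. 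The delicate point is Promises~3 and~4, which constrain each player's entire list $\vb X_i$ (sparsity of the complements, and the existence of a coordinate whose complement avoids any given small set); I would choose the cores and paddings so that a random draw from $\mu$ satisfies all four promises with probability $1-o(1)$ and then condition on that event, which perturbs correctness and all information quantities only negligibly.

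Next is the direct sum. By the standard superadditivity of external information complexity (conditioning on the pivot location and exploiting that the $z$ coordinates are conditionally independent under $\mu$), the communication $C$ of any protocol solving $\EFS{m,n,z}$ with error $1/3$ is at least, up to lower-order factors, $z$ times the external information any protocol must reveal about a single $\FS{m,n}$ instance it distinguishes (with the other coordinates frozen to a typical promise-respecting value). The $n^2\log m$ term in the exponent enters exactly here: because of the $m \equiv n \pmod{n^2}$ constraint together with the sparsity and avoidance promises, only a $\Omega(1/m^{O(n^2)})$ fraction of pivot positions admit a typical promise-respecting environment into which one coordinate can be embedded, so the amortization is effectively over $\Omega(z/m^{O(n^2)})$ coordinates.

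The crux — and the step I expect to be the main obstacle — is the single-coordinate bound: distinguishing a pairwise-disjoint $\FS{m,n}$ instance from one with a common intersection of size $m/n-n$ requires $\Omega(1)$ external information under the collapsed conditional distribution. This is a multi-party number-in-hand ``unique-intersection''-type statement, but the intersecting case is very far from a product distribution across players (all $n$ sets nearly coincide with the common core), so the one-coordinate AND/disjointness information arguments of Gronemeier and Jayram must be adapted rather than invoked verbatim; in particular one must build the single-element collision distribution and its cut-and-paste embedding so that it is precisely the \emph{external} measure (not the internal one) that is bounded below. Granting this $\Omega(1)$ per-coordinate external-information bound, the direct sum gives $C = \Omega(z)/m^{O(n^2)} = 2^{\Omega(\sqrt m - n^2\log m)}$ since $z = 2^{\Theta(\sqrt m)}$, and Yao's minimax principle (passing from the distributional lower bound over $\mu$ to randomized communication complexity) completes the proof.
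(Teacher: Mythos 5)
Your overall architecture — switch from internal to external/conditional information complexity to dodge the secure-multiparty-computation obstruction, then run a direct-sum argument reducing $\EFS{m,n,z}$ to a single coordinate of $\FS{m,n}$ and check that the extra promises of $\EFS{}$ hold with high probability under the hard distribution — is exactly the paper's (see \Cref{thm:DirectSum}, \Cref{cor:DirectSum}, and \Cref{prop:Safety}). However, two of your claims do not survive contact with the actual proof, and they are the claims doing the real work.

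First, you assert an $\Omega(1)$ per-coordinate external-information lower bound for $\FS{m,n}$ on the collapsed distribution, and you then route the $n^2\log m$ loss through the claim that ``only a $\Omega(1/m^{O(n^2)})$ fraction of pivot positions admit a typical promise-respecting environment.'' Both halves are wrong relative to the paper, and they are inconsistent with each other. In \Cref{prop:Safety} the paper shows $\mu,\calX$ are $(z,\varepsilon)$-safe with $\varepsilon = 2^{-\Theta(\sqrt m)}$: under the product distribution $\mu^*$, essentially \emph{every} pivot position admits a promise-respecting environment. There is no $m^{-O(n^2)}$ loss in the embedding. If you actually had an $\Omega(1)$ per-coordinate bound, the direct sum (\Cref{cor:DirectSum}) would give $\CC_\delta(\EFS) \geq z \cdot \Omega(1) = 2^{\Omega(\sqrt m)}$, strictly stronger than the theorem. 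The $n^2 \log m$ loss in the statement is there because the per-coordinate bound the paper can prove is only $\CIC_\mu(\FS{m,n}) > m^{-(n^2+9)}$ (\Cref{cor:LowInformationMeansBadProtocol}), and that $m^{-\Theta(n^2)}$ arises entirely from the counting in \Cref{lemma:LowInformationMeansFewBrokenLinks}: averaging $\KL(\Pi(\vb X)\relative\Pi(D))$ over all inputs $\vb X \supseteq D$ gives at most an $m^{n^2}$ overcount relative to averaging only over links, since the fan-out from $D$ to the $n$ sets $X_i$ is $\binom{m-k+n}{n}^n \le m^{n^2}$. The loss lives entirely in the single-coordinate information bound, not in which coordinates are usable.

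Second, the single-coordinate bound itself — the piece you flag as the main obstacle — is proved in the paper by a chain/hybrid argument that is qualitatively different from the Gronemeier--Jayram-style unique-intersection embedding you propose adapting. The paper builds a chain of ``links'' $\vb L_{n+1},\dots,\vb L_{(k-1)(n-1)+n+1}$ in which consecutive links differ in exactly one player's input, shows via counting (\Cref{lemma:FewBrokenLinksMeansUnbrokenChain}) that a low-information protocol must have a $\gamma$-unbroken chain, then via Pinsker (\Cref{lemma:UnbrokenChainMeansCloseLinks}) concludes consecutive links have TV-close transcripts, and finally — the actual technical heart, which has no analogue in your sketch — uses the product structure of transcript probabilities (\Cref{lemma:RandomizedRectangle}) together with a nontrivial telescoping inequality (\Cref{prop:Chaining}, \Cref{prop:Rectangling}) to conclude that a $1$-instance $\vb X^{(1)}$ assembled by mixing coordinates from distant links is transcript-close to the $0$-instance $\vb X^{(0)}$. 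This ``mix-and-match across the chain'' step is what lets the argument cross from $0$-instances to a $1$-instance, and it is precisely where the non-product, near-coincident structure of the intersecting distribution is handled — not by adapting an AND-gadget information bound. So, while you correctly identified the internal-to-external switch and the direct-sum skeleton, the proposal has a genuine gap at the single-coordinate lower bound and a misdiagnosis of where and why the $n^2\log m$ is paid.
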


Our lower bound construction is a natural generalization of the $2$-bidder~\cite{EzraFNTW19} construction to $n$ bidders, and so follows a similar proof sketch. Let $\lambda \coloneqq \log(m^{1/3})/\log(2n)$ be an even integer. For $\vb{X} \in (2^M)^z$, the idea is to design a subadditive function $f_{\vb{X}}$ such that
\begin{enumerate}[topsep=4pt]
    \item $f_{\vb{X}}(X_\ell) \approx \lambda$ for all $\ell \in [z]$.
    
    \item $f_{\vb{X}}(S) + f_{\vb{X}}(\bar{S}) = \lambda$ for all $S \subseteq M$.
\end{enumerate}

Then for an \EFS{m,n,z} input $\vb{X}_1, \dots, \vb{X}_n$, we have the hard instance $f_{\vb{X}_1}, \dots, f_{\vb{X}_n}$. If we are in a $1$-instance, then by $(1)$, we can achieve welfare $\approx n\lambda$ by allocating $X_{1,\ell}, \dots, X_{n,\ell}$ for the $\ell \in [z]$ where the sets are disjoint. If we are in a $0$-instance, suppose for contradiction that there exists a $t$-scarce allocation achieving much more than $\lambda t/2$ welfare. Then some bidder must receive a set $S$ with at least $\lambda/2$ value, and everyone else receives at most $\closure{S}$. But since in a $0$-instance, the inputs $\vb{X}_1, \dots, \vb{X}_n$ are ``almost equal,'' we have by $(2)$ that everyone else gets value at most $\approx \lambda/2$ value, and hence the optimal $t$-scarce allocation cannot exceed $\approx \lambda t/2$ welfare.

We now formally define the function family and make these notions rigorous.

\begin{definition}
    For $\vb{X} \in (2^M)^z$, define $f_{\vb{X}} : 2^M \to \mathbb{R}_+$ by
    \[
        f_{\vb{X}}(S) \quad \coloneqq \quad \begin{cases}
            \varphi_{\vb{\closure{X}}}(S) & \varphi_{\vb{\closure{X}}}(S) < \lambda/2 \\
            \lambda - \varphi_{\vb{\closure{X}}}(\closure{S}) & \varphi_{\vb{\closure{X}}}(\closure{S}) < \lambda/2 \\
            \lambda/2 & o.w.
        \end{cases} \enspace .
    \]
\end{definition}

\begin{proposition}[\cite{EzraFNTW19}] \label{prop:SubadditiveFunction}
    If $\vb{\closure{X}} \in (2^M)^z$ is $\lambda$-sparse,
    \begin{enumerate}[itemsep=0pt,topsep=0pt]
        \item $f_{\vb{X}}$ is well-defined and subadditive.
        \item $f_{\vb{X}}(X_\ell) = \lambda - 1$ for all $\ell \in [z]$.
        \item $f_{\vb{X}}(S) > \lambda/2$ implies $f_{\vb{X}}(S) = \lambda - \varphi_{\vb{\closure{X}}}(\closure{S})$.
    \end{enumerate}
\end{proposition}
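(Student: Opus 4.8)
Here is how I would prove \Cref{prop:SubadditiveFunction}.

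The plan is to abbreviate $\varphi \coloneqq \varphi_{\vb{\closure X}}$ and to rely on the two elementary properties of the set-cover function that carry the whole argument: $\varphi$ is monotone, and $\varphi$ is subadditive, since the union of a minimum cover of $U$ with a minimum cover of $V$ covers $U \cup V$, i.e.\ $\varphi(U \cup V) \le \varphi(U) + \varphi(V)$. We also have $\varphi(\emptyset) = 0$ and, because $\vb{\closure X}$ is $\lambda$-sparse, $\varphi(M) > \lambda$. For well-definedness, I would note that the three cases defining $f_{\vb X}(S)$ are selected by whether $\varphi(S) < \lambda/2$, whether $\varphi(\closure S) < \lambda/2$, or neither, and that the first two are mutually exclusive: were both to hold, subadditivity would give $\varphi(M) \le \varphi(S) + \varphi(\closure S) < \lambda/2 + \lambda/2 = \lambda$, contradicting $\lambda$-sparseness. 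Claim $(3)$ is then immediate, since if $f_{\vb X}(S) > \lambda/2$ neither the first case (value $< \lambda/2$) nor the third (value $= \lambda/2$) can apply, leaving $f_{\vb X}(S) = \lambda - \varphi(\closure S)$.

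For claim $(2)$, the key observation is that $\closure{X}_\ell$ is \emph{itself} one of the sets in the collection $\vb{\closure X}$, so it is covered by a single set, giving $\varphi(\closure{X}_\ell) = 1$ (it is nonempty since $\abs{X_\ell} = m/n < m$). Then $\varphi(X_\ell) \ge \varphi(M) - \varphi(\closure{X}_\ell) > \lambda - 1 \ge \lambda/2$, so the first case is excluded and the second applies, yielding $f_{\vb X}(X_\ell) = \lambda - \varphi(\closure{X}_\ell) = \lambda - 1$.

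The remaining and main task is subadditivity, $f_{\vb X}(S) + f_{\vb X}(T) \ge f_{\vb X}(S \cup T)$, which I would establish by a case analysis on the \emph{type} of each of $S$, $T$, $S \cup T$, calling a set $U$ of type $1$ if $\varphi(U) < \lambda/2$, type $2$ if $\varphi(\closure U) < \lambda/2$, and type $3$ otherwise (so that $f_{\vb X}$ equals $\varphi(U)$, $\lambda - \varphi(\closure U)$, or $\lambda/2$ respectively). The skeleton is: (i) if $S \cup T$ has type $1$ then $S$ and $T$ do too by monotonicity, and the claim is exactly subadditivity of $\varphi$; (ii) if $S \cup T$ has type $3$, then $f_{\vb X}(S \cup T) = \lambda/2$, and either some one of $S, T$ has type $2$ or $3$ and alone contributes $\ge \lambda/2$, or both have type $1$ and $f_{\vb X}(S) + f_{\vb X}(T) = \varphi(S) + \varphi(T) \ge \varphi(S \cup T) \ge \lambda/2$; (iii) if $S \cup T$ has type $2$, one splits further on the types of $S$ and $T$, and each branch reduces to one of three elementary facts — the disjoint decomposition $\closure S = (\closure S \cap \closure T) \sqcup (T \setminus S)$ (hence $\varphi(\closure S) \le \varphi(\closure S \cap \closure T) + \varphi(T)$), the bound $\varphi(M) \le \varphi(S) + \varphi(T) + \varphi(\closure S \cap \closure T)$ together with $\varphi(M) > \lambda$ (used when $S$ and $T$ both have type $1$), and the inclusion $\closure S \subseteq T \cup (\closure S \cap \closure T)$, used to transfer the lower bound $\varphi(\closure S) \ge \lambda/2$ when $S$ has type $3$. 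I expect the genuine difficulty to lie entirely in organizing this case split; no single inequality is hard, and a useful sanity check that the approach loses nothing is that $f_{\vb X}(S) + f_{\vb X}(\closure S) = \lambda = f_{\vb X}(M)$ always, so the subadditivity inequality is tight exactly for complementary pairs. I do not see a shortcut that avoids the case analysis: $f_{\vb X}$ is genuinely subadditive but not XOS (the set-cover function is not fractionally subadditive), so one cannot simply exhibit it as a minimum of additive functions.
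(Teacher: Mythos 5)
The paper does not prove \Cref{prop:SubadditiveFunction}; it cites it from~\cite{EzraFNTW19} as a black box, so there is no internal argument to compare against. Your proposal supplies a self-contained proof, and it is correct.

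Your handling of well-definedness (the first two cases are mutually exclusive because $\varphi(S) + \varphi(\closure S) \ge \varphi(M) > \lambda$), of claim (3) (immediate from the case structure), and of claim (2) (via $\varphi(\closure X_\ell) = 1$ and $\varphi(X_\ell) > \lambda - 1$) is sound. The subadditivity case split also checks out: when $S \cup T$ has type $1$, monotonicity forces $S, T$ to have type $1$ and subadditivity of $\varphi$ finishes; when $S \cup T$ has type $3$, any type-$2$/type-$3$ summand already contributes $\ge \lambda/2$, and if both are type $1$ then $\varphi(S) + \varphi(T) \ge \varphi(S \cup T) \ge \lambda/2$; when $S \cup T$ has type $2$, the six sub-cases each reduce to one of $\varphi(\closure S) \le \varphi(T) + \varphi(\closure S \cap \closure T)$, the sparsity bound $\varphi(S) + \varphi(T) + \varphi(\closure S \cap \closure T) \ge \varphi(M) > \lambda$, or (for the type $2$/type $2$ case) simply $\varphi(\closure S) + \varphi(\closure T) < \lambda$ directly from the type definitions, which is worth stating explicitly since your listed "fact 1'' involving $\varphi(T)$ does not directly cover it.

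Two small points worth flagging. First, claim (2) genuinely requires $X_\ell \ne M$, which does not follow from $\lambda$-sparseness alone; you correctly invoke the promise $\abs{X_\ell} = m/n < m$ from the surrounding context, but a truly context-free statement of the proposition would need this as an explicit hypothesis. Second, when $\lambda = 2$ you have $\varphi(\closure X_\ell) = 1 = \lambda/2$, so the second case of the definition does \emph{not} apply and the third does; the conclusion $f_{\vb X}(X_\ell) = \lambda/2 = 1 = \lambda - 1$ still holds numerically, but your stated reasoning (``the second applies'') is off by one in this boundary case. Neither issue affects the usage in the paper, where $\lambda$ is taken as a large even integer.
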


We are now ready for the proof of the main result.

\begin{proof}[Proof of~\Cref{thm:SubadditiveStronglyInapproximable}]
Let $z = 2^{\Theta(\sqrt{m})}$, and consider an \EFS{m,n,z} instance $\vb{X}_1, \dots, \vb{X}_n \in (2^M)^z$. For valuations $v_1, \dots, v_n : 2^M \to \mathbb{R}_+$, define $v_i \coloneqq f_{\vb{X}_i}$, and observe that $v_i$ is subadditive by Promise $(3)$ of~\Cref{def:EFS} and Property $(1)$ of~\Cref{prop:SubadditiveFunction}.

\begin{lemma} \label{lemma:AlmostPerfectlyAdditive}
    If $\vb{X}_1, \dots, \vb{X}_n$ is a $0$-instance, then for any allocation $\vb{A} \in \Sigma$ and $i, j \in N$, $v_i(A_i) + v_j(A_j) \leq \lambda + 2$.
\end{lemma}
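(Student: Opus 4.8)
The plan is to reduce to a single interesting case and then do a ``cover translation'' argument. Assume without loss of generality that $v_i(A_i) \ge v_j(A_j)$. If $v_i(A_i) \le \lambda/2 + 1$ then both values are at most $\lambda/2+1$, so $v_i(A_i)+v_j(A_j) \le \lambda+2$ and we are done; hence the only case to treat is $v_i(A_i) > \lambda/2$. Since $\vb{\closure{X}}_i$ is $\lambda$-sparse by Promise $(3)$ of \Cref{def:EFS}, Property $(3)$ of \Cref{prop:SubadditiveFunction} applies and gives $v_i(A_i) = \lambda - \varphi_{\vb{\closure{X}}_i}(\closure{A_i})$, so $k \coloneqq \varphi_{\vb{\closure{X}}_i}(\closure{A_i}) = \lambda - v_i(A_i) < \lambda/2$, and there are indices $\ell_1,\dots,\ell_k$ with $\closure{A_i} \subseteq \bigcup_{s=1}^{k} \closure{X_{i,\ell_s}}$. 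Because $\vb{A}$ is an allocation, $A_j \subseteq \closure{A_i}$, so these same $k$ sets of $\vb{\closure{X}}_i$ cover $A_j$.

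The heart of the argument is to re-cover $A_j$ by not-many-more sets of $\vb{\closure{X}}_j$, using that in a $0$-instance the inputs are almost equal in every coordinate. For each $\ell$ we have $\closure{X_{i,\ell}} \setminus \closure{X_{j,\ell}} = X_{j,\ell} \setminus X_{i,\ell}$, and this has size at most $n$: by Promise $(2)$ (in a $0$-instance) the common core $X_{1,\ell} \cap \dots \cap X_{n,\ell}$ has size $m/n-n$, while by Promise $(1)$ each $X_{k,\ell}$ has size $m/n$ and contains this core, so $X_{j,\ell}$ has at most $n$ elements outside it. Thus $\closure{X_{i,\ell_s}} \subseteq \closure{X_{j,\ell_s}} \cup E_s$ with $\abs{E_s} \le n$, and for $E \coloneqq \bigcup_{s=1}^{k} E_s$ we get $\abs{E} \le kn < (\lambda/2)n \le n\log(m)$ (using $\lambda \le \log(m)$). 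Padding $E$ to a set of size exactly $n\log(m)$ and invoking Promise $(4)$ of \Cref{def:EFS}, there is an index $\ell^*$ with $E \subseteq \closure{X_{j,\ell^*}}$. Chaining containments, $A_j \subseteq \bigcup_s \closure{X_{i,\ell_s}} \subseteq \big(\bigcup_s \closure{X_{j,\ell_s}}\big) \cup E \subseteq \big(\bigcup_s \closure{X_{j,\ell_s}}\big) \cup \closure{X_{j,\ell^*}}$, a union of $k+1$ sets from $\vb{\closure{X}}_j$, so $\varphi_{\vb{\closure{X}}_j}(A_j) \le k+1$.

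To finish I would record the elementary inequality $f_{\vb{Y}}(S) \le \varphi_{\vb{\closure{Y}}}(S)$, valid for every $\lambda$-sparse $\vb{\closure{Y}}$ and every $S$: it is an equality in the first case of the definition of $f_{\vb{Y}}$; in the third case $\varphi_{\vb{\closure{Y}}}(S) \ge \lambda/2$ by the case hypothesis; and in the second case $\lambda - \varphi_{\vb{\closure{Y}}}(\closure{S}) \le \varphi_{\vb{\closure{Y}}}(S)$ because $\varphi$ is subadditive under unions, so $\varphi_{\vb{\closure{Y}}}(S) + \varphi_{\vb{\closure{Y}}}(\closure{S}) \ge \varphi_{\vb{\closure{Y}}}(M) > \lambda$ by $\lambda$-sparsity. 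Applying this with $\vb{Y} = \vb{X}_j$ gives $v_j(A_j) = f_{\vb{X}_j}(A_j) \le \varphi_{\vb{\closure{X}}_j}(A_j) \le k+1 = \lambda - v_i(A_i) + 1$, hence $v_i(A_i) + v_j(A_j) \le \lambda + 1 \le \lambda + 2$.

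The step I expect to be the main obstacle is the promise bookkeeping in the middle paragraph: one must check that the accumulated ``error set'' $E$, obtained when rewriting a $\vb{\closure{X}}_i$-cover of $A_j$ as a $\vb{\closure{X}}_j$-cover, stays below the $n\log(m)$ threshold demanded by Promise $(4)$ — which works precisely because $k < \lambda/2$ and $\lambda \le \log(m)$ — and that Promise $(2)$ in a $0$-instance genuinely forces each coordinate-wise symmetric difference $X_{i,\ell} \triangle X_{j,\ell}$ to have size $O(n)$. The remaining pieces — the two easy cases and the inequality $f_{\vb{Y}} \le \varphi_{\vb{\closure{Y}}}$ — are routine consequences of \Cref{prop:SubadditiveFunction}.
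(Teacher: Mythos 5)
Your proof is correct and follows essentially the same cover-translation argument as the paper: reduce to the case where one value exceeds $\lambda/2$, use Property~(3) of \Cref{prop:SubadditiveFunction} to read off a small $\vb{\closure{X}}_i$-cover of $\closure{A}_i$, use Promises~(1) and~(2) (in a $0$-instance) to show the coordinatewise error when replacing $\closure{X}_{i,\ell}$ with $\closure{X}_{j,\ell}$ has size at most $n$, and use Promise~(4) to absorb the total error into one extra set of $\vb{\closure{X}}_j$, giving a cover of size $k+1 = \lambda - v_i(A_i) + 1$. Your one departure from the paper's write-up is the finishing step: the paper bounds $v_j(A_j) \le v_j(\closure{A}_i)$ and then checks $\varphi_{\vb{\closure{X}}_j}(\closure{A}_i) < \lambda/2$ so that $v_j(\closure{A}_i) = \varphi_{\vb{\closure{X}}_j}(\closure{A}_i)$, while you instead record and invoke the unconditional inequality $f_{\vb{Y}}(S) \le \varphi_{\vb{\closure{Y}}}(S)$ for any $\lambda$-sparse $\vb{\closure{Y}}$, applied directly to $A_j$; this is a clean, slightly more general way to close the argument and removes the need for the final case-check, at the cost of a short auxiliary lemma.
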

\begin{proof}
If $v_i(A_i) \leq \lambda/2 + 1$ and $v_j(A_j) \leq \lambda/2 + 1$, the lemma follows immediately, so suppose WLOG that $v_i(A_i) \geq \lambda/2 + 2$. Then by Property $(3)$ of \Cref{prop:SubadditiveFunction}, $v_i(A_i) = \lambda - \varphi_{\vb{\closure{X}}_i}(\closure{A}_i)$, implying $\varphi_{\vb{\closure{X}}_i}(\closure{A}_i) \leq \lambda - v_i(A_i)$. Let the indices of the $\lambda - v_i(A_i)$ sets from $\vb{\closure{X}}_i$ covering $\closure{A}_i$ be $L \subseteq [z]$.

By Promises $(1)$ and $(2)$ of \EFS{}, we have for all $\ell \in L$ that $\closure{X}_{j,\ell}$ is missing at most $n$ items from $\closure{X}_{i,\ell}$. Therefore, the collection $\{\closure{X}_{j,\ell} : \ell \in L\}$ covers all but at most $n(\lambda - v_i(A_i)) < n\log(m)$ items $S$ of $\closure{A}_i$. By Promise $(4)$ of \EFS{}, there exists a single set in $\vb{\closure{X}}_j$ which covers $S$, so $\varphi_{\vb{\closure{X}}_j}(\closure{A}_i) \leq \varphi_{\vb{\closure{X}}_i}(\closure{A}_i) + 1 < \lambda/2$. Thus,
\[
    v_j(A_j) \quad \leq \quad v_j(\closure{A}_i) \quad = \quad \varphi_{\vb{\closure{X}}_j}(\closure{A}_i) \quad \leq \quad \varphi_{\vb{\closure{X}}_i}(\closure{A}_i) + 1 \quad = \quad \lambda - v_i(A_i) + 1 \enspace ,
\]
which implies $v_i(A_i) + v_j(A_j) \leq \lambda + 1$.
\end{proof}

Suppose $\vb{X}_1, \dots, \vb{X}_n$ is a $1$-instance. Then there exists $\ell \in [z]$ such that $X_{1,\ell}, \dots, X_{n,\ell}$ are pairwise disjoint. By Property $(2)$ of \Cref{prop:SubadditiveFunction}, this allocation yields welfare $n(\lambda-1)$.

Suppose $\vb{X}_1, \dots, \vb{X}_n$ is a $0$-instance, and consider any $t$-scarce allocation. If no bidder has value more than $\lambda/2+1$, then the welfare is at most $t(\lambda/2+1)$. If some bidder has value $x \geq \lambda/2+1$, then by~\Cref{lemma:AlmostPerfectlyAdditive}, no other bidder has value more than $\lambda + 2 - x$, so the welfare is at most $\lambda + 2 + (t-2)(\lambda + 2 - x)$. Some casework bounds the welfare at $(t + \I[t = 1])(\lambda/2+1)$.

Thus, an algorithm which decides if there exists a $t$-scarce allocation with more than a $(t + \I[t = 1])(\lambda/2+1)/(n(\lambda-1)) = (t/n)/(2 - \I[t = 1] - \Theta(\lambda))$ fraction of the optimal welfare is able to solve \EFS{m,n,z}, which requires $2^{\Omega(\sqrt{m}-n^2\log(m))}$ communication by~\Cref{thm:EFS}.
\end{proof}

\section{Multi-Player \texorpdfstring{$\exists$}{Exists}FarSets}
\label{sec:exists-far-sets}

In this section, we prove a randomized communication lower bound for \EFS{} (\Cref{def:EFS}).

\ExistsFarSets*

As previously noted, observe that \EFS{m,n,z} is essentially the OR of $z$ \FS{m,n} instances. In general, solving the OR of $z$ problems does not require $z$ times the communication. For example, not-equality is the OR of $z$ XOR problems, but requires only $\Theta(\log(z))$ randomized communication. However, solving the OR of $z$ problems \emph{does} require $z$ times the \emph{information} (for an appropriately defined metric of information). This is useful because information lower bounds communication.

Using this principle,~\cite{EzraFNTW19} proves that the \emph{internal information complexity} of \FS{m,2} is non-negligible, then uses standard machinery from information theory to show that the internal information complexity (which lower bounds the randomized communication complexity) of \EFS{m,2,z} is $z$ times that of \FS{m,2}.

The main barrier in generalizing this approach to $n$ players is that because of secure multiparty computation, any protocol with at least $3$ players can be made to have $0$ internal information complexity. We overcome this by using \emph{external conditional information complexity} (referred to later as conditional information complexity), which works the same way morally, but requires additional care in the technical details.

\subsection{Information Theory Preliminaries}

We first overview standard information theory results and notation, drawing primarily from~\cite{BravermanO15} (but which would be typical in many information theory papers). Note that for the most part we will not be making use of the precise definitions of entropy, mutual information, and KL divergence, only the listed facts. For this section, let all random variables take values in a universe $\Omega$ with finite cardinality, and let $\log$ denote the binary logarithm.

\begin{definition}[Entropy]
    For random variables $(X, Y) \sim \mu$ with marginals $\mu_x, \mu_y$, the (Shannon) \emph{entropy} of $X$ is defined
    \[
        \H(X) \quad \coloneqq \quad \E[\log(1/\mu_x(X))] \enspace .
    \]
    The \emph{conditional entropy} of $X$ given $Y$ is defined
    \[
        \H(X \given Y) \quad \coloneqq \quad \E\bigg[\frac{\log(1/\mu(X, Y))}{\log(1/\mu_y(Y))}\bigg] \enspace .
    \]
\end{definition}

\begin{definition}[Mutual Information]
    For random variables $X, Y, Z$, the (conditional) \emph{mutual information} between $X$ and $Y$ given $Z$ is defined
    \[
        \MI(X \mutual Y \given Z) \quad \coloneqq \quad \H(X \given Z) - \H(X \given Y, Z) \quad = \quad \H(Y \given Z) - \H(Y \given X, Z) \enspace .
    \]
\end{definition}

\begin{fact}[Chain Rule]
    For random variables $X_1, \dots, X_n, Y, Z$,
    \[
        \MI(\vb{X} \mutual Y \given Z) \quad = \quad \sum_{i=1}^n \MI(X_i \mutual Y \given X_1, \dots, X_{i-1}, Z) \enspace .
    \]
\end{fact}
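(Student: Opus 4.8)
The statement is the standard chain rule for mutual information, so the plan is to derive it from two ingredients already available in the excerpt: the definitional identity $\MI(A \mutual B \given C) = \H(A \given C) - \H(A \given B, C)$, and the chain rule for (conditional) Shannon entropy, $\H(X_1, \dots, X_n \given W) = \sum_{i=1}^n \H(X_i \given X_1, \dots, X_{i-1}, W)$, which I will instantiate once with $W = Z$ and once with $W = (Y, Z)$.

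First I would record the entropy chain rule. Writing $X_{<i} \coloneqq (X_1, \dots, X_{i-1})$, the two-variable case $\H(A, B \given C) = \H(A \given C) + \H(B \given A, C)$ follows by substituting the factorization $\mu(a, b, c) = \mu_c(c)\,\mu(a \given c)\,\mu(b \given a, c)$ into the definition of conditional entropy, taking logarithms, and using linearity of expectation; the general statement then follows by induction on $n$, peeling off $X_n$ and applying the inductive hypothesis to $X_{<n}$ under the conditioning $W$. Applying this with $W = Z$ gives $\H(\vb{X} \given Z) = \sum_{i=1}^n \H(X_i \given X_{<i}, Z)$, and applying it with $W = (Y, Z)$ gives $\H(\vb{X} \given Y, Z) = \sum_{i=1}^n \H(X_i \given X_{<i}, Y, Z)$.

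With these in hand the proof closes by a term-by-term subtraction. By the definitional identity, $\MI(\vb{X} \mutual Y \given Z) = \H(\vb{X} \given Z) - \H(\vb{X} \given Y, Z)$; substituting the two expansions above and pairing the $i$-th summands yields $\MI(\vb{X} \mutual Y \given Z) = \sum_{i=1}^n \bigl(\H(X_i \given X_{<i}, Z) - \H(X_i \given X_{<i}, Y, Z)\bigr)$, and each bracketed difference is exactly $\MI(X_i \mutual Y \given X_{<i}, Z)$ by the same identity, now applied with conditioning set $(X_{<i}, Z)$. This is the claimed equality.

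There is no genuine obstacle here; the only point requiring care is the bookkeeping of conditioning sets — carrying $Z$ through both entropy expansions identically so that the subtraction lines up summand by summand, and observing that every conditional entropy and conditional mutual information in sight is well defined because all random variables range over a finite universe $\Omega$. A slicker alternative is a direct induction on $n$ using the two-variable identity $\MI(A, B \mutual Y \given Z) = \MI(A \mutual Y \given Z) + \MI(B \mutual Y \given A, Z)$, but reducing everything to the entropy chain rule keeps the argument anchored to the single definitional identity already stated above.
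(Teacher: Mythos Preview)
Your proof is correct and is the standard derivation of the chain rule for mutual information. However, the paper does not actually prove this statement: it is listed as a \textbf{Fact} in the information-theory preliminaries section, with the explicit remark that these are ``standard information theory results'' drawn from~\cite{BravermanO15} and that the paper will ``not be making use of the precise definitions \dots\ only the listed facts.'' So there is no paper proof to compare against; your argument simply supplies the omitted textbook verification.
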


\begin{fact} \label{fact:MutualInformationFact1}
    For random variables $X, Y, Z, W$ where $X, Z$ are independent given $W$,
    \[
        \MI(X \mutual Y, Z \given W) \quad = \quad \MI(X \mutual Y \given Z, W) \enspace .
    \]
\end{fact}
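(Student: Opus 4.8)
The plan is to obtain this identity as an immediate corollary of the Chain Rule for mutual information stated above, using the conditional-independence hypothesis to kill one of the two resulting terms.

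First I would peel off $Z$ before $Y$. Using symmetry of conditional mutual information (immediate from the definition recalled above), $\MI(X \mutual Y, Z \given W) = \MI(Z, Y \mutual X \given W)$, and the Chain Rule applied to the two-element list $(Z, Y)$ with target $X$ and conditioning variable $W$ gives
\[
    \MI(X \mutual Y, Z \given W) \quad = \quad \MI(X \mutual Z \given W) \; + \; \MI(X \mutual Y \given Z, W) \enspace ,
\]
where I have again used symmetry to put each summand in the form $\MI(X \mutual \cdot \given \cdot)$.

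Next I would invoke the hypothesis: the statement that $X$ and $Z$ are independent given $W$ is precisely $\MI(X \mutual Z \given W) = 0$. Concretely, $\MI(X \mutual Z \given W) = \H(X \given W) - \H(X \given Z, W)$, and conditional independence forces $\H(X \given Z, W) = \H(X \given W)$, so this term vanishes. Substituting into the display yields $\MI(X \mutual Y, Z \given W) = \MI(X \mutual Y \given Z, W)$, as desired.

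There is essentially no obstacle here; the only points requiring a moment's care are the bookkeeping around symmetry of conditional mutual information (so that the Chain Rule, which is stated with the conditioned list on the left, can be applied to the $(Y, Z)$ block appearing on the right of our identity) and the elementary fact that conditional independence makes the conditional mutual information vanish. Both follow in one line from the entropy definitions recalled above, so no additional machinery is needed.
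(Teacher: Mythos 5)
Your proof is correct, and it is the standard textbook argument. Note that the paper states this as a \emph{Fact} without proof, citing it as a standard result from information theory (drawing on~\cite{BravermanO15}), so there is no in-paper proof to compare against. Your two-step argument---apply the chain rule to peel off $Z$ first, then observe that $\MI(X \mutual Z \given W) = 0$ by the conditional-independence hypothesis---is exactly the canonical derivation, and your care about the symmetry bookkeeping (since the paper's Chain Rule is phrased with the tuple on the left of the semicolon) is appropriate.
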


\begin{fact} \label{fact:MutualInformationFact2}
    For random variables $X, Y, Z, W$ where $X, Z$ are independent given $W$,
    \[
        \MI(X \mutual Y \given W) \quad \leq \quad \MI(X \mutual Y \given Z, W) \enspace .
    \]
\end{fact}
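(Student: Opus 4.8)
The plan is to deduce this from the already-stated Fact~\ref{fact:MutualInformationFact1}, the chain rule for mutual information, and the non-negativity of conditional mutual information. Morally, conditioning on an extra variable $Z$ that is independent of $X$ given $W$ cannot decrease the information $X$ carries about $Y$; this is made precise by routing through the joint quantity $\MI(X \mutual Y, Z \given W)$, which equals the right-hand side by Fact~\ref{fact:MutualInformationFact1} and dominates the left-hand side by the chain rule.

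Concretely, here are the steps I would carry out. First, apply Fact~\ref{fact:MutualInformationFact1}: since $X$ and $Z$ are independent given $W$, we have $\MI(X \mutual Y \given Z, W) = \MI(X \mutual Y, Z \given W)$, so it suffices to show $\MI(X \mutual Y \given W) \leq \MI(X \mutual Y, Z \given W)$. Second, use the symmetry of mutual information to rewrite $\MI(X \mutual Y, Z \given W) = \MI(Y, Z \mutual X \given W)$ and apply the chain rule to the pair $(Y, Z)$ in the first argument, obtaining
\[
    \MI(X \mutual Y, Z \given W) \quad = \quad \MI(Y \mutual X \given W) + \MI(Z \mutual X \given Y, W) \quad = \quad \MI(X \mutual Y \given W) + \MI(X \mutual Z \given Y, W) \enspace .
\]
Third, observe that $\MI(X \mutual Z \given Y, W) \geq 0$ since conditional mutual information is always non-negative; combining with the first two steps gives $\MI(X \mutual Y \given W) \leq \MI(X \mutual Y, Z \given W) = \MI(X \mutual Y \given Z, W)$, as claimed.

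I do not anticipate any genuine obstacle here, as this is a standard information-theoretic inequality — indeed it is essentially Fact~\ref{fact:MutualInformationFact1} together with a discarded non-negative term on the complementary chain-rule expansion of $\MI(X \mutual Y, Z \given W)$. The only points requiring a little care are (i) invoking symmetry of mutual information so that the chain rule as stated (which decomposes the \emph{first} argument) can be used to split $Z$ off the pair $(Y, Z)$, and (ii) explicitly using non-negativity of conditional mutual information, which is the sole inequality in the argument; every other manipulation is an identity.
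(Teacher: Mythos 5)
Your proof is correct. The paper states Fact~\ref{fact:MutualInformationFact2} without proof, treating it (along with the other items in the information-theory preliminaries) as standard background drawn from sources like~\cite{BravermanO15}, so there is no proof in the paper to compare against. Your derivation is the canonical one: expand $\MI(X \mutual Y, Z \given W)$ two ways via the chain rule, use the hypothesis $X \perp Z \mid W$ (through Fact~\ref{fact:MutualInformationFact1}, or equivalently by noting $\MI(X \mutual Z \given W) = 0$) to identify one expansion with the right-hand side, and drop the non-negative term $\MI(X \mutual Z \given Y, W)$ from the other. Your care in invoking symmetry so the chain rule as stated (which splits the first argument) applies to $(Y, Z)$ is appropriate, and the non-negativity of conditional mutual information is indeed the only inequality used. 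Nothing is missing.
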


\begin{definition}[KL Divergence]
    For random variables $X \sim \mu_x, Y \sim \mu_y$, the \emph{KL divergence} (also called \emph{relative entropy}) of $X$ from $Y$ is defined
    \[
        \KL(X \relative Y) \quad \coloneqq \quad \E\bigg[\log\bigg(\frac{\mu_x(X)}{\mu_y(X)}\bigg)\bigg] \enspace .
    \]
\end{definition}

\begin{fact} \label{fact:MutualInformationToKLDivergence}
    For random variables $(X, Y, Z) \sim \mu$, let $X_z$ denote the random variable $X$ given $Z = z$, and $X_{yz}$ denote the random variable $X$ given $Y = y$, $Z = z$. Then
    \[
        \MI(X \mutual Y \given Z) \quad = \quad \E_{(A, B, C) \sim \mu}\Big[\KL(X_{BC} \relative X_{C})\Big]
    \]
\end{fact}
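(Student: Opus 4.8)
The plan is to unfold both sides to the same expression: an expectation over $(A,B,C)\sim\mu$ of a single log-ratio of conditional probabilities.

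First I would expand the right-hand side. By the definition of KL divergence, for each fixed $(y,z)$ in the support of the $(Y,Z)$-marginal,
\[
    \KL(X_{yz} \relative X_z) \quad = \quad \E_{A \sim \mu_{X \mid Y=y, Z=z}}\bigg[\log\frac{\mu_{X\mid Y=y,Z=z}(A)}{\mu_{X\mid Z=z}(A)}\bigg] \enspace .
\]
Taking the outer expectation over $(B,C)\sim\mu_{Y,Z}$ and applying the tower property, the doubly-nested expectation collapses into one expectation with respect to the joint law $\mu$:
\[
    \E_{(A,B,C)\sim\mu}\big[\KL(X_{BC}\relative X_C)\big] \quad = \quad \E_{(A,B,C)\sim\mu}\bigg[\log\frac{\mu_{X\mid Y,Z}(A\mid B,C)}{\mu_{X\mid Z}(A\mid C)}\bigg] \enspace .
\]

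Second I would split the log-ratio as $\log\mu_{X\mid Y,Z}(A\mid B,C) - \log\mu_{X\mid Z}(A\mid C)$ and identify each piece via the definition of conditional entropy: $-\E_{(A,B,C)\sim\mu}[\log\mu_{X\mid Y,Z}(A\mid B,C)] = \H(X\given Y,Z)$, and $-\E_{(A,B,C)\sim\mu}[\log\mu_{X\mid Z}(A\mid C)] = \H(X\given Z)$ (the second integrand does not depend on $B$, so marginalizing it out is harmless). Hence the right-hand side equals $\H(X\given Z) - \H(X\given Y,Z)$, which is precisely $\MI(X\mutual Y\given Z)$ by definition, completing the proof. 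As an alternative closing step one can instead invoke Bayes' rule, $\mu_{X\mid Y,Z}(A\mid B,C)/\mu_{X\mid Z}(A\mid C) = \mu_{X,Y\mid Z}(A,B\mid C)/(\mu_{X\mid Z}(A\mid C)\,\mu_{Y\mid Z}(B\mid C))$, to rewrite the integrand as the symmetric pointwise-mutual-information density, whose $\mu$-expectation is again $\MI(X\mutual Y\given Z)$.

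There is no genuine obstacle here — the work is entirely bookkeeping. The points requiring care are: tracking which measure each expectation is taken under; justifying the tower-property collapse of the nested expectation; and the support technicalities, i.e.\ noting that $X_z$ (resp.\ $X_{yz}$) is only defined for $z$ (resp.\ $(y,z)$) of positive probability, which is exactly the set carrying the outer expectation, so every conditional ratio appearing is well-defined $\mu$-almost surely. I would state this support caveat at the outset and then carry out the two displays above.
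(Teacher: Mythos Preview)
Your proposal is correct and is the standard derivation of this identity. The paper itself does not prove this statement; it is listed as a ``Fact'' in the information-theory preliminaries and taken as known, so there is no paper proof to compare against.
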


\begin{definition}[Total Variation]
    For random variables $X \sim \mu_x, Y \sim \mu_x$, the \emph{total variation distance} between $X$ and $Y$ is defined
    \[
        \TV(X, Y) \;\;\, \coloneqq \;\;\, \max_{A \subseteq \Omega} \Big(\mu_x(X) - \mu_y(Y)\Big) \;\;\, = \;\;\, \sum_{\mathclap{\substack{\omega \in \Omega \\ \mu_x(\omega) \geq \mu_y(\omega)}}} \Big(\mu_x(\omega) - \mu_y(\omega)\Big) \;\;\, = \;\;\, \frac{1}{2}\sum_{\omega \in \Omega} \abs{\mu_x(\omega) - \mu_y(\omega)} \enspace .
    \]
\end{definition}

\begin{fact}[Pinsker's Inequality]
    For random variables $X, Y$,
    \[
        \TV(X, Y) \quad \leq \quad \sqrt{\KL(X \relative Y)/(2\log e)} \enspace .
    \]
\end{fact}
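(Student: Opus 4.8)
The plan is to reduce Pinsker's inequality to its scalar (two-point) special case via a data-processing argument, and then establish the scalar case by elementary single-variable calculus.

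First I would fix the event $A \coloneqq \{\omega \in \Omega : \mu_x(\omega) \geq \mu_y(\omega)\}$ achieving the maximum in the definition of total variation, and set $p \coloneqq \mu_x(A)$ and $q \coloneqq \mu_y(A)$, so that $\TV(X,Y) = p - q$. The key reduction is the data-processing inequality for KL divergence specialized to the two-cell partition $\{A, \closure{A}\}$: applying the log-sum inequality separately to the terms indexed by $\omega \in A$ and by $\omega \in \closure{A}$ in the definition of $\KL(X \relative Y)$ yields
\[
    \KL(X \relative Y) \quad \geq \quad p\log\frac{p}{q} + (1-p)\log\frac{1-p}{1-q} \quad =: \quad d(p \relative q) \enspace ,
\]
where the log-sum inequality is itself a one-line consequence of Jensen's inequality applied to the convex function $t \mapsto t\log t$. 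Since $\log$ here is base $2$, I would carry a factor of $\log e = 1/\ln 2$ through to the final bound.

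Next I would prove the scalar bound $d(p \relative q) \geq 2(\log e)(p-q)^2$. Writing $g(q) \coloneqq d(p \relative q) - 2(\log e)(p-q)^2$ as a function of $q \in (0,1)$ for fixed $p$, one checks $g(p) = 0$ and computes $g'(q) = (\log e)\big[(q-p)/(q(1-q)) + 4(p-q)\big] = (\log e)(p-q)\big(4 - 1/(q(1-q))\big)$. On $(0,1)$ we have $q(1-q) \leq 1/4$, so the factor $4 - 1/(q(1-q))$ is nonpositive; hence $g'$ has the sign of $q - p$, so $g$ is minimized at $q = p$, giving $g(q) \geq g(p) = 0$ for all $q$. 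Combining with the reduction step, $\KL(X \relative Y) \geq d(p \relative q) \geq 2(\log e)(p-q)^2 = 2(\log e)\,\TV(X,Y)^2$, and rearranging gives $\TV(X,Y) \leq \sqrt{\KL(X \relative Y)/(2\log e)}$.

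I expect the only mild subtlety to be bookkeeping of logarithm bases (the statement is in bits, whereas the calculus lemma is cleanest in nats) and handling the degenerate cases $p \in \{0,1\}$ or $q \in \{0,1\}$, where one side may be $+\infty$ and the inequality holds trivially or by a limiting argument (and where if $q \in \{0,1\}$ but $p \notin \{0,1\}$ one has $\KL = \infty$ already at the partition level). Everything else is routine: both the log-sum inequality and the single-variable minimization are standard.
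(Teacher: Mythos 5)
Your proof is correct. The paper does not actually prove this statement: it is listed as a ``Fact'' in the information theory preliminaries, which are standard background results cited from the literature (attributed broadly to~\cite{BravermanO15} and textbook sources) rather than proved in the paper. So there is no proof of the paper's own to compare against.

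That said, your argument is a faithful rendition of the classic proof of Pinsker's inequality. The reduction to the two-point case via the log-sum inequality (i.e., data processing for KL divergence under the coarsening $\{A, \closure{A}\}$ with $A$ the maximizing event in the TV definition) is correct, as is the scalar lemma $d(p \relative q) \geq 2(\log e)(p-q)^2$ proved by checking $g(p) = 0$ and that $g'(q) = (\log e)(p-q)\big(4 - 1/(q(1-q))\big)$ has the sign of $q - p$ since $q(1-q) \leq 1/4$. Your handling of base-$2$ versus natural logarithm via the $\log e$ factor matches the paper's convention (the paper uses $\log$ for base $2$), and the remark about degenerate $p, q \in \{0,1\}$ cases is the right caveat. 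Nothing is missing.
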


\begin{definition}[Conditional Information Complexity] \label{def:ConditionalInformationComplexity}
    Let $\calQ$ be an $n$-player communication problem where each player receives an input in $\{0, 1\}^m$. Let $\delta \in (0, 1)$, and let $\calP_\delta$ be the set of all randomized protocols which solve $\calQ$ with worst-case error $\delta$. For any randomized protocol $\Pi \in \calP_\delta$ and input $\vb{x} \in \{0, 1\}^{n \times m}$, let $\Pi(\vb{x})$ be the random variable of the transcript of $\Pi$ on input $\vb{x}$. Let $\mu$ be a distribution on $\{0, 1\}^{n \times m} \times \calD$ for some domain $\calD$. The \emph{conditional information complexity} of $\Pi$ with respect to $\mu$ is defined
    \[
        \CIC_\mu(\Pi) \quad \coloneqq \quad \MI_{(\vb{X}, D) \sim \mu}(\Pi(\vb{X}) \mutual \vb{X} \given D) \enspace .
    \]
    The \emph{conditional information complexity} of $\calQ$ with respect to $\mu,\delta$ is defined
    \[
        \CIC_{\mu,\delta}(\calQ) \quad \coloneqq \quad \inf_{\Pi \in \calP_\delta} \CIC_\mu(\Pi) \enspace .
    \]
\end{definition}

\begin{fact} \label{fact:InformationLowerBoundsCommunication}
    Let $\calQ$ be a communication problem, let $\delta \in (0, 1)$, and let $\CC_\delta(\calQ)$ denote its randomized communication complexity with worst-case error $\delta$. Then for any $\mu$,
    \[
        \CIC_{\mu,\delta}(\calQ) \quad \leq \quad \CC_\delta(\calQ) \enspace .
    \]
\end{fact}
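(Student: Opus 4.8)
The plan is to prove the stronger pointwise bound $\CIC_\mu(\Pi) \leq \CC(\Pi)$ for \emph{every} protocol $\Pi$ solving $\calQ$ with worst-case error at most $\delta$, where $\CC(\Pi)$ denotes the communication cost of $\Pi$; taking the infimum over $\Pi \in \calP_\delta$ then immediately gives $\CIC_{\mu,\delta}(\calQ) = \inf_{\Pi \in \calP_\delta}\CIC_\mu(\Pi) \leq \inf_{\Pi \in \calP_\delta}\CC(\Pi) = \CC_\delta(\calQ)$. So fix such a $\Pi$, draw $(\vb{X}, D) \sim \mu$, and abbreviate the transcript random variable by $T \coloneqq \Pi(\vb{X})$.

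First I would convert the conditional mutual information into an unconditional entropy of the transcript. Expanding the definition of mutual information and using that conditional entropy is nonnegative, then that conditioning does not increase entropy,
\[
    \CIC_\mu(\Pi) \;=\; \MI(T \mutual \vb{X} \given D) \;=\; \H(T \given D) - \H(T \given \vb{X}, D) \;\leq\; \H(T \given D) \;\leq\; \H(T) \, .
\]
It therefore suffices to show $\H(T) \leq \CC(\Pi)$. For this I would use the structure of blackboard protocols: the fixed combinatorial description of $\Pi$ determines, from any partial transcript, which player (if any) speaks next and when the protocol halts, so the set of transcripts that $\Pi$ can ever produce is a prefix-free subset of $\{0,1\}^*$. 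Treating each transcript as its own codeword makes this a prefix-free code, so by Kraft's inequality (equivalently, the source coding inequality) $\H(T) \leq \E[\abs{T}]$, the expectation being over $\vb{X} \sim \mu$ together with the internal randomness of $\Pi$. Finally, an average is at most a maximum, so $\E[\abs{T}] = \E_{\vb{X}\sim\mu}\big[\E[\abs{\Pi(\vb{X})}]\big] \leq \max_{\vb{x}}\E[\abs{\Pi(\vb{x})}] = \CC(\Pi)$. Chaining these inequalities yields $\CIC_\mu(\Pi) \leq \CC(\Pi)$.

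The only delicate point --- the ``main obstacle,'' such as it is --- is the inequality $\H(T) \leq \E[\abs{T}]$, which hinges on the transcripts of a fixed protocol forming a prefix-free set; this in turn relies on the standard convention that the messages a player may send at any given node are themselves prefix-free, so that the transcript is self-delimiting. If one also wants to allow $\Pi$ to use public randomness $R$ without charging it to communication, one first conditions on $R$: since $R$ is independent of $(\vb{X}, D)$, \Cref{fact:MutualInformationFact2} gives $\MI(T \mutual \vb{X} \given D) \leq \MI(T \mutual \vb{X} \given D, R)$, and for each fixed outcome $R = r$ the protocol reduces to one with only private coins, to which the prefix-free argument above applies verbatim before averaging over $r$. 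Everything else is a routine chain of standard information-theoretic inequalities.
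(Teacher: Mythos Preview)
Your argument is correct and is the standard proof of this inequality. The paper itself does not prove \Cref{fact:InformationLowerBoundsCommunication}; it is listed among the ``standard information theory results'' in the preliminaries and simply quoted as a known fact (see, e.g.,~\cite{BravermanO15,Bar-YossefJKS02}). Your chain $\MI(T \mutual \vb{X} \given D) \leq \H(T \given D) \leq \H(T) \leq \E[\abs{T}] \leq \CC(\Pi)$, with the middle inequality coming from prefix-freeness of transcripts and Kraft, is exactly the usual justification, and your handling of public randomness via \Cref{fact:MutualInformationFact2} is also the standard reduction.
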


\subsection{\texorpdfstring{$\exists$}{Exists}FarSets to FarSets: a Direct Sum Theorem}

As a reminder, the previous fact is why information complexity is relevant: it lower bounds the communication complexity. The next theorem (preceded by some notational exposition) justifies looking at information complexity instead of communication complexity: information complexity satisfies a ``direct sum'' property, where solving the OR of $z$ problem instances requires $z$ times the information as solving a single problem instance. This allows us to reduce the problem of lower bounding the information complexity of \EFS{} to lower bounding the information complexity of \FS{}. The following notation and proof approach follows that of~\cite{EzraFNTW19,BravermanO15,Bar-YossefJKS02}.

\begin{definition}[Promise Problem]
    Let $f : \{0, 1\}^{n \times m} \to \{0, 1\}$ and let $\calX \subseteq \{0, 1\}^{n \times m}$. The \emph{communication problem solving $f$ under promise $\calX$} refers to the communication problem where the players must output $f(\vb{x})$ when $\vb{x} \in \calX$ and may provide arbitrary output otherwise.
\end{definition}

\begin{definition}[$(z, \varepsilon)$-Safe]
    Let $\calX \subseteq \{0, 1\}^{n \times m}$ and $\mu$ be a distribution on $\calX$, and let $\calX^* \subseteq \{0, 1\}^{n \times z \times m}$. We say that $\mu, \calX$ are \emph{$(z, \varepsilon)$-safe} with respect to $\calX^*$ if for any $k \in [z]$, $\vb{X}_k \in \calX$, and $\vb{X}_1, \dots, \vb{X}_z$, where $\vb{X}_\ell \sim \mu$ independently for $\ell \ne k$,
    \[
        \Pr\Big[\Big((X_{1,1}, \dots, X_{z,1}), \dots, (X_{1,n}, \dots, X_{z,n})\Big) \not\in \calX^*\Big] \quad \leq \quad \varepsilon \enspace .
    \]
\end{definition}

\begin{theorem}[Direct Sum] \label{thm:DirectSum}
    Let $f : \{0, 1\}^{n \times m} \to \{0, 1\}$, $\calX \subseteq \{0, 1\}^{n \times m}$, and $\calX^* \subseteq \{0, 1\}^{n \times z \times m}$. Let $\mu$ be a distribution on $\{0, 1\}^{n \times m} \times \calD$ satisfying the following properties:
    \begin{enumerate}[topsep=5pt,itemsep=0pt]
        \item $\mu, \calX$ are $(z, \varepsilon)$-safe with respect to $\calX^*$.
        
        \item For $(\vb{X}, D) \sim \mu$, $\vb{X} \in \calX$ and $f(\vb{X}) = 0$ a.s.

        \item For $(\vb{X}, D) \sim \mu$, $X_1, \dots, X_n$ are independent given $D$.
    \end{enumerate}
    Let $\mu^*$ be the distribution that independently samples $(\vb{X}_1, D_1), \dots, (\vb{X}_z, D_z) \sim \mu$ and outputs $(((X_{1,1}, \dots, X_{z,1}), \dots, (X_{1,n}, \dots, X_{z,n})), (D_1, \dots, D_z))$. Let $f^* : \{0, 1\}^{n \times z \times m} \to \{0, 1\}$ be defined
    \[
        f^*\Big((X_{1,1}, \dots, X_{1,z}), \dots, (X_{n,1}, \dots, X_{n,z})\Big) \quad \coloneqq \quad \bigvee_{\ell \in [z]} f(X_{1,\ell}, \dots, X_{n,\ell}) \enspace .
    \]
    Let $\Pi^*$ be a protocol solving $f^*$ with worst-case error $\delta$ such that $\CIC_{\mu^z}(\Pi^*) \leq zc$. Then there exists a protocol $\Pi$ solving $f$ with worst-case error $\delta + \varepsilon$ such that $\CIC_\mu(\Pi) \leq c$.
\end{theorem}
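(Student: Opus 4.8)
The plan is to carry out the standard information-complexity direct sum argument~\cite{Bar-YossefJKS02,BravermanO15,EzraFNTW19}, embedding a single $f$-instance into a uniformly random coordinate of a ``self-generated'' $f^*$-instance. Write $\vb{X}_{[z]} = (\vb{X}_1, \dots, \vb{X}_z)$ and $D_{[z]} = (D_1, \dots, D_z)$. Given an input $\vb{W} = (W_1, \dots, W_n) \in \calX$ (with hidden auxiliary variable $D$), the protocol $\Pi$ does the following: using shared randomness, draw $\ell^* \in [z]$ uniformly; for each $\ell < \ell^*$, draw $(\vb{X}_\ell, D_\ell) \sim \mu$ entirely from shared randomness; for each $\ell > \ell^*$, draw $D_\ell$ from shared randomness (from the $D$-marginal of $\mu$) and then have each player $i$ privately draw $X_{i,\ell} \sim \mu(X_i \given D_\ell)$, which by Property~$3$ reconstitutes the correct joint law of $\vb{X}_\ell$ given $D_\ell$ even though no player sees all of $\vb{X}_\ell$; finally, place $\vb{W}$ into coordinate $\ell^*$, run $\Pi^*$ on the resulting $\{0,1\}^{n \times z \times m}$ instance, and return its answer.

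Correctness is a union bound. By Property~$2$, each self-generated coordinate $\ell \ne \ell^*$ satisfies $\vb{X}_\ell \in \calX$ and $f(\vb{X}_\ell) = 0$, so $f^*$ of the constructed instance equals $f(\vb{W})$; hence $\Pi$ is correct whenever $\Pi^*$ is. Conditioned on $\ell^* = k$, the constructed instance is precisely of the form appearing in the definition of $(z,\varepsilon)$-safety (coordinate $k$ fixed to $\vb{W} \in \calX$, the rest i.i.d.\ from $\mu$), so by Property~$1$ it lies in $\calX^*$ except with probability $\le \varepsilon$; on that event $\Pi^*$ errs with probability $\le \delta$. Thus $\Pi$ has worst-case error $\le \delta + \varepsilon$.

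For the information bound, it suffices to bound $\CIC_\mu(\Pi)$ when the shared randomness is regarded as part of the transcript, since adding input-independent variables to the transcript only increases the information cost. Condition on $\ell^* = \ell$: then the transcript is $(\vb{X}_{<\ell}, D_{<\ell}, D_{>\ell}, \Pi^*(\vb{X}_{[z]}))$, the tuple $(\vb{X}_{[z]}, D_{[z]})$ is distributed exactly as $\mu^*$ (Property~$3$ is what makes the private self-sampling faithful), and $\vb{W} = \vb{X}_\ell$, $D = D_\ell$. Since under $\mu^*$ coordinate $\ell$ is independent of all other coordinates, $\MI(\vb{X}_{<\ell}, D_{<\ell}, D_{>\ell} \mutual \vb{X}_\ell \given D_\ell) = 0$, so expanding the first argument via the chain rule gives
\[
    \MI\big(\Pi(\vb{W}) \mutual \vb{W} \given D, \ell^* = \ell\big) \quad = \quad \MI\big(\Pi^*(\vb{X}_{[z]}) \mutual \vb{X}_\ell \given \vb{X}_{<\ell}, D_{[z]}\big) \enspace .
\]
Averaging over $\ell^* \sim \mathrm{Unif}([z])$ and applying the chain rule once more,
\[
    \CIC_\mu(\Pi) \quad = \quad \frac{1}{z}\sum_{\ell=1}^z \MI\big(\Pi^*(\vb{X}_{[z]}) \mutual \vb{X}_\ell \given \vb{X}_{<\ell}, D_{[z]}\big) \quad = \quad \frac{1}{z}\,\MI\big(\Pi^*(\vb{X}_{[z]}) \mutual \vb{X}_{[z]} \given D_{[z]}\big) \quad = \quad \frac{1}{z}\,\CIC_{\mu^*}(\Pi^*) \quad \leq \quad c \enspace .
\]

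The main obstacle is the bookkeeping in this last step: the simulation must be arranged so that the coordinates below $\ell^*$ are entirely shared (these are exactly the ones the chain rule conditions on, i.e.\ the $\vb{X}_{<\ell}$), while coordinates above $\ell^*$ reveal only their $D_\ell$ (their $\vb{X}_\ell$ being private randomness that the $\ell$-th chain-rule term does not condition on), and one must check that the simulated law of $(\vb{X}_{[z]}, D_{[z]})$ is exactly $\mu^*$ so that all the conditional mutual informations line up term by term. The remaining ingredients — correctness via safety, and the vanishing of the independence cross term — are routine given the mutual-information facts recalled above.
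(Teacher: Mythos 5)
Your proof is correct and follows the standard direct-sum template, but the protocol design differs from the paper's in a small and worth-noting way. The paper's $\Pi$ publicly samples only $k$ and $\vb{D}_{-k}$, then has each player privately sample $X_{i,\ell}$ for \emph{every} $\ell \ne k$; the transcript is therefore $(k, \vb{D}_{-k}, \Pi^*(\cdot))$, and the paper must invoke monotonicity under independent conditioning (\Cref{fact:MutualInformationFact2}) to insert $\vec{X}_{<k}$ into the conditioning before averaging and applying the chain rule, obtaining $\CIC_\mu(\Pi) \leq \frac{1}{z}\CIC_{\mu^*}(\Pi^*)$. You instead push the prefix $\vb{X}_{<\ell^*}$ into the \emph{public} randomness, so the chain-rule conditioning variables are literally present in the transcript, and the per-coordinate identity $\MI(\Pi(\vb{W}) \mutual \vb{W} \given D, \ell^* = \ell) = \MI(\Pi^*(\vb{X}_{[z]}) \mutual \vb{X}_\ell \given \vb{X}_{<\ell}, D_{[z]})$ holds with equality via \Cref{fact:MutualInformationFact1} alone; \Cref{fact:MutualInformationFact2} is never needed. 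Both routes are standard; yours is slightly tighter (equality rather than one-sided bound) and arguably more transparent about why the chain rule lines up, while the paper's version has a marginally simpler protocol description since the sampling for $\ell \ne k$ is uniform rather than split into public-prefix and private-suffix. The correctness and safety arguments are identical, and the verification that the self-generated coordinates reconstitute the law $\mu$ via Property~3 is the same in both.
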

\begin{proof}
For $\vec{\vb{x}} \in \{0, 1\}^{n \times z \times m}$, define $\vec{x}_\ell \in \{0, 1\}^{n \times m}$ to be a slice along the $[z]$ dimension.

Given input $\vec{x} \in \calX$, $\Pi$ is the protocol which does the following.

The players agree on a uniformly sampled $k \in [z]$ and $\vb{D}_{-k} \in \calD^{z-1}$ sampled from $\mu^{z-1}$ using public randomness. Then for each $\ell \in [z] \setminus \{k\}$, each player $i$ samples $X_{i,\ell} \in \{0, 1\}^m$ from $\mu$ conditioned on $D_\ell$, which is possible because for $(\vb{X}, D) \sim \mu$, $X_1, \dots, X_n$ are independent given $D$. Each player then sets $\vb{X}_i \coloneqq (X_{i,1}, \dots, X_{i,k-1}, x_i, X_{i,k+1}, \dots, X_{i,z})$.

The players then run $\Pi^*$ on $\vec{\vb{X}}$. Since $\mu$ is distributed on $0$ inputs to $f$, $f^*(\vec{\vb{X}}) = 1$ if and only if $f(\vec{x}) = 1$. Therefore, $\Pi$ only errs if $\Pi^*$ errs, or if $\vec{\vb{X}} \not\in \calX^*$. The former occurs w.p. $\leq \delta$, and the latter occurs w.p. $\leq \varepsilon$ by $(z, \varepsilon)$-safety of $\mu, \calX$, so $\Pi$ errs w.p. at most $\delta + \varepsilon$.

Now, let us analyze $\CIC_\mu(\Pi)$. Observe that the transcript of $\Pi$ is comprised of $k$ and $\vb{D}_{-k}$ (generated via public randomness), and the transcript of $\Pi^*$. Hence,
\begin{align*}
    \CIC_\mu(\Pi) \quad &= \quad \MI_{(\vec{X}, D) \sim \mu}(\vb{X} \mutual \Pi(\vec{X}) \given D) \\
    &= \quad \MI_{k \sim [z], (\vec{\vb{X}}, \vb{D}) \sim \mu^*}(\vec{X}_k \mutual \Pi^*(\vec{\vb{X}}), k, \vb{D}_{-k} \given D_k) \enspace .
\end{align*}

Since $k$ is drawn independently and $\mu^*$ consists of independent draws from $\mu$, $\vec{X}_k$ is independent from $k$ and independent from $\vec{X}_\ell, D_\ell$ for all $\ell \ne k$. Thus, by \Cref{fact:MutualInformationFact1}, \Cref{fact:MutualInformationFact2}, writing out the dependence on $k$, and the chain rule, we have
\begin{align*}
    \CIC_\mu(\Pi) \quad &= \quad \MI_{k \sim [z], (\vec{\vb{X}}, \vb{D}) \sim \mu^*}(\vec{X}_k \mutual \Pi^*(\vec{\vb{X}}) \given k, \vb{D}) \\
    &\leq \quad \MI_{k \sim [z], (\vec{\vb{X}}, \vb{D}) \sim \mu^*}(\vec{X}_k \mutual \Pi^*(\vec{\vb{X}}) \given k, \vec{X}_1, \dots, \vec{X}_{k-1}, \vb{D}) \\
    &= \quad \frac{1}{z} \sum_{\ell=1}^z \MI_{(\vec{\vb{X}}, \vb{D}) \sim \mu^*}(\vec{X}_\ell \mutual \Pi^*(\vec{\vb{X}}) \given \vec{X}_1, \dots, \vec{X}_{\ell-1}, \vb{D}) \\
    &= \quad \frac{1}{z} \MI_{(\vec{\vb{X}}, \vb{D}) \sim \mu^*}(\vec{\vb{X}} \mutual \Pi^*(\vec{\vb{X}}) \given \vb{D}) \\
    &= \quad \frac{1}{z} \CIC_{\mu^*}(\Pi^*) \quad \leq \quad c \enspace . \qedhere
\end{align*}
\end{proof}

\begin{corollary} \label{cor:DirectSum}
    $\CC_\delta(f^*) \geq \CIC_{\mu^*,\delta}(f^*) \geq z \cdot \CIC_{\mu,\delta+\varepsilon}(f)$.
\end{corollary}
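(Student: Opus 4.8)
This is a direct consequence of the two preceding results: \Cref{fact:InformationLowerBoundsCommunication} (information lower bounds communication) and \Cref{thm:DirectSum} (the direct sum theorem). The first inequality, $\CC_\delta(f^*) \geq \CIC_{\mu^*,\delta}(f^*)$, is literally an instantiation of \Cref{fact:InformationLowerBoundsCommunication} with the communication problem taken to be $f^*$, error parameter $\delta$, and distribution $\mu^*$; nothing further is needed.

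For the second inequality, $\CIC_{\mu^*,\delta}(f^*) \geq z \cdot \CIC_{\mu,\delta+\varepsilon}(f)$, the plan is to unwind the definition of conditional information complexity as an infimum over protocols and apply \Cref{thm:DirectSum} protocol-by-protocol. Concretely, fix any $\gamma > 0$ and take a protocol $\Pi^*$ solving $f^*$ with worst-case error $\delta$ and $\CIC_{\mu^*}(\Pi^*) \leq \CIC_{\mu^*,\delta}(f^*) + \gamma$. Set $c \coloneqq \CIC_{\mu^*}(\Pi^*)/z$, so that $\CIC_{\mu^*}(\Pi^*) = zc$ and the hypothesis of \Cref{thm:DirectSum} is met (the structural hypotheses on $f$, $\calX$, $\calX^*$, and $\mu$ — namely $(z,\varepsilon)$-safety, $\vb{X} \in \calX$ with $f(\vb{X}) = 0$ a.s., and conditional independence of the coordinates given $D$ — are exactly the standing assumptions under which the corollary is stated). \Cref{thm:DirectSum} then produces a protocol $\Pi$ solving $f$ with worst-case error $\delta + \varepsilon$ and $\CIC_\mu(\Pi) \leq c = \CIC_{\mu^*}(\Pi^*)/z \leq (\CIC_{\mu^*,\delta}(f^*) + \gamma)/z$. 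Since $\Pi \in \calP_{\delta+\varepsilon}$, this shows $\CIC_{\mu,\delta+\varepsilon}(f) \leq (\CIC_{\mu^*,\delta}(f^*) + \gamma)/z$, and letting $\gamma \to 0$ gives $z \cdot \CIC_{\mu,\delta+\varepsilon}(f) \leq \CIC_{\mu^*,\delta}(f^*)$, as desired. Chaining the two inequalities yields the claim.

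\textbf{Main obstacle.} There is essentially no obstacle here: all of the real work has already been done in establishing \Cref{thm:DirectSum}. The only points requiring a moment of care are bookkeeping: (i) matching the quantifiers correctly, i.e.\ that $\CIC_{\mu,\delta+\varepsilon}(f)$ is an infimum over \emph{all} protocols with error $\delta+\varepsilon$ so it suffices to exhibit one such $\Pi$ with small information cost; (ii) tracking that the error degrades from $\delta$ to $\delta + \varepsilon$, which is precisely what the $(z,\varepsilon)$-safety of $\mu,\calX$ buys us in \Cref{thm:DirectSum}; and (iii) reconciling the notation $\mu^z$ appearing in the statement of \Cref{thm:DirectSum} with $\mu^*$ in the corollary — these denote the same product-type distribution on $z$ independent copies. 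No new ideas are needed.
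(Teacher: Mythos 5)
Your proof is correct and takes the same approach the paper intends: the first inequality is \Cref{fact:InformationLowerBoundsCommunication} verbatim, and the second is the standard ``infimum-pushing'' argument through \Cref{thm:DirectSum}, taking a near-optimal $\Pi^*$, applying the theorem with $c \coloneqq \CIC_{\mu^*}(\Pi^*)/z$, and letting the slack go to zero. The paper's own proof is terse (citing only the fact and the definition of $\CIC$), but it is a corollary of \Cref{thm:DirectSum} and your explicit invocation of it, including the reconciliation of $\mu^z$ with $\mu^*$, is exactly the intended argument.
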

\begin{proof}
Follows from \Cref{fact:InformationLowerBoundsCommunication} and \Cref{def:ConditionalInformationComplexity}.
\end{proof}

\subsection{Information Complexity of FarSets}

The final piece of the communication lower bound is a lower bound on the information complexity of \FS{} with respect to a distribution satisfying the requirements of \Cref{thm:DirectSum}. As before, the proof approach follows that of~\cite{EzraFNTW19}, but differs in the technical details due to using external instead of internal information complexity.

The proof sketch looks like a hybrid argument from cryptography:
\begin{enumerate}[topsep=4pt]
    \item Suppose that the information complexity of a protocol is small on a distribution of $0$-instances.

    \item There is just enough freedom in the promise of what a $0$-instance looks like, so that we can \emph{chain} together many $0$-instances (call these instances \emph{links}), with each adjacent instance in the chain differing only by one player's input.

    \item By a counting argument, we show that if the information complexity is low, there exists some chain where the protocol reveals little information for each of its links. Fix any such chain.

    \item Using standard information theory, little information revealed means the distribution of the transcripts for every link in the chain must be close to each other.

    \item By what is essentially the randomized analogue of a deterministic rectangle argument, we argue that even if we mix inputs from different links in the chain, the resulting transcript must still be distributionally close. This step is the main technical component of the proof.

    \item If we have constructed the chain such that a $1$-instance can be made by mixing inputs from different links, then there exists a $0$-instance and a $1$-instance such that their transcripts (and hence the output of the protocol) are distributionally close, which implies the worst-case error of the protocol must be high.

    \item We conclude that any protocol with low worst-case error must have high information complexity (relatively speaking).
\end{enumerate}

Let $m \equiv n \bmod{n^2}$ be sufficiently large, and let $k \coloneqq m/n \equiv 1 \bmod{n}$.

Let $\mu$ be a distribution on $(2^M)^n \times 2^M$, where for $(\vb{X}, D) \sim \mu$, $D$ is a uniformly random subset of $M$ of size $k-n$, and each $X_i$ is (independently) a uniformly random superset of $D$ of size $k$.

Fix a protocol $\Pi$, and in addition to $\Pi(\vb{X})$ denoting the random variable of the transcript on input $\vb{X}$, let $\Pi(D)$ for $\abs{D} = k-n$ denote the random variable of the transcript on an input sampled from $\mu$ conditioned on $D$.

Let $S_1, \dots, S_{k(n-1)+1}$ be sets defined $S_j \coloneqq [j, j+k-1]$. For any permutation $\sigma : M \to M$ on items, we abuse notation to also let $\sigma$ act on sets and tuples of sets item-wise.

\begin{definition}[Link]
    $\vb{L} \in (2^M)^n$ is a \emph{link} if there exists a permutation $\sigma : M \to M$, $a \in [k-1]$, and $b \in [n]$ such that
    \[
        \sigma(L_i) \quad = \quad \begin{cases}
            S_{an+i} & i < b \\
            S_{(a-1)n+i} & i \geq b
        \end{cases} \enspace .
    \]
\end{definition}

\begin{definition}[Chain] \label{def:Chain}
    A \emph{chain} is a sequence of links $\vb{L}_{n+1}, \dots, \vb{L}_{(k-1)(n-1)+n+1}$ such that there exists a permutation $\sigma : M \to M$ where for all $a \in [k-1], b \in [n]$ resulting in valid indices,
    \[
        \sigma(L_{an+b,i}) \quad = \quad \begin{cases}
            S_{an+i} & i < b \\
            S_{(a-1)n+i} & i \geq b
        \end{cases} \enspace .
    \]
\end{definition}

\begin{example}
    For $M = \{1, 2, 3, 4, 5, 6, 7, 8, 9, 0, e, \pi\}$ and $n = 3$, the links associated with the identity permutation are
    \begin{itemize}[itemsep=0pt,topsep=0pt]
        \item $a, b = 1, 1$: $(\{1, 2, 3, 4\}, \{2, 3, 4, 5\}, \{3, 4, 5, 6\})$

        \item $a, b = 1, 2$: $(\{4, 5, 6, 7\}, \{2, 3, 4, 5\}, \{3, 4, 5, 6\})$

        \item $a, b = 1, 3$: $(\{4, 5, 6, 7\}, \{5, 6, 7, 8\}, \{3, 4, 5, 6\})$

        \item $a, b = 2, 1$: $(\{4, 5, 6, 7\}, \{5, 6, 7, 8\}, \{6, 7, 8, 9\})$

        \item $a, b = 2, 2$: $(\{7, 8, 9, 0\}, \{5, 6, 7, 8\}, \{6, 7, 8, 9\})$

        \item $a, b = 2, 3$: $(\{7, 8, 9, 0\}, \{8, 9, 0, e\}, \{6, 7, 8, 9\})$

        \item $a, b = 3, 1$: $(\{7, 8, 9, 0\}, \{8, 9, 0, e\}, \{9, 0, e, \pi\})$,
    \end{itemize}
    and this sequence of links forms the chain associated with the identity permutation.
\end{example}

\begin{remark}
    Observe that there is a bijection between chains and permutations, so we may refer to a chain by its permutation.
\end{remark}

\begin{definition}[Broken]
    A link $\vb{L}$ is \emph{$\gamma$-broken} if $\KL(\Pi(\vb{L}) \relative \Pi(D)) \geq \gamma$ for some $D \subseteq \bigcap_{i \in [n]} L_i$ of size $k-n$. A chain is \emph{$\gamma$-broken} if at least one of its links is $\gamma$-broken.
\end{definition}

\begin{lemma} \label{lemma:LowInformationMeansFewBrokenLinks}
    If $\CIC_\mu(\Pi) \leq \alpha\gamma/m^{n^2}$, then at most an $\alpha$ fraction of links are $\gamma$-broken.
\end{lemma}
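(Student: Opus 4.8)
The plan is to prove the \emph{contrapositive}: assuming that strictly more than an $\alpha$ fraction of the links are $\gamma$-broken, I will derive $\CIC_\mu(\Pi) > \alpha\gamma/m^{n^2}$. Write $\calL$ for the set of all links and $\calB \subseteq \calL$ for the set of $\gamma$-broken ones, and for each $\vb{L}\in\calB$ fix a witnessing set $D_{\vb{L}}\subseteq\bigcap_{i\in[n]}L_i$ with $\abs{D_{\vb{L}}}=k-n$ and $\KL(\Pi(\vb{L})\relative\Pi(D_{\vb{L}}))\geq\gamma$. The first step is to rewrite the information cost as an average of KL divergences: applying \Cref{fact:MutualInformationToKLDivergence} with the conditioning variable $D$, and using that the transcript $\Pi(\vb{X})$ depends only on $\vb{X}$ (so further conditioning on $D$ leaves it unchanged), together with the already-introduced notation $\Pi(D)$,
\[
\CIC_\mu(\Pi) \;=\; \MI_{(\vb{X},D)\sim\mu}(\Pi(\vb{X}) \mutual \vb{X} \given D) \;=\; \E_{(\vb{X},D)\sim\mu}\!\big[\KL(\Pi(\vb{X}) \relative \Pi(D))\big].
\]

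Next I would lower-bound this expectation by keeping only the contributions of broken links paired with their witnesses. Each pair $(\vb{L},D_{\vb{L}})$ with $\vb{L}\in\calB$ lies in the support of $\mu$ (indeed $\abs{D_{\vb{L}}}=k-n$, $\abs{L_i}=k$, and $D_{\vb{L}}\subseteq L_i$ for every $i$), and $\vb{L}\mapsto(\vb{L},D_{\vb{L}})$ is injective, so these are distinct atoms; since every KL term is nonnegative,
\[
\CIC_\mu(\Pi) \;\geq\; \sum_{\vb{L}\in\calB}\Pr_\mu\!\big[(\vb{X},D)=(\vb{L},D_{\vb{L}})\big]\cdot\KL(\Pi(\vb{L})\relative\Pi(D_{\vb{L}})) \;\geq\; \gamma\sum_{\vb{L}\in\calB}\Pr_\mu\!\big[(\vb{X},D)=(\vb{L},D_{\vb{L}})\big].
\]
From the definition of $\mu$ (a uniform $D$ of size $k-n$, then each $X_i$ an independent uniform $k$-superset of $D$), $\Pr_\mu[(\vb{X},D)=(\vb{L},D_{\vb{L}})] = \binom{m}{k-n}^{-1}\binom{m-k+n}{n}^{-n}$, which is independent of $\vb{L}$, so $\CIC_\mu(\Pi) \geq \gamma\abs{\calB} / \big(\binom{m}{k-n}\binom{m-k+n}{n}^n\big)$.

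The remaining step is to show the combinatorial denominator is at most $\abs{\calL}\cdot m^{n^2}$. Since $k=m/n$ and $n\geq 2$, we have $\binom{m}{k-n}\leq\binom{m}{k}$, while trivially $\binom{m-k+n}{n}\leq m^n$, so $\binom{m}{k-n}\binom{m-k+n}{n}^n \leq \binom{m}{k}\,m^{n^2}$. On the other hand $\abs{\calL}\geq\binom{m}{k}$: for any $k$-subset $A\subseteq M$, a permutation $\sigma$ with $\sigma(A)=S_1$ yields (taking $a=b=1$ in the definition of a link) a link whose first coordinate is $L_1=A$, and distinct $A$ give distinct links. Feeding in the binomial bound, then the assumption $\abs{\calB}>\alpha\abs{\calL}$, then $\abs{\calL}\geq\binom{m}{k}$,
\[
\CIC_\mu(\Pi) \;\geq\; \frac{\gamma\abs{\calB}}{\binom{m}{k-n}\binom{m-k+n}{n}^n} \;>\; \frac{\alpha\gamma\abs{\calL}}{\binom{m}{k}\,m^{n^2}} \;\geq\; \frac{\alpha\gamma}{m^{n^2}},
\]
contradicting the hypothesis $\CIC_\mu(\Pi)\leq\alpha\gamma/m^{n^2}$.

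I expect the only delicate points to be computing the single-atom probability under $\mu$ exactly and observing that $\abs{\calL}\geq\binom{m}{k}$, so that the $m^{n^2}$ factor in the hypothesis precisely absorbs the binomial overhead; everything else is routine manipulation of quantities already set up above.
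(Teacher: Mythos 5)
Your proof is correct and takes essentially the same approach as the paper's: both expand $\CIC_\mu(\Pi)$ via \Cref{fact:MutualInformationToKLDivergence} into an expected KL divergence, restrict to contributions coming from links, and absorb the combinatorial overhead $\binom{m}{k-n}\binom{m-k+n}{n}^n / \abs{\calL}$ into the $m^{n^2}$ factor. The only differences are cosmetic bookkeeping: you isolate a single witness $D_{\vb{L}}$ per broken link rather than bounding the inner sum over $D$ by the max, and you lower-bound the number of links by $\binom{m}{k}$ via the first coordinate, whereas the paper uses $\abs{\calL}\geq\binom{m}{k-n+1}$ via the common intersection and leaves the binomial inequality implicit.
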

\begin{proof}
Let $\ell$ be the number of links, and observe that $\ell \geq \binom{m}{k-n+1}$, because for every set $E$ of $k-n+1$ items, there exists at least one link $\vb{L}$ such that $L_1 \cap \dots \cap L_n = E$, and clearly no link can satisfy this for two distinct $E, E'$.

Then by \Cref{fact:MutualInformationToKLDivergence} and the previous fact,
\begin{align*}
    \CIC_\mu(\Pi) \quad &= \quad \MI_{(\vb{X}, D) \sim \mu}(\Pi(\vb{X}) \mutual \vb{X} \given D) \\
    &= \quad \E_{(\vb{X}, D) \sim \mu}[\KL(\Pi(\vb{X}) \relative \Pi(D))] \\
    &= \quad \frac{1}{\binom{m}{k-n} \binom{m-k+n}{n}^n} \sum_{\substack{D \subseteq M \\ \abs{D} = k-n}} \sum_{\substack{X_1, \dots, X_n \supseteq D \\ \abs{X_1} = \dots = \abs{X_n} = k}} \KL(\Pi(\vb{X}) \relative \Pi(D)) \\
    &\geq \quad \frac{1}{\ell m^{n^2}} \sum_{\substack{D \subseteq M \\ \abs{D} = k-n}} \sum_{\substack{X_1, \dots, X_n \supseteq D \\ \abs{X_1} = \dots = \abs{X_n} = k}} \KL(\Pi(\vb{X}) \relative \Pi(D)) \\
    &\geq \quad \frac{1}{\ell m^{n^2}} \sum_{\substack{D \subseteq M \\ \abs{D} = k-n}} \sum_{\substack{L_1, \dots, L_n \supseteq D \\ \text{$\vb{L}$ is a link}}} \KL(\Pi(\vb{L}) \relative \Pi(D)) \\
    &= \quad \frac{1}{\ell m^{n^2}} \sum_{\text{$\vb{L}$ is a link}} \sum_{\substack{D \subseteq L_1 \cap \dots \cap L_n \\ \abs{D} = k-n}} \KL(\Pi(\vb{L}) \relative \Pi(D)) \\
    &\geq \quad \frac{1}{\ell m^{n^2}} \sum_{\text{$\vb{L}$ is a link}} \max_{\substack{D \subseteq L_1 \cap \dots \cap L_n \\ \abs{D} = k-n}} \KL(\Pi(\vb{L}) \relative \Pi(D)) \enspace .
\end{align*}

Observe that this last quantity is just $1/m^{n^2}$ times the ``average brokenness'' of a link. Therefore, if more than an $\alpha$ fraction of links are $\gamma$-broken, $\CIC_\mu(\Pi) \geq \alpha\gamma/m^{n^2}$.
\end{proof}

\begin{lemma} \label{lemma:FewBrokenLinksMeansUnbrokenChain}
    If $\CIC_\mu(\Pi) \leq \gamma/m^{n^2+1}$, then there exists a $\gamma$-unbroken chain.
\end{lemma}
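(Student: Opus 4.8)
The plan is to deduce this from \Cref{lemma:LowInformationMeansFewBrokenLinks} by a double-counting argument over incidences between chains and the links they contain. First, observe that the hypothesis rewrites as $\CIC_\mu(\Pi) \le \tfrac{1}{m}\cdot\tfrac{\gamma}{m^{n^2}}$, so \Cref{lemma:LowInformationMeansFewBrokenLinks} applied with $\alpha = 1/m$ yields that at most a $1/m$ fraction of all links are $\gamma$-broken. Write $\mathcal L$ for the set of all links and $B\subseteq\mathcal L$ for the set of $\gamma$-broken links, so $|B|\le|\mathcal L|/m$.

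Next I would count the set of pairs $(C,\vb L)$ where $C$ is a chain, $\vb L$ is a link occurring in $C$, and $\vb L\in B$. Summing over chains, this count equals $\sum_{C}(\text{number of }\gamma\text{-broken links in }C)$; summing over broken links, it equals $\sum_{\vb L\in B}(\text{number of chains containing }\vb L)$. The structural fact I would establish is that every link occurs in exactly the same number of chains; equivalently, since \Cref{def:Chain} shows each chain has exactly $\ell\coloneqq(k-1)(n-1)+1$ links (with $k\coloneqq m/n$), the $\ell\cdot(\#\text{chains})$ total (chain, link-in-chain) incidences are spread evenly over $\mathcal L$. Granting this, $\sum_{\vb L\in B}(\text{number of chains containing }\vb L)\le\tfrac{|B|}{|\mathcal L|}\cdot\ell\cdot(\#\text{chains})\le\tfrac{\ell}{m}(\#\text{chains})$. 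Since $\ell=(k-1)(n-1)+1=m-\tfrac{m}{n}-n+2<m$, this is strictly less than $\#\text{chains}$, so by averaging there is some chain that contains no $\gamma$-broken link, i.e.\ a $\gamma$-unbroken chain.

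The step I expect to be the main obstacle is the structural claim that every link lies in the same number of chains. I would prove it by a symmetry argument: the symmetric group $S_M$ on the items acts item-wise on links and on chains compatibly, chains correspond bijectively to permutations of $M$, and for each fixed position-shape $(a,b)$ within a chain, the map $\sigma\mapsto(\text{the link at position }(a,b)\text{ of the chain indexed by }\sigma)$ is constant-to-one onto the $S_M$-orbit of $\vb L(\mathrm{id},a,b)$; summing these multiplicities over the $\ell$ positions and checking that the per-link total is position-independent (as happens cleanly for $n=2$, where every link arises from some $\sigma$ at every shape) gives the claim. In fact a weaker statement would suffice — it is enough that no link lies in more than an $m/\ell>1$ factor above the average number of chains, so even a crude symmetry bound closes the argument; exact uniformity is only for convenience of the bookkeeping.
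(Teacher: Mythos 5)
Your proposal is correct and follows essentially the same approach as the paper's: apply \Cref{lemma:LowInformationMeansFewBrokenLinks} with $\alpha = 1/m$ to bound the fraction of $\gamma$-broken links, then double-count chain--link incidences using that each chain has $(k-1)(n-1)+1 \le m$ links and that, by symmetry, links appear (roughly) equally often across chains. You are in fact slightly more careful than the paper's one-line ``by symmetry, each link is part of the same number of chains'' assertion, in that you flag the $S_M$-orbit structure of links as a subtlety for $n>2$ and note that the weaker bound (no link lies in more than $m/\ell$ times the average number of chains) already closes the argument.
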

\begin{proof}
Let $\ell$ be the number of links and $c$ be the number of chains. Since each chain consists of $(k-1)(n-1)+1$ links, there are $((k-1)(n-1)+1)c$ link-chain pairs where the link is in the chain. By symmetry, each link is part of the same number of chains, so each link is part of $((k-1)(n-1)+1)c/\ell$ chains. Thus, to $\gamma$-break all $c$ chains, at least $\ell/((k-1)(n-1)+1) \geq \ell/m$ links must be $\gamma$-broken, so if fewer than a $1/m$ fraction of links are $\gamma$-broken, there exists a $\gamma$-unbroken chain. Applying \Cref{lemma:LowInformationMeansFewBrokenLinks} with $\alpha = 1/m$ completes the proof.
\end{proof}

\begin{lemma} \label{lemma:UnbrokenChainMeansCloseLinks}
    If $\vb{L}_{n+1}, \dots, \vb{L}_{(k-1)(n-1)+n+1}$ is a $\gamma$-unbroken chain, then $\TV(\Pi(\vb{L}_j), \Pi(\vb{L}_{j+1})) \leq \sqrt{2\gamma}$ for all $j \in [n+1, (k-1)(n-1)+n)]$.
\end{lemma}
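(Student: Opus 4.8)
The plan is to exploit the rigidity of a chain: two consecutive links $\vb{L}_j$ and $\vb{L}_{j+1}$ differ in exactly one player's coordinate, and, more importantly, they admit a common ``core'' set $D$ of size $k-n$ with $D \subseteq \bigcap_{i \in [n]} L_{j,i}$ and $D \subseteq \bigcap_{i \in [n]} L_{j+1,i}$. Once we have such a $D$, $\gamma$-unbrokenness of the chain forces both $\KL(\Pi(\vb{L}_j) \relative \Pi(D)) < \gamma$ and $\KL(\Pi(\vb{L}_{j+1}) \relative \Pi(D)) < \gamma$, and then Pinsker's Inequality together with the triangle inequality for total variation immediately gives $\TV(\Pi(\vb{L}_j), \Pi(\vb{L}_{j+1})) \le \sqrt{2\gamma}$.

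First I would identify the common core. Fix the permutation $\sigma$ associated to the chain. By \Cref{def:Chain}, the windows used by link $\vb{L}_{an+b}$ (i.e., the values $S_q$ with $\sigma(L_{an+b,i}) = S_q$ for some $i$) are exactly the $n$ consecutive windows $S_q$ with $q \in \{(a-1)n+b, \dots, an+b-1\}$. Passing from $\vb{L}_j$ to $\vb{L}_{j+1}$ is either the step $(a,b) \mapsto (a,b+1)$ (changing coordinate $b$) or the step $(a,n) \mapsto (a+1,1)$ (changing coordinate $n$); in both cases exactly one window $S_p$ is replaced by $S_{p+n}$, so the windows used across both links form a block of $n+1$ consecutive windows $S_p, S_{p+1}, \dots, S_{p+n}$. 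Since $S_q = [q, q+k-1]$, we have $\bigcap_{q=p}^{p+n} S_q = [p+n,\, p+k-1]$, a set of size exactly $k-n$. Taking $D \coloneqq \sigma^{-1}([p+n, p+k-1])$ then yields a size-$(k-n)$ set contained in every coordinate of both $\vb{L}_j$ and $\vb{L}_{j+1}$, hence in $\bigcap_i L_{j,i}$ and in $\bigcap_i L_{j+1,i}$. For $j$ in the range of the statement both links lie in the chain, so all window indices appearing here are valid.

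With $D$ in hand the conclusion is short: since the chain is $\gamma$-unbroken, neither $\vb{L}_j$ nor $\vb{L}_{j+1}$ is $\gamma$-broken, so instantiating the definition of ``$\gamma$-broken'' with the witness $D$ gives $\KL(\Pi(\vb{L}_j) \relative \Pi(D)) < \gamma$ and $\KL(\Pi(\vb{L}_{j+1}) \relative \Pi(D)) < \gamma$. Pinsker's Inequality then gives $\TV(\Pi(\vb{L}_j), \Pi(D)) < \sqrt{\gamma/(2\log e)}$ and likewise for $\vb{L}_{j+1}$, and the triangle inequality yields $\TV(\Pi(\vb{L}_j), \Pi(\vb{L}_{j+1})) < 2\sqrt{\gamma/(2\log e)} = \sqrt{2\gamma/\log e} \le \sqrt{2\gamma}$, where the last step uses $\log e > 1$.

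The only real work is the combinatorial bookkeeping in the second paragraph: confirming that advancing to the next link in a chain shifts a single window by exactly $n$, and that the resulting block of $n+1$ consecutive windows therefore has common intersection of size precisely $k-n$ (this is where $k = m/n$ and the congruence $m \equiv n \bmod n^2$ are used, ensuring $k-n \ge 0$ and that the windows $S_1, \dots, S_{k(n-1)+1}$ are long enough for the indices appearing). Everything after that is a one-line application of Pinsker's Inequality and the triangle inequality.
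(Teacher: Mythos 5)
Your proof is correct and takes essentially the same approach as the paper: identify the common core $D = \bigcap_{\ell=j-n}^{j} S_\ell$ of size $k-n$ contained in every coordinate of both consecutive links, then combine $\gamma$-unbrokenness with Pinsker's Inequality and the triangle inequality. Your write-up is more detailed on the combinatorial bookkeeping (the paper simply asserts the form of $D$) and you also track the extra $\log e$ factor from Pinsker before dropping it, but the argument is the same.
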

\begin{proof}
WLOG, let the chain be specified by the identity permutation (otherwise permute the items). Fix any $j \in [n+1, (k-1)(n-1)+n]$, and let

\[
    D \quad \coloneqq \quad \bigg(\bigcap_{i \in [n]} L_{j,i}\bigg) \cap \bigg(\bigcap_{i \in [n]} L_{j+1,i}\bigg) \quad = \quad \bigcap_{\ell=j-n}^j S_\ell \enspace .
\]

Observe that $\abs{D} = k-n$. By definition of $\gamma$-unbroken, $KL(\Pi(\vb{L}_j) \relative \Pi(D)), \KL(\Pi(\vb{L}_{j+1}) \relative \Pi(D)) < \gamma$. Therefore, by triangle inequality and Pinsker's Inequality,
\[
    \TV(\Pi(\vb{L}_j), \Pi(\vb{L}_{j+1})) \quad \leq \quad \TV(\Pi(\vb{L}_j), \Pi(D)) + \TV(\Pi(\vb{L}_{j+1}), \Pi(D)) \quad \leq \quad \sqrt{2\gamma} \enspace . \qedhere
\]
\end{proof}

The following result is the randomized protocol equivalent of a rectangle argument for deterministic protocols, and the proof follows by observing that at each point of the protocol, the next addition to the transcript depends only on the current transcript and the input of the player next to act. See a full proof in, e.g.,~\cite{BravermanO15}.

\begin{lemma} \label{lemma:RandomizedRectangle}
    There exist functions $\rho_1, \dots, \rho_n$ such that for any input $\vb{X}$ and transcript $\pi$,
    \[
        \Pr[\Pi(\vb{X}) = \pi] \quad = \quad \prod_{i \in [n]} \rho_i(X_i, \pi) \enspace .
    \]
\end{lemma}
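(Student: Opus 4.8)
The plan is to unwind the definition of a blackboard communication protocol round by round. Recall that in the number-in-hand blackboard model a protocol proceeds in rounds, and that at the start of each round both the identity of the player who speaks next and whether the protocol has already terminated are determined \emph{solely by the contents of the blackboard so far} --- i.e., by the partial transcript --- and not by any player's input. When it is player $i$'s turn to speak, the bits she appends to the blackboard are drawn from a distribution that depends only on her input $X_i$, her private randomness, and the current transcript. I would first make this structural decomposition precise: write a transcript as a sequence of messages $\pi = (\pi_1, \dots, \pi_T)$, let $\pi_{<t}$ denote the prefix written before round $t$, let $T$ be the (transcript-determined) number of rounds, and let $\mathrm{spk}(\pi_{<t}) \in [n]$ be the (transcript-determined) player who speaks in round $t$.

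Next I would expand the transcript probability as a telescoping product over rounds,
\[
    \Pr[\Pi(\vb X) = \pi] \quad = \quad \prod_{t=1}^{T} \Pr\big[\,\text{round } t \text{ message} = \pi_t \;\big|\; \pi_{<t},\, X_{\mathrm{spk}(\pi_{<t})}\big] \enspace ,
\]
which is valid because, conditioned on the prefix $\pi_{<t}$, the round-$t$ message is independent of all other players' inputs and of all the private randomness used in the other rounds. Crucially, each factor depends on $\vb X$ only through the single coordinate $X_{\mathrm{spk}(\pi_{<t})}$, and on the remaining data only through $\pi$ itself, since both $\pi_{<t}$ and $\pi_t$ are functions of $\pi$ (and so are $T$ and $\mathrm{spk}(\cdot)$). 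I would then define, for each $i \in [n]$,
\[
    \rho_i(X_i, \pi) \quad \coloneqq \quad \prod_{\substack{t \in [T] \\ \mathrm{spk}(\pi_{<t}) = i}} \Pr\big[\,\text{round } t \text{ message} = \pi_t \;\big|\; \pi_{<t},\, X_i\big] \enspace ,
\]
with $T$ and $\mathrm{spk}(\cdot)$ read off from $\pi$; grouping the rounds of the product above according to their speaker immediately yields $\Pr[\Pi(\vb X) = \pi] = \prod_{i \in [n]} \rho_i(X_i, \pi)$, as desired.

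There is one point that calls for a little care rather than being a genuine obstacle: public randomness. If $\Pi$ uses a public random string $R$, one handles it by regarding (an encoding of) $R$ as a prefix of the transcript $\pi$ --- equivalently, by conditioning on $R$ --- so that every round-by-round factorization above holds verbatim once we condition, and the single extra factor $\Pr[R = r_\pi]$, where $r_\pi$ is the public string recorded in $\pi$, can be folded into $\rho_1$. I expect the only genuinely load-bearing fact to be the one already emphasized: in the blackboard model the speaking order and the halting time are functions of the transcript alone and of no player's input, which is precisely what lets each factor in the product be attributed to a single player. A fully detailed proof along these lines appears in, e.g., \cite{BravermanO15}.
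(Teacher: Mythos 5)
Your proposal is correct and takes essentially the same approach as the paper, which itself only gives the one-sentence observation (each message depends only on the current transcript and the speaker's input) and defers to~\cite{BravermanO15}; you have simply made the round-by-round factorization and the grouping by speaker explicit, and added the minor remark on folding public randomness into the transcript.
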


The following theorem is the main technical result, and corresponds to $(5)$ in the proof sketch.

\begin{theorem}
    If $\CIC_\mu(\Pi) \leq 1/m^{n^2+9}$, then there exists a chain $\vb{L}_{n+1}, \dots \vb{L}_{(k-1)(n-1)+n+1}$ such that for $\vb{X}^{(0)} \coloneqq \vb{L}_{n+1}$ and $\vb{X}^{(1)} \coloneqq (L_{(k-1)(i-1)+n+1,i} : i \in [n])$,
    \[
        \TV(\Pi(\vb{X}_0), \Pi(\vb{X}_1)) \quad = \quad O(1/m) \enspace .
    \]
\end{theorem}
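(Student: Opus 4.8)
The plan is to convert the hypothesis $\CIC_\mu(\Pi)\le 1/m^{n^2+9}$ into a $\gamma$-unbroken chain with $\gamma\coloneqq 1/m^8$, and then push the ``consecutive links are close'' information of \Cref{lemma:UnbrokenChainMeansCloseLinks} along the chain to the diagonal input $\vb{X}^{(1)}$, using the product factorization of transcript probabilities from \Cref{lemma:RandomizedRectangle}. Concretely, I would first apply \Cref{lemma:FewBrokenLinksMeansUnbrokenChain} (since $1/m^{n^2+9}=\gamma/m^{n^2+1}$) to get a $\gamma$-unbroken chain, which after relabeling items I take to be the identity chain $\vb{L}_{n+1},\dots,\vb{L}_{(k-1)(n-1)+n+1}$ of \Cref{def:Chain}; by \Cref{lemma:UnbrokenChainMeansCloseLinks}, $\TV(\Pi(\vb{L}_j),\Pi(\vb{L}_{j+1}))\le\sqrt{2\gamma}$ for every consecutive pair. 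Because the argument will combine roughly $m$ such estimates with the multiplicative structure of \Cref{lemma:RandomizedRectangle}, I would work with squared Hellinger distance $h^2(P,Q)\coloneqq 1-\sum_\pi\sqrt{P(\pi)Q(\pi)}$; since $h^2\le\TV\le\sqrt{2}\,h$ this is polynomially interchangeable with $\TV$, and in particular $h(\Pi(\vb{L}_j),\Pi(\vb{L}_{j+1}))\le(2\gamma)^{1/4}=O(1/m^2)$.

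Next I would record the combinatorics of the chain. Consecutive links differ in exactly one coordinate (the step $\vb{L}_{an+b}\to\vb{L}_{an+b+1}$ flips only coordinate $b$, and the step across a block boundary flips only coordinate $n$), so the chain is a walk in input space by single-coordinate moves. Moreover $\vb{X}^{(1)}$ is the ``diagonal read'' $(\vb{X}^{(1)})_i=(\vb{L}_{\ell_i})_i$ of the $n$ links $\vb{L}_{\ell_i}$ with $\ell_i=(k-1)(i-1)+n+1$; here $\ell_1=n+1$ so $\vb{L}_{\ell_1}=\vb{X}^{(0)}$, $\ell_n$ is the last link, and the $\ell_i$ are equally spaced by $k-1$. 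From the spacing one checks $(\vb{X}^{(1)})_i\approx S_{k(i-1)+1}$, which are pairwise $\ge k$-separated and hence give pairwise disjoint sets, so the diagonal is exactly the target one wants to reach. Finally, passing from $\vb{X}^{(0)}$ to $\vb{X}^{(1)}$ just advances each coordinate $i$ along its chain-trajectory from link $\ell_1$ to link $\ell_i$, i.e.\ $\sum_i(a_i-1)=O((k-1)(n-1))=O(m)$ single-coordinate moves in total, each identical to some chain step.

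The engine is a cut-and-paste principle: by \Cref{lemma:RandomizedRectangle}, $\Pr[\Pi(\vb{X})=\pi]=\prod_i\rho_i(X_i,\pi)$, so whenever $\{A_i,B_i\}=\{A'_i,B'_i\}$ as multisets for every coordinate $i$, one has $\Pr[\Pi(\vb{A})=\pi]\Pr[\Pi(\vb{B})=\pi]=\Pr[\Pi(\vb{A}')=\pi]\Pr[\Pi(\vb{B}')=\pi]$ pointwise, hence $h(\Pi(\vb{A}),\Pi(\vb{B}))=h(\Pi(\vb{A}'),\Pi(\vb{B}'))$. This lets me ``re-contextualize'' a single-coordinate move: a flip of coordinate $c$ between values $A,A'$ that is cheap in the chain's context can be charged, through this identity plus a couple of triangle-inequality steps, to a flip in a different context. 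I would then introduce hybrids $\vb{H}_j$ that keep the already-finalized coordinates (those $i$ with $\ell_i\le j$) pinned at their $\vb{X}^{(1)}$-value and track the chain on the rest; then $\vb{H}_{n+1}=\vb{X}^{(0)}$, $\vb{H}_{(k-1)(n-1)+n+1}=\vb{X}^{(1)}$, each pair $\vb{H}_{j-1},\vb{H}_j$ either coincides or differs in one coordinate by exactly the two values that $\vb{L}_{j-1},\vb{L}_j$ differ by, and $h(\Pi(\vb{X}^{(0)}),\Pi(\vb{X}^{(1)}))\le\sum_j h(\Pi(\vb{H}_{j-1}),\Pi(\vb{H}_j))$. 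Bounding each nontrivial term by the cut-and-paste identity against the corresponding chain step and summing over the $O(m)$ steps, with each step costing $O((2\gamma)^{1/4})=O(1/m^2)$, yields $h(\Pi(\vb{X}^{(0)}),\Pi(\vb{X}^{(1)}))=O(m)\cdot O(1/m^2)=O(1/m)$, whence $\TV(\Pi(\vb{X}^{(0)}),\Pi(\vb{X}^{(1)}))\le\sqrt{2}\,h=O(1/m)$.

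The main obstacle is exactly the accounting in the last paragraph: a careless combination of the cut-and-paste identity with the triangle inequality produces a recursion of the shape $\delta_j\lesssim 3\delta_{j-1}+O((2\gamma)^{1/4})$ for the ``drift'' $\delta_j\coloneqq h(\Pi(\vb{H}_j),\Pi(\vb{L}_j))$, which compounds exponentially over the $\Theta(m)$-long chain and is useless. Forcing the errors to add rather than compound is the real content: one must either finalize coordinates in an order that keeps the hybrid provably near the chain at all times, or argue instead through the $n$-way cut-and-paste identity $\Pr[\Pi(\vb{X}^{(1)})=\pi]\prod_{t\ge 2}\Pr[\Pi(\vb{D}_t)=\pi]=\prod_{j=1}^n\Pr[\Pi(\vb{L}_{\ell_j})=\pi]$ directly (using that the $\vb{L}_{\ell_j}$ all lie on the chain and are therefore mutually $O(1/m^2)$-close), while being careful about transcripts of zero probability under some inputs — which Hellinger handles gracefully since it never divides by a probability. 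This delicate step is what forces the generous slack $\gamma=1/m^8$, i.e.\ the exponent $n^2+9$ in the hypothesis.
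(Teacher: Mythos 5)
You correctly set up the scaffold: apply \Cref{lemma:FewBrokenLinksMeansUnbrokenChain} with $\gamma = 1/m^8$ to get an unbroken chain, invoke \Cref{lemma:UnbrokenChainMeansCloseLinks} for consecutive links, recognize the single-coordinate walk structure, and extract the cut-and-paste consequence of \Cref{lemma:RandomizedRectangle}. But the proposal never actually proves the theorem: in the final paragraph you yourself concede that the naive hybrid telescoping gives $\delta_j \lesssim 3\delta_{j-1} + O(\gamma^{1/4})$, which is useless over a $\Theta(m)$-length chain, and then gesture at two unformed alternatives. This is precisely the technical core of the result, and neither alternative is correct as stated. The ``finalize coordinates in a good order'' idea has no instantiation, and the cut-and-paste identity does \emph{not} transport a cheap chain step $\vb{L}_{j-1}\to\vb{L}_j$ to the corresponding hybrid step $\vb{H}_{j-1}\to\vb{H}_j$: once a coordinate $i$ has been pinned to its $\vb{X}^{(1)}$-value, the multiset $\{(\vb{H}_{j-1})_i, (\vb{H}_j)_i\}$ no longer equals $\{(\vb{L}_{j-1})_i, (\vb{L}_j)_i\}$, so the two Hellinger distances need not be equal. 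The $n$-way factorization identity, even when the unspecified $\vb{D}_t$ are chosen so that it holds, relates products of transcript probabilities; it does not immediately convert into a Hellinger or TV bound on $\Pi(\vb{X}^{(1)})$ without further work that you do not supply.

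The paper takes a genuinely different, and more elaborate, route at exactly this point. It introduces the per-transcript quantity $s(\pi) = \sum_{j} \big\lvert (p_j(\pi) - p_{j+n}(\pi)) \prod_{i \in [n-1]} p_{j+i}(\pi) \big\rvert$, whose $\ell_1$-sum over transcripts is controlled by the consecutive-link TV bounds, and proves (\Cref{prop:Chaining}) the \emph{pointwise} inequality $(p_i(\pi) - p_{jn+i}(\pi)) \prod_{\ell \ne i} p_\ell(\pi) \le mn\, s(\pi)$, by constructing an auxiliary decreasing sequence $q_j$ that lower-bounds $p_j$ and drives the induction. This is the step that prevents compounding: the drift of a single-coordinate change to $\vb{X}_i$ (which is off-chain) is charged against the global quantity $s(\pi)$, not against the previous hybrid. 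It then combines the $n$ single-coordinate changes with \Cref{prop:Rectangling}, which is not a plain triangle inequality but a TV subadditivity with a \emph{fixed anchor} $\vb{X}^{(0)}$ (proved via the elementary \Cref{lemma:Numbers}), so the costs add as $n \cdot O(n/m^2) = O(n^2/m^2) = O(1/m)$ rather than compound. Neither \Cref{prop:Chaining} (or any analogue of the $q_j$-sequence argument) nor the fixed-anchor combination appears in your proposal, so there is a genuine gap between your sketch and a complete proof.
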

\begin{proof}
By \Cref{lemma:FewBrokenLinksMeansUnbrokenChain}, there exists a $1/m^8$-unbroken chain $\vb{L}_{n+1}, \dots, \vb{L}_{(k-1)(n-1)+n+1}$. WLOG, let this chain be specified by the identity permutation. Observe that in the chain, each player $i$ only ever holds as input the sets $S_j$ such that $j \equiv i \bmod{n}$.

Then for all $j \in [1, k(n-1)+1]$, let $i(j) \in [n]$ be such that $j \equiv i \bmod{n}$, and let $p_j(\pi) \coloneqq \rho_{i(j)}(S_j, \pi)$. Additionally, let
\begin{align*}
    s(\pi) \quad &\coloneqq \quad \sum_{j=0}^{k(n-1)-n} \bigg\vert\prod_{i \in [n]} p_{j+i}(\pi) - \prod_{i \in [n]} p_{j+i+1}(\pi)\bigg\vert \\
    &= \quad \sum_{j=1}^{(k-1)(n-1)} \bigg\vert (p_j(\pi) - p_{j+n}(\pi)) \prod_{i \in [n-1]} p_{j+i}(\pi) \bigg\vert \enspace .
\end{align*}

We now prove two supporting propositions that allow us to combine different inputs in the chain.

\begin{proposition} \label{prop:Chaining}
    For all transcripts $\pi$ and $i \in [2, n], j \in [k-1]$ resulting in valid indices,
    \[
        (p_i(\pi) - p_{jn+i}) \prod_{\ell \in [n] \setminus \{i\}} p_\ell(\pi) \quad \leq \quad mn s(\pi)
    \]
\end{proposition}
\begin{proof}
Fix $i \in [n]$ and a transcript $\pi$, and denote $p_j \coloneqq p_j(\pi), s \coloneqq s(\pi)$. Since all $p_j \geq 0$, we assume $\prod_{j \in [n]} p_j \geq ms$, or the proposition follows trivially. For all $j$ resulting in valid indices, define
\[
    q_j \quad \coloneqq \quad \begin{cases}
        p_j & j \leq n \\
        q_{j-n} - s/(\prod_{\ell \in [n-1]} q_{j-\ell}) & j > n
    \end{cases} \enspace .
\]

We first prove some properties about $q_j$.

\begin{lemma} \label{lemma:NTupleRelation}
    For all $j$ resulting in valid indices,
    \[
        \prod_{\ell \in [n]} q_{j+\ell+1} \quad = \quad \prod_{\ell \in [n]} q_{j+\ell} - s \enspace .
    \]
\end{lemma}
\begin{proof}
Follows by definition.
\end{proof}

\begin{lemma} \label{lemma:NTupleBound}
    For all $j$ resulting in valid indices,
    \[
        \prod_{\ell \in [n]} q_{j+\ell} \quad \geq \quad (m-j)s \enspace .
    \]
\end{lemma}
\begin{proof}
Follows from the assumption $\prod_{\ell \in [n]} q_\ell \geq ms$ and repeated applications of \Cref{lemma:NTupleRelation}.
\end{proof}

\begin{lemma} \label{lemma:SingleBounds}
    For all $j$ resulting in valid indices, $q_j \in (0, p_j]$.
\end{lemma}
\begin{proof}
Since $\prod_{\ell \in [n]} q_\ell \geq ms$, $q_j > 0$ for all $j \in [n]$. Then because $\prod_{\ell \in [n]} q_{j+\ell} > 0$ by \Cref{lemma:NTupleBound}, we can iteratively conclude that each subsequent $q_j > 0$.

Now suppose for induction that $q_j \leq p_j$ for all $j \leq j'$. Then
\[
    q_{j'} \quad = \quad q_{j'-n} - \frac{s}{\prod_{\ell \in [n-1]} q_{j'-\ell}} \quad \leq \quad p_{j'-n} - \frac{s}{\prod_{\ell \in [n-1]} p_{j'-\ell}} \quad \leq \quad p_{j'} \enspace ,
\]
where the first inequality follows by induction, and the second inequality follows because
\[
    (p_{j'-n} - p_{j'}) \prod_{\ell \in [n-1]} p_{j'-\ell} \quad \leq \quad \bigg\vert (p_{j'-n} - p_{j'}) \prod_{\ell \in [n-1]} p_{j'-\ell} \bigg\vert \quad \leq \quad s \enspace . \qedhere
\]
\end{proof}

\begin{lemma} \label{lemma:SingleDecreasing}
    For all $i \in [n]$, $q_{jn+i}$ is decreasing in $j$.
\end{lemma}
\begin{proof}
Follows by definition because all $q_j > 0$ (\Cref{lemma:SingleBounds}).
\end{proof}

We now prove the main technical lemma involving the $q_j$.

\begin{lemma} \label{lemma:WhatThe}
    For all $j$ resulting in valid indices,
    \[
        (q_i - q_{jn+i}) \prod_{\ell \in [n] \setminus \{i\}} q_\ell(\pi) \quad \leq \quad jns \enspace .
    \]
\end{lemma}
\begin{proof}
Assume for induction that the claim holds for $j \leq j'$ (the base case $j = 0$ is trivial). Then
\begin{align*}
    (q_i - q_{(j'+1)n+i}) \prod_{\ell \in [n] \setminus \{i\}} q_\ell \quad &= \quad \bigg(q_i - q_{j'n+i} + \frac{s}{\prod_{\ell \in [n-1]} q_{j'n+i+\ell}}\bigg) \prod_{\ell \in [n] \setminus \{i\}} q_\ell(\pi) \\
    &= \quad j'ns + \frac{\prod_{\ell \in [n] \setminus \{i\}} q_\ell}{\prod_{\ell \in [n-1]} q_{j'n+i+\ell}} s \\
    &= \quad j'ns + \frac{q_{j'n+i} \prod_{\ell \in [n] \setminus \{i\}} q_\ell}{\prod_{\ell \in [n]} q_{j'n+i-1+\ell}} s \\
    &\leq \quad j'ns + \frac{\prod_{\ell \in [n]} q_\ell}{\prod_{\ell \in [n]} q_{j'n+i-1+\ell}} s \\
    &= \quad j'ns + \frac{\prod_{\ell \in [n]} q_\ell}{\prod_{\ell \in [n]} q_\ell - (j'n+i-1)s} s \\
    &\leq \quad j'ns + \frac{\prod_{\ell \in [n]} q_\ell}{\prod_{\ell \in [n]} q_\ell - k(n-1)s} s \\
    &\leq \quad j'ns + \frac{kns}{kns - k(n-1)s} s \\
    &= \quad (j'+1)ns \enspace ,
\end{align*}
where the second line follows by the induction hypothesis, the fourth line follows because $q_i \geq q_{j'n+i}$ (\Cref{lemma:SingleDecreasing}), the fifth line follows by iterative applications of \Cref{lemma:NTupleRelation}, the sixth line follows because $(j'+1)n+i$ must be a valid index, and the seventh line follows because $x/(x-y)$ is decreasing in $x$ for all $x > y$, and $\prod_{\ell \in [n]} q_\ell \geq ms = kns$.
\end{proof}

Since $q_j = p_j$ for $j \in [n]$, and $q_j \leq p_j$ for all $j$ (\Cref{lemma:SingleBounds}), we conclude by \Cref{lemma:WhatThe} that
\[
    (p_i - p_{jn+i}) \prod_{\ell \in [n] \setminus \{\ell\}} p_\ell \quad = \quad (q_i - p_{jn+i}) \prod_{\ell \in [n] \setminus \{\ell\}} q_\ell \quad \leq \quad (q_i - q_{jn+i}) \prod_{\ell \in [n] \setminus \{\ell\}} p_\ell \quad \leq \quad mns \enspace . \qedhere
\]
\end{proof}

\begin{proposition} \label{prop:Rectangling}
    Let $\vb{X}, \vb{Y}, \vb{Z} \in (2^M)^n$ and let $I, J \subseteq [n]$ be disjoint sets such that $\vb{X}$ differs from $\vb{Y}$ in coordinates $I$, and $\vb{X}$ differs from $\vb{Z}$ in coordinates $J$. Let $\vb{W}$ be defined $W_i = Y_i$ when $i \in I$, $W_i = Z_i$ when $i \in J$, and $W_i = X_i$ otherwise. Then
    \[
        \TV(\Pi(\vb{X}), \Pi(\vb{W})) \quad \leq \quad \TV(\Pi(\vb{X}), \Pi(\vb{Y})) + \TV(\Pi(\vb{X}), \Pi(\vb{Z})) \enspace .
    \]
\end{proposition}
\begin{proof}
We start with a fact about numbers.

\begin{lemma} \label{lemma:Numbers}
    For any $x_i, x_j, w_i, w_j \geq 0$, $x_i x_j - w_i w_j \leq (x_i x_j - x_i w_j) + (x_i x_j - w_i x_j)$.
\end{lemma}
\begin{proof}
There are a few cases:
\begin{itemize}[topsep=4pt]
    \item If $x_i < w_i$ and $x_j \geq w_j$, then $x_i x_j - w_i w_j < x_i x_j - x_i w_j$ and $x_i x_j - w_i x_j \geq 0$.

    \item If $x_i \geq w_i$ and $x_j < w_j$, the claim follows by similar reasoning.

    \item If $x_i \geq w_i$ and $x_j \geq w_j$, or $x_i < w_i$ and $x_j < w_j$, then
    \begin{align*}
        0 \quad &\leq \quad (x_i - w_i)(x_j - w_j) \\
        0 \quad &\leq \quad x_i x_j - w_i x_j - x_i w_j + w_i w_j \\
        x_i x_j - w_i w_j \quad &\leq \quad x_i x_j - w_i x_j + x_i x_j - x_i w_j \enspace . \qedhere
    \end{align*}
\end{itemize}
\end{proof}

Let $\calT$ be the collection of transcripts $\pi$ such that $\Pr[\Pi(\vb{X}) = \pi] \geq \Pr[\Pi(\vb{W}) = \pi]$. By the definition of total variation distance and \Cref{lemma:RandomizedRectangle,lemma:Numbers},
\begin{align*}
    & \TV(\Pi(\vb{X}), \Pi(\vb{W})) \\
    &\qquad\quad = \quad \sum_{\pi \in \calT} \Big(\Pr[\Pi(\vb{X}) = \pi] - \Pr[\Pi(\vb{W}) = \pi]\Big) \\
    &\qquad\quad = \quad \sum_{\pi \in \calT} \bigg(\prod_{i \in [n]} \rho_i(X_i, \pi) - \prod_{i \in [n]} \rho_i(W_i, \pi)\bigg) \\
    & \qquad\quad = \quad \sum_{\pi \in \calT} \bigg(\prod_{i \in I} \rho_i(X_i, \pi) \prod_{i \in J} \rho_i(X_i, \pi) - \prod_{i \in I} \rho_i(W_i, \pi) \prod_{i \in J} \rho_i(W_i, \pi)\bigg) \prod_{i \not\in I \cup J} \rho_i(X_i, \pi) \\
    & \qquad\quad \leq \quad \sum_{\pi \in \calT} \bigg(\bigg(\prod_{i \in I} \rho_i(X_i, \pi) \prod_{i \in J} \rho_i(X_i, \pi) - \prod_{i \in I} \rho_i(W_i, \pi) \prod_{i \in J} \rho_i(X_i, \pi)\bigg)\prod_{i \not\in I \cup J} \rho_i(X_i, \pi) \\
    & \qquad\qquad\qquad\qquad\quad + \bigg(\prod_{i \in I} \rho_i(X_i, \pi) \prod_{i \in J} \rho_i(X_i, \pi) - \prod_{i \in I} \rho_i(X_i, \pi) \prod_{i \in J} \rho_i(W_i, \pi)\bigg)\prod_{i \not\in I \cup J} \rho_i(X_i, \pi)\bigg) \\
    & \qquad\quad \leq \quad \TV(\Pi(\vb{X}), \Pi(\vb{Y})) + \TV(\Pi(\vb{X}), \Pi(\vb{Z})) \enspace . \qedhere
\end{align*}
\end{proof}

We are now ready to prove the theorem. For all $i \in [n]$, let $\vb{X}_i$ be $\vb{X}^{(0)}$ with the $i$\textsuperscript{th} coordinate replaced by $X^{(1)}_i$ (note that $\vb{X}^{(0)} = \vb{X}_1$), and let $\calT_i$ be the set of transcripts $\pi$ such that $\Pr[\Pi(\vb{X}^{(0)}) = \pi] \geq \Pr[\Pi(\vb{X}_i) = \pi]$. Then for all $i \in [2, n]$,
\begin{align*}
    \TV(\Pi(\vb{X}^{(0)}), \Pi(\vb{X}_i)) \quad &= \quad \sum_{\pi \in \calT_i} \Big(\Pr[\Pi(\vec{X}_1) = \pi] - \Pr[\Pi(\vec{X}_i) = \pi]\Big) \\
    &= \quad \sum_{\pi \in \calT_i} \bigg((p_i(\pi) - p_{(k-1)(i-1)+i}(\pi)) \prod_{\ell \in [n] \setminus \{i\}} p_\ell(\pi)\bigg) \\
    &\leq \quad \sum_{\pi \in \calT_i} mns(\pi) \\
    &\leq \quad mn\sum_\pi s(\pi) \\
    &= \quad mn\sum_{j=1}^{(k-1)(n-1)} \sum_\pi \bigg\vert (p_j(\pi) - p_{j+n}(\pi)) \prod_{\ell \in [n-1]} p_{j+\ell} \bigg\vert \\
    &= \quad 2mn\sum_{j=1}^{(k-1)(n-1)} \TV(\Pi(\vb{L}_{n+j}), \Pi(\vb{L}_{n+j+1})) \\
    &< \quad 2mn \cdot m \cdot \sqrt{\frac{2}{m^8}} \\
    &< \quad \frac{3n}{m^2} \enspace ,
\end{align*}
where the third line is by \Cref{prop:Chaining} (because $k - 1$ is a multiple of $n$), and the second to last line is by \Cref{lemma:UnbrokenChainMeansCloseLinks}.

Finally, let $\vb{X}_{\to i}$ denote $\vb{X}^{(0)}$ with the first $i$ coordinates replaced by those of $\vb{X}^{(1)}$. For each $i \in [n-1]$ in sequence, we apply \Cref{prop:Rectangling} to conclude that
\begin{align*}
    \TV(\Pi(\vb{X}^{(0)}), \Pi(\vb{X}_{\to (i+1)})) \quad &\leq \quad \TV(\Pi(\vb{X}^{(0)}), \Pi(\vb{X}_{\to i})) + \TV(\Pi(\vb{X}^{(0)}), \Pi(\vb{X}_{i+1})) \\
    &\leq \quad \TV(\Pi(\vb{X}^{(0)}), \Pi(\vb{X}_{\to i})) + \frac{3n}{m^2} \enspace .
\end{align*}

Chaining these inequalities together yields
\[
    \TV(\Pi(\vb{X}^{(0)}), \Pi(\vb{X}^{(1)})) \quad = \quad \TV(\Pi(\vb{X}^{(0)}), \Pi(\vb{X}_{\to n})) \quad \leq \quad \frac{3n^2}{m^2} \quad \leq \quad \frac{3}{m} \enspace . \qedhere
\]
\end{proof}

\begin{corollary} \label{cor:LowInformationMeansBadProtocol}
    If $\CIC_\mu(\Pi) \leq 1/m^{n^2+9}$, then $\Pi$ has worst-case error $1/2-O(1/m)$. In other words, if $\Pi$ has worst-case error bounded away from $1/2$, then $\CIC_\mu(\Pi) > 1/m^{n^2+9}$.
\end{corollary}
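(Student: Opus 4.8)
The plan is to feed the hypothesis $\CIC_\mu(\Pi) \le 1/m^{n^2+9}$ into the theorem immediately above, and then convert its transcript-closeness conclusion into a lower bound on the worst-case error $\delta$ of $\Pi$. Concretely, I will show $\delta \ge 1/2 - O(n^2/m)$; for the sufficiently large $m$ we are standingly assuming this yields exactly the contrapositive statement, since ``error bounded away from $1/2$'' forces the hypothesis to fail. The engine is that the two endpoints of the chain the theorem produces are, respectively, a genuine $1$-instance and (after a tiny correction) a genuine $0$-instance of \FS{m,n}, so closeness of their transcripts under $\Pi$ must clash with the error guarantee.

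First I would invoke the theorem above to obtain a chain $\vb{L}_{n+1}, \dots, \vb{L}_{(k-1)(n-1)+n+1}$ — WLOG the identity-permutation chain, and, inspecting its proof, one that is $1/m^8$-unbroken — satisfying $\TV(\Pi(\vb{X}^{(0)}), \Pi(\vb{X}^{(1)})) = O(1/m)$, where unwinding \Cref{def:Chain} gives $\vb{X}^{(0)} = \vb{L}_{n+1} = (S_1, \dots, S_n)$ and $\vb{X}^{(1)} = (S_1, S_{k+1}, S_{2k+1}, \dots, S_{(n-1)k+1})$. Next I would record the two endpoint facts. The tuple $\vb{X}^{(1)}$ has coordinates of size $k = m/n$ that partition $M$, hence are pairwise disjoint: it is a bona fide $1$-instance, $\textsc{Disj}(\vb{X}^{(1)}) = 1$, so $\Pi$ outputs $1$ on $\vb{X}^{(1)}$ with probability $\ge 1 - \delta$. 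The tuple $\vb{X}^{(0)}$, on the other hand, is only a link (its $n$-wise intersection $[n,k]$ has size $k - n + 1$ rather than $k-n$), so I route to it indirectly: pick any $D \subseteq \bigcap_i L_{n+1,i}$ with $|D| = k-n$, use $1/m^8$-unbrokenness to get $\KL(\Pi(\vb{X}^{(0)}) \relative \Pi(D)) < 1/m^8$, and apply Pinsker's inequality to get $\TV(\Pi(\vb{X}^{(0)}), \Pi(D)) = O(1/m^4)$. Now $\Pi(D)$ is the transcript of $\Pi$ on a $\mu$-sample conditioned on $D$, i.e.\ on $D$ together with $n$ independently chosen ``extra'' items per player; such a sample has $n$-wise intersection exactly $D$ — a genuine $0$-instance — unless the extra parts of two players collide, which by a union bound happens with probability at most $n^2/(m-k+n) = O(n^2/m)$. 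Hence $\Pi(D)$ outputs $0$ with probability $\ge 1 - \delta - O(n^2/m)$.

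Chaining the triangle inequality for total variation distance with the definition of worst-case error then finishes it: $\Pi(\vb{X}^{(0)})$ outputs $0$ with probability $\ge 1 - \delta - O(n^2/m) - O(1/m^4)$, so $\Pi(\vb{X}^{(1)})$ outputs $0$ with probability $\ge 1 - \delta - O(n^2/m) - O(1/m)$; but on the genuine $1$-instance $\vb{X}^{(1)}$ the probability of (incorrectly) outputting $0$ is at most $\delta$, so $1 - \delta - O(n^2/m) \le \delta$, i.e.\ $\delta \ge 1/2 - O(n^2/m)$. The only step with any content is the middle one — bridging from the chain endpoint $\vb{X}^{(0)}$, which is \emph{not} an \FS{}-promise input, to an honest $0$-instance mixture $\Pi(D)$ — and this is precisely what the ``$\gamma$-unbroken'' bookkeeping (together with Pinsker and the one-line collision bound on $\mu \mid D$) is built to supply; the rest is routine. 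Finally, taking the contrapositive: if $\Pi$ has worst-case error bounded away from $1/2$, then for $m$ large we cannot have $\CIC_\mu(\Pi) \le 1/m^{n^2+9}$, so $\CIC_\mu(\Pi) > 1/m^{n^2+9}$.
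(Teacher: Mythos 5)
Your proof takes the same overall route as the paper's — use the two chain endpoints $\vb{X}^{(0)}$ and $\vb{X}^{(1)}$ produced by the preceding theorem, recognize that $\vb{X}^{(1)}$ is a genuine $1$-instance (a partition of $M$), and translate transcript-closeness into an error lower bound — but you add a bridging step that the paper's terse proof simply elides. Specifically, the paper asserts ``$\vb{X}^{(0)}$ is a link, so $\FS{m,n}(\vb{X}^{(0)}) = 0$,'' treating $\vb{X}^{(0)}$ as a valid $\FS{m,n}$ promise input. As you correctly point out, a link's $n$-wise intersection is $[n,k]$, which has size $k-n+1$, not $k-n$; so $\vb{X}^{(0)}$ violates Promise 2 of $\FS{m,n}$, and the protocol has no obligation to output $0$ on it. Your extra step — pick $D \subseteq \bigcap_i L_{n+1,i}$ with $\abs{D} = k-n$, use $1/m^8$-unbrokenness plus Pinsker to bound $\TV(\Pi(\vb{X}^{(0)}), \Pi(D))$, note that $\Pi(D)$ is the transcript distribution on a $\mu$-conditional sample whose intersection is exactly $D$ (a valid $0$-instance) except for the $O(n^2/m)$-probability collision event, and then chain TV bounds — is precisely the right way to close this gap, and it buys you the error bound $1/2 - O(n^2/m)$ at the modest cost of a slightly larger constant in the $O(\cdot)$ when $n$ grows.

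Two minor remarks. First, the analogous issue recurs one level up: the distribution $\mu$ itself can produce samples whose $n$-wise intersection strictly contains $D$, which nominally violates condition~(2) of the Direct Sum theorem (requiring $\vb{X} \in \calX$ a.s.); the cleanest global repair is probably to relax Promise~2 of $\FS{}$ to ``$\abs{\bigcap_i X_i} \geq m/n - n$ or pairwise disjoint'' so that both the links and all $\mu$-samples satisfy it, which also subsumes your bridge without changing the communication bound. Second, your collision bound $n^2/(m-k+n)$ is for one fixed pair of players; a union bound over pairs gives $O(n^4/m)$, though this does not affect the final communication complexity asymptotics.
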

\begin{proof}
Let $\calT$ be the set of transcripts for which $\Pi$ outputs $1$.

Observe that $X^{(1)}_i = L_{(k-1)(i-1)+n+1,i} = S_{(k-1)(i-1)+i} = S_{k(i-1)+1}$. In other words, $\vb{X}^{(1)} = (S_1, S_{1+k}, S_{1+2k}, \dots, S_{1+(n-1)k})$ is comprised of disjoint sets, so $\FS{m,n}(\vb{X}^{(1)}) = 1$. On the other hand, $\vb{X}^{(0)}$ is a link, so $\FS{m,n}(\vb{X}^{(0)}) = 0$.

Therefore, we have two instances for which the total variation distance between the transcripts is $O(1/m)$, so $\abs{\Pr[\Pi(\vb{X}^{(0)}) \in \calT] - \Pr[\Pi(\vb{X}^{(1)}) \in \calT]} = O(1/m)$. Thus, $\Pi$ has $O(1/m)$ advantage distinguishing between instances $\vb{X}^{(0)}, \vb{X}^{(1)}$ and hence worst-case error at least $1/2-O(1/m)$.
\end{proof}

\subsection{Wrapping Up}

Before we can conclude, we need to ensure that $\mu$ satisfies the conditions of \Cref{thm:DirectSum}. Conditions $(2)$ and $(3)$ are satisfied by construction, so we just need to verify condition $(1)$. Let $\calX$ be the promise of \FS{m,n} and $\calX^*$ be the promise of \EFS{m,n,z}.

\begin{proposition} \label{prop:Safety}
    For $z = 2^{\Theta(\sqrt{m})}$ and $\varepsilon = 2^{-\Theta(\sqrt{m})}$, $\mu, \calX$ are $(z, \varepsilon)$-safe with respect to $\calX^*$.
\end{proposition}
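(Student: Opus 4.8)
The plan is to union bound, over the four promises of \Cref{def:EFS}, the probability that the random \EFS{m,n,z} instance fails to satisfy them. Throughout, take $z = 2^{c\sqrt{m}}$ for a sufficiently small absolute constant $c > 0$ and $\varepsilon = 2^{-\sqrt{m}}$, and write $k = m/n$ and $\lambda = \log(m^{1/3})/\log(2n)$ as in the construction; the instance is obtained by fixing one coordinate $\vb{X}_{\ell^*} \in \calX$ and drawing $\vb{X}_\ell \sim \mu$ independently for $\ell \ne \ell^*$, so that player $i$ holds the list $(X_{\ell,i})_{\ell \in [z]}$ (writing $X_{\ell,i}$ for the $i$-th coordinate of $\vb{X}_\ell$). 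Promises $1$ and $2$ hold deterministically: $\mu$ is supported on $\calX$ (condition $(2)$ of \Cref{thm:DirectSum}) and $\vb{X}_{\ell^*} \in \calX$ by hypothesis, so every coordinate has size $m/n$ and every instance meets the intersection/disjointness dichotomy. The one probabilistic fact I reuse is that, marginally, each $X_{\ell,i}$ with $\ell \ne \ell^*$ is a uniformly random $k$-subset of $M$ (drawing $D_\ell$ and then a uniform size-$k$ superset yields the uniform distribution), these being mutually independent over $\ell \ne \ell^*$, while $X_{\ell^*,i}$ is a fixed $k$-set.

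For Promise $3$ I would first rephrase ``$\vb{\closure{X}}_i$ is $\lambda$-sparse'' as: for every $L \subseteq [z]$ with $\abs{L} \le \lambda$, $\bigcap_{\ell \in L} X_{\ell,i} \ne \emptyset$. If $\lambda < 1$ this is automatic, so assume $\lambda \ge 1$, i.e.\ $2n \le m^{1/3}$; the key arithmetic point is then that $n^{\lambda} = (m^{1/3})^{\log n / \log(2n)} \le m^{1/3}$. Fix $i$ and $L$ with $\abs{L} = t \le \lambda$ and set $W = \abs*{\bigcap_{\ell \in L} X_{\ell,i}} = \sum_{j \in M} \prod_{\ell \in L} \mathbbm{1}[j \in X_{\ell,i}]$. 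The summands form a negatively associated family (each $\{\mathbbm{1}[j \in X_{\ell,i}]\}_j$ is a $0/1$ vector with fixed coordinate sum hence NA; the families for distinct $\ell$ are independent; and for each $j$ the product is a monotone function of the variables indexed by $j$, whose index sets are disjoint across $j$), so a Chernoff lower tail applies. Since $\E[W] = m\,(k/m)^{t} \ge m\,n^{-\lambda} \ge m^{2/3}$ (the same bound holding when $\ell^* \in L$, replacing one factor by the deterministic set), this gives $\Pr[W = 0] \le e^{-\Omega(\E[W])} = 2^{-\Omega(m^{2/3})}$. As there are at most $n\,z^{\lambda+1} = 2^{O(\sqrt{m}\log m)}$ pairs $(i, L)$ and $m^{2/3} \gg \sqrt{m}\log m$, Promise $3$ fails with probability $2^{-\Omega(m^{2/3})} \ll \varepsilon$.

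For Promise $4$ I would note that for $\ell \ne \ell^*$ a uniform $k$-set avoids a fixed set $S$ with $\abs{S} = n\log m$ with probability $\binom{m-k}{n\log m}/\binom{m}{n\log m} \ge \bigl(1 - (k + n\log m)/m\bigr)^{n\log m} \ge \bigl((1 - 1/n)^n\bigr)^{O(\log m)} = m^{-O(1)}$ (using $k/m = 1/n$ and $(1-1/n)^n \ge 1/4$ for $n \ge 2$). Since the $z - 1$ events $\{X_{\ell,i} \cap S = \emptyset\}_{\ell \ne \ell^*}$ are independent, the probability that no $\ell$ works for a fixed $(i, S)$ is at most $(1 - m^{-O(1)})^{z-1} \le \exp\bigl(-(z-1) m^{-O(1)}\bigr) = 2^{-2^{\Omega(\sqrt{m})}}$, which easily survives a union bound over the at most $n\binom{m}{n\log m} = 2^{O(n(\log m)^2)}$ choices of $(i, S)$. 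Adding the four failure probabilities gives a total of $2^{-\Omega(m^{2/3})} \le \varepsilon$, which is exactly $(z, \varepsilon)$-safety; note any $z = 2^{\Theta(\sqrt m)}$ is large enough for Promise $4$ yet small enough for the Promise $3$ union bound.

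The step I expect to be the real obstacle is Promise $3$: the union bound there ranges over $z^{\Theta(\lambda)} = 2^{\Theta(\sqrt{m}\log m)}$ choices of $L$, so one genuinely needs the pointwise estimate $\Pr[\bigcap_{\ell \in L} X_{\ell,i} = \emptyset] = 2^{-\Omega(m^{2/3})}$. A naive ``reveal one set at a time'' hypergeometric argument only controls $W$ to within a constant factor of its mean, yielding bounds like $e^{-\mathrm{poly}(m^{1/6})}$ once $n$ is as large as $\approx m^{1/3}$, which does not beat the union bound. What makes the argument go through is the combination of the negative-association observation, which licenses a genuine Chernoff lower tail $\Pr[W = 0] \le e^{-\Omega(\E[W])}$, and the identity $n^{\lambda} \le m^{1/3}$ built into the definition of $\lambda$, which forces $\E[W] \ge m^{2/3}$ — comfortably above $\sqrt{m}\log m$ — uniformly over all admissible $L$.
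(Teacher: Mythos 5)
Your proof is correct, but it takes a genuinely different route from the paper on the key step. Both arguments reduce to bounding, via a union bound, the probability that a single player's list of sets violates Promise~3 (sparsity) or Promise~4 (coverage). The paper handles the uniform $k$-set distribution by a \emph{coupling} trick: for Promise~3 it samples i.i.d.\ Bernoulli-$(1/(2n))$ sets $Y_\ell$, aborts if any $\abs{Y_\ell} > k$, and otherwise takes $X_\ell$ to be a uniform $k$-superset of $Y_\ell$; since $\closure{Y}_\ell \supseteq \closure{X}_\ell$, sparsity of $\vb{\closure{Y}}$ implies sparsity of $\vb{\closure{X}}$, and the Bernoulli structure makes the per-item probability $(1/(2n))^\lambda = m^{-1/3}$ immediate, giving $(1-m^{-1/3})^m \le e^{-m^{2/3}}$ (Promise~4 uses a symmetric coupling with Bernoulli-$(3/(2n))$ subsets). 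You instead work directly with the uniform $k$-sets via \emph{negative association}: each $\{\mathbbm{1}[j \in X_{\ell,i}]\}_{j}$ is a fixed-sum $0/1$ vector hence NA, and products over disjoint index sets preserve NA, so a Chernoff lower tail gives $\Pr[W=0] \le e^{-\Omega(\E W)}$ with $\E W = m/n^t \ge m^{2/3}$; your Promise~4 step is a direct hypergeometric estimate. The two routes hit the same numbers, $2^{-\Omega(m^{2/3})}$ for Promise~3 and $2^{-2^{\Omega(\sqrt m)}}$ for Promise~4, so the remaining union bounds are identical. One further structural difference: the paper disposes of the single fixed coordinate $\vb{X}_{\ell^*}$ by randomly permuting the items (so that it too becomes a uniform $k$-set), whereas you simply treat $X_{\ell^*,i}$ as a deterministic factor and observe that $\E W$ is unchanged --- which is slightly more direct and avoids the mild independence bookkeeping the permutation argument requires. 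As a small simplification to your own write-up, the full Chernoff lower tail is not needed for Promise~3: since the $W_j$ are NA indicators, $\Pr[W=0] \le \prod_j \Pr[W_j=0] \le e^{-\E W}$ follows directly from the NA product inequality.
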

\begin{proof}
There are two promises of \EFS{} not present in \FS{}: 
\begin{itemize}[topsep=4pt]
    \item Promise 3: $\vb{\closure{X}}_i$ is $\lambda \coloneqq \log(m^{1/3})/\log(2n)$-sparse.

    \item Promise 4: For all $i \in [n]$ and $S \subseteq M$ of size $n\log(m)$, there exists $\ell \in [z]$ such that $S \subseteq \closure{X}_{i,\ell}$.
\end{itemize}

Fix any player, and let $\vb{X} \in (2^M)^z$ be their input. For some $\ell' \in [z]$, $X_{\ell'}$ is an arbitrarily chosen set of size $k$, and all other $X_{\ell}$ are independent uniformly random sets of size $k$. Observe that both promises are preserved under permuting the items, so by applying a uniformly random permutation to the items, $X_{\ell'}$ also becomes an independent uniformly random set of size $k$. Therefore, we may WLOG assume that $\vb{X}$ is a list of independent, uniformly random sets of size $k$.

\begin{lemma} \label{lemma:Sparsity}
    $\vb{\closure{X}}$ is not $\lambda$-sparse w.p. at most $2^{-\Omega(\sqrt{m})}$.
\end{lemma}
\begin{proof}
Suppose $n \leq m^{1/3}$; otherwise, $\lambda = 1$ and the statement holds vacuously.

We sample $X_1, \dots, X_z$ in a roundabout way: let $Y_1, \dots, Y_z \subseteq M$ be sets sampled by independently including each item w.p. $1/(2n)$. If any $Y_\ell$ are of size more than $k$, we abort. Otherwise, we let each $X_\ell$ be a uniformly random superset of $Y_\ell$. Observe that since $\closure{Y}_\ell \supseteq \closure{X}_\ell$ for all $\ell \in [z]$, $\lambda$-sparsity of $\vb{\closure{Y}}$ implies $\lambda$-sparsity of $\vb{\closure{X}}$.

Let $\calL \subseteq 2^{[z]}$ be the collection of all sets of size $\lambda$. Then using independence and union bounds, the probability that $\vb{\closure{Y}}$ is \emph{not} $\ell$-sparse is
\begin{align*}
    \Pr\bigg[\bigcup_{L \in \calL} \bigg\{\bigcup_{\ell \in L} \closure{Y}_\ell = M\bigg\}\bigg] \quad &= \quad \Pr\bigg[\bigcup_{L \in \calL} \bigcap_{j \in M} \bigcup_{\ell \in L} \{j \in \closure{Y}_\ell\}\bigg] \\
    &= \quad \Pr\bigg[\bigcup_{L \in \calL} \bigcap_{j \in M} \closure{\bigcap_{\ell \in L} \{j \in Y_\ell\}}\bigg] \\
    &\leq \quad \abs{\calL} \bigg(1 - \bigg(\frac{1}{2n}\bigg)^\lambda\bigg)^m \\
    &= \quad \abs{\calL} \bigg(1 - \frac{1}{m^{1/3}}\bigg)^m \\
    &\leq \quad z^\lambda e^{-m^{2/3}} \quad = \quad 2^{-\Omega(m^{2/3})} \enspace .
\end{align*}

Additionally, by a Chernoff bound and union bound, the probability that some $Y_\ell$ is of size more than $k$ is at most $z e^{-\Theta(k)} \leq 2^{-\Omega(\sqrt{m})}$. The lemma follows by a union bound over the two events.
\end{proof}

\begin{lemma} \label{lemma:SmallCoverage}
    There exists $S \subseteq M$ of size $n\log(m)$ such that $S \not\subseteq \closure{X}_\ell$ for all $\ell \in [z]$ w.p. at most $2^{-\Omega(\sqrt{m})}$
\end{lemma}
\begin{proof}
We sample $X_1, \dots, X_z$ in a slightly different roundabout way: let $Y_1, \dots, Y_z \subseteq M$ be sets sampled by independently including each item w.p. $3/(2n)$. If any $Y_\ell$ are of size less than $k$, we abort. Otherwise, we let each $X_\ell$ be a uniformly random subset of $Y_\ell$. Observe that since $\closure{Y}_\ell \subseteq \closure{X}_\ell$ for all $\ell \in [z]$, the lemma follows for $\vb{X}$ so long as it follows for $\vb{Y}$.

Let $\calS \subseteq 2^M$ be the collection of all sets of size $n\log(m)$. Then using independence and union bounds, the probability that there exists $S \in \calS$ which is not contained in $\closure{Y}_\ell$ for any $\ell \in [z]$ is
\begin{align*}
    \Pr\bigg[\bigcup_{S \in \calS} \bigcap_{\ell \in [z]} \{S \not\subseteq \closure{Y}_\ell\}\bigg] \quad &= \quad \Pr\bigg[\bigcup_{S \in \calS} \bigcap_{\ell \in [z]} \{S \cap Y_\ell \ne \emptyset\}\bigg] \\
    &= \quad \Pr\bigg[\bigcup_{S \in \calS} \bigcap_{\ell \in [z]} \bigcup_{j \in S} \{j \in Y_\ell\}\bigg] \\
    &= \quad \Pr\bigg[\bigcup_{S \in \calS} \bigcap_{\ell \in [z]} \closure{\bigcap_{j \in S} \{j \not\in Y_\ell\}}\bigg] \\
    &\leq \quad \abs{\calS} \bigg(1 - \bigg(1 - \frac{3}{2n}\bigg)^{n\log(m)}\bigg)^z \\
    &\leq \quad \abs{\calS} (1 - e^{-3\log(m)})^z \\
    &\leq \quad m^{n\log(m)} e^{z/m^3} \quad \ll \quad 2^{-\Omega(\sqrt{m})} \enspace .
\end{align*}

Additionally, by a Chernoff bound and union bound, the probability that some $Y_\ell$ is of size less than $k$ is at most $z e^{-\Theta(k)} \leq 2^{-\Omega(\sqrt{m})}$. The lemma follows by a union bound over the two events.
\end{proof}

The result follows by a union bound over the bad events in \Cref{lemma:Sparsity,lemma:SmallCoverage} and a union bound over the $n$ players.
\end{proof}

\begin{proof}[Proof of \Cref{thm:EFS}]
For $z = 2^{O(\sqrt{m})}$, we have by \Cref{prop:Safety} and \Cref{cor:DirectSum,cor:LowInformationMeansBadProtocol} that the randomized communication complexity of solving \EFS{m,n,z} with worst-case error $1/2-O(1/m)-2^{-\Omega(\sqrt{m})}$ is $z \cdot \CIC_\mu(\FS{m,n}) = 2^{\Omega(\sqrt{m}-n^2\log(m))}$.
\end{proof}

\bibliographystyle{alpha}
\bibliography{MasterBib}

\appendix

\section{Missing Proofs}
\label{sec:missing-proofs}

\TwoProblems*

\begin{proof}
Suppose $\calQ$ can be solved w.p. at least $2/3$ in $z'$ communication. Taking parallel repetitions, $\calQ$ can be solved w.p., at least $0.99$ in $O(z')$ communication. Denote any such protocol by $\Pi$, and let $\Pi(\vb{x}, \vb{y}) \in \{0, 1\}$ be the output of $\Pi$ on input $(\vb{x}, \vb{y})$.

Let $\mu$ be any distribution over $\calY^*$. Define $\mu_0$ to be $\mu$ conditioned on sampling a $0$-instance, and $\mu_1$ to be $\mu$ conditioned on sampling a $1$-instance. Let $\Pi_\mu$ to be the following protocol for $\calQ_f$:
\begin{enumerate}[topsep=4pt]
    \item Let $\vb{x}$ be the input.

    \item Sample $\vb{Y}^{(0)}_1, \dots, \vb{Y}^{(0)}_{10} \sim \mu_0$ and $\vb{Y}^{(1)}_1, \dots, \vb{Y}^{(1)}_{10} \sim \mu_1$.

    \item Run $\Pi$ on inputs $\{(\vb{x}, \vb{Y}^{(b)}_\ell) : b \in \{0, 1\}, \ell \in [10]\}$.

    \item If $P_0 \coloneqq \Pi(\vb{x}, \vb{Y}^{(0)}_1) + \dots + \Pi(\vb{x}, \vb{Y}^{(0)}_{10}) \ne 0$, output $1$.

    \item Otherwise, if $P_1 \coloneqq \Pi(\vb{x}, \vb{Y}^{(1)}_1) + \dots + \Pi(\vb{x}, \vb{Y}^{(1)}_{10}) \ne 10$, output $0$.

    \item Otherwise, abort.
\end{enumerate}

Since $\Pi_\mu$ just runs $\Pi$ on $20$ inputs, $\Pi_\mu$ runs in $O(z')$ communication.

If there exists $\mu$ such that $\Pi_\mu(\vb{x}) = f(\vb{x})$ w.p. at least $2/3$ for all $\vb{x} \in \calX^*$, then $\Pi_\mu$ solves $\calQ_f$ w.p. at least $2/3$, and hence $\Pi_\mu$ must use $\Omega(z)$ communication.

Otherwise, for every $\mu$, there exists $\vb{x}_\mu \in \calX^*$ such that $\Pi_\mu(\vb{x}_\mu) \ne f(\vb{x}_\mu)$ w.p. at least $1/3$.

\begin{lemma} \label{lemma:DistributionalComplexity}
    If $\Pi_\mu(\vb{x}_\mu) \ne f(\vb{x}_\mu)$ w.p. at least $1/3$, then $\Pr_{\vb{Y} \sim \mu}[\Pi(\vb{x}_\mu, \vb{Y}) = g(\vb{Y})] \geq 2/3$.
\end{lemma}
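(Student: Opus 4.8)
The plan is to reduce the conclusion to two scalar inequalities and settle them by case analysis on $f(\vb{x}_\mu)$, exploiting the ten-fold amplification built into $\Pi_\mu$. I would write $\vb{x} \coloneqq \vb{x}_\mu$ and decompose $\mu = p\mu_0 + (1-p)\mu_1$ with $p \coloneqq \Pr_{\vb{Y}\sim\mu}[g(\vb{Y}) = 0]$. Then set $q_0 \coloneqq \Pr_{\vb{Y}\sim\mu_0}[\Pi(\vb{x},\vb{Y}) = 1]$ and $q_1 \coloneqq \Pr_{\vb{Y}\sim\mu_1}[\Pi(\vb{x},\vb{Y}) = 0]$, so that $\Pr_{\vb{Y}\sim\mu}[\Pi(\vb{x},\vb{Y}) = g(\vb{Y})] = 1 - pq_0 - (1-p)q_1$; since this is a convex combination, it suffices to prove $q_0 \le 1/3$ and $q_1 \le 1/3$.

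The only structural input is that $\Pi$ has worst-case error at most $0.01$, and that whenever $g(\vb{Y}) = f(\vb{x})$ the pair $(\vb{x},\vb{Y})$ is a valid $\calQ$-input on which $\Pi$ must output $f(\vb{x})$. So I would first record: if $f(\vb{x}) = 0$ then $q_0 \le 0.01$, and if $f(\vb{x}) = 1$ then $q_1 \le 0.01$ (in the other case $g$ and $f(\vb{x})$ disagree and we have no control).

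Next, in the case $f(\vb{x}) = 0$: $P_0$ is a sum of $10$ independent $\{0,1\}$'s each equal to $1$ with probability $q_0$, $P_1$ a sum of $10$ independent $\{0,1\}$'s each equal to $1$ with probability $1-q_1$, and the two blocks are independent. Reading off $\Pi_\mu$, it outputs something other than $0$ exactly when $P_0 \ne 0$ (it outputs $1$) or when $P_0 = 0$ and $P_1 = 10$ (it aborts), so its error probability is $\big(1 - (1-q_0)^{10}\big) + (1-q_0)^{10}(1-q_1)^{10}$, which is $\ge 1/3$ by hypothesis. Since $q_0 \le 0.01$ forces $(1-q_0)^{10} \ge 0.99^{10} > 0.9$, I get $(1-q_1)^{10} \ge 1/3 - (1-0.99^{10}) > 0.23$, hence $q_1 < 0.14 < 1/3$; and $q_0 \le 0.01 < 1/3$. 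The case $f(\vb{x}) = 1$ is symmetric: now $\Pi_\mu$ errs precisely when $P_0 = 0$, so its error probability is exactly $(1-q_0)^{10}$, and this being $\ge 1/3$ gives $q_0 \le 1 - (1/3)^{1/10} < 0.11 < 1/3$, while $q_1 \le 0.01 < 1/3$.

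I do not anticipate a genuine obstacle; the small subtleties to get right are (i) that $\Pi$'s correctness is conditional on $g(\vb{Y}) = f(\vb{x})$, so in each case it pins down only one of $q_0, q_1$, leaving the other to be bounded from the hypothesis, and (ii) correctly listing which outcomes of $\Pi_\mu$ — including the abort — count as errors in each case (if one chose not to charge aborts as errors, the hypothesis would instead be vacuously false, so the statement holds under either convention). The rest is the one-line arithmetic above.
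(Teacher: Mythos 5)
Your proof is correct and is essentially the paper's argument in contrapositive form: the paper assumes $\Pr_{\vb{Y}\sim\mu}[\Pi(\vb{x}_\mu,\vb{Y})=g(\vb{Y})]<2/3$ and derives that $\Pi_\mu$ errs with probability $<1/3$, while you assume $\Pi_\mu$ errs with probability $\geq 1/3$ and derive that $q_0,q_1\leq 1/3$, hence $\Pr[\Pi=g]\geq 2/3$. The ingredients are identical in both: pinning down one of $q_0,q_1\leq 0.01$ by correctness of $\Pi$ on the consistent side, identifying that $\Pi_\mu$ errs exactly when $P_0\neq 0$ or ($P_0=0,P_1=10$) in the case $f(\vb{x})=0$ and exactly when $P_0=0$ in the case $f(\vb{x})=1$, and the tenth-power arithmetic; your remark (ii) about aborts counting as errors is the right reading, matching the paper's convention.
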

\begin{proof}
Suppose that $\Pr_{\vb{Y} \sim \mu}[\Pi(\vb{x}_\mu, \vb{Y}) = g(\vb{Y})] < 2/3$ for contradiction.

If $f(\vb{x}_\mu) = 1$, then $\Pr_{\vb{Y} \sim \mu_1}[\Pi(\vb{x}_\mu, \vb{Y}) = 1] \geq 0.99$ by correctness of $\Pi$, so it must be the case that $\Pr_{\vb{Y} \sim \mu_0}[\Pi(\vb{x}_\mu, \vb{Y}) = 0] < 2/3$.

The protocol can only fail on $\vb{x}_\mu$ if it outputs $0$ at Step $(5)$ or aborts at Step $(6)$. Therefore, the protocol only fails if $P_0 = 0$. However, $\Pr[\Pi_\mu(\vb{x}_\mu, \vb{Y}^{(0)}_\ell) = 0] < 2/3$, so $\Pr[P_0 = 0] \leq (2/3)^{10} < 1/3$, a contradiction (as the protocol only fails when $P_0 = 0$).

If $f(\vb{x}_\mu) = 0$, then $\Pr_{\vb{Y} \sim \mu_0}[\Pi(\vb{x}_\mu, \vb{Y}) = 0] \geq 0.99$ by correctness of $\Pi$, so it must be the case that $\Pr_{\vb{Y} \sim \mu_1}[\Pi(\vb{x}_\mu, \vb{Y}) = 1] < 2/3$.

The protocol can only fail if it outputs $1$ at Step $(4)$ or aborts at Step $(6)$. Therefore, the protocol only fails if $P_0 \ne 0$ or $P_1 = 10$. However, $\Pr[\Pi_\mu(\vb{x}_\mu, \vb{Y}^{(1)}_\ell) = 1] < 2/3$, so $\Pr[P_1 = 10] \leq (2/3)^{10} < 0.1$. Further, $\Pr[P_0 \ne 0] \leq 10 \cdot 0.01 = 0.1$. So the protocol fails w.p. at most $0.2$, a contradiction.
\end{proof}

By~\Cref{lemma:DistributionalComplexity}, if there does not exist a single $\mu$ such that $\Pi_\mu(\vb{x})$ solves $\calQ_f$ w.p.~at least $2/3$, then for every distribution $\mu$ over $\calY^*$, the randomized protocol $\Pi(\vb{x}_\mu, \cdot)$ computes $g$ w.p. at least $2/3$ for $\vb{Y} \sim \mu$, using $O(z')$ communication. By Yao's Minimax Principle, this is equal to the communication complexity of solving $\calQ_g$ w.p. at least $2/3$, and hence $z' = \Omega(z)$.
\end{proof}

\section{Code for Separation Results}
\label{sec:code}

\begin{lstlisting}
package separations;

import java.util.Arrays;

public class XOSSM {

	private static final double EPSILON = 0.001;
	private static final double DELTA = 0.001;
	private static final int N = 150;

	public static void main(String[] args) {
		for (int n = 2; n <= N; n++) {
			double approxXOS = approxXOS(n);
			double approxXOSSM = approxXOSSM(n, 1 - 1.0 / (n + 1), DELTA);

			System.out.printf("%.5f\n%.5f\n\n", approxXOS, approxXOSSM);

			if (approxXOSSM < approxXOS + EPSILON) {
				System.out.println("BAD");
				System.exit(0);
			}
		}

		double approxXOS = 1 / (1 - 1 / Math.E);
		double approxXOSSM = approxXOSSM(N, 1 - 1.0 / (N + 1), DELTA);

		System.out.printf("%.5f\n%.5f\n\n", approxXOS, approxXOSSM);

		if (approxXOSSM < approxXOS + EPSILON) {
			System.out.println("BAD");
			System.exit(0);
		}

		System.out.println("Done");
	}

	private static double approxXOS(int n) {
		return 1 / (1 - Math.pow(1 - 1.0 / n, n));
	}

	private static double approxXOSSM(int n, double p, double delta) {
		// max_{t* in [n]} approxXOSSM(t*)
		double[] approxes = new double[n];

		for (int i = 0; i < n; i++) {
			approxes[i] = approxXOSSM(n, i + 1, p, delta);
		}

		return Arrays.stream(approxes).max().getAsDouble();
	}

	private static double approxXOSSM(int n, int tStar, double p, double delta) {
		return instance1(n, tStar, p) / instance0(n, tStar, p) - delta;
	}

	private static double instance1(int n, int tStar, double p) {
		return p * n + (1 - p) * opt(n, tStar);
	}

	private static double instance0(int n, int tStar, double p) {
		// sum_{t=0}^n binom(n, t) (1 - p)^{n - t} p^t max{OPT(t), OPT(t*))}
		double welfare = 0;
		for (int t = 0; t <= n; t++) {
			welfare += binom(n, t) * Math.pow(1 - p, n - t) * Math.pow(p, t) * Math.max(opt(n, t), opt(n, tStar));
		}

		return welfare;
	}

	private static double opt(int n, int t) {
		// OPT(t) = t / alpha(t) = n * (1 - (1 - 1/n)^t)
		return n * (1 - Math.pow(1 - 1.0 / n, t));
	}

	private static int binom(int n, int k) {
		if (k > n - k) {
			k = n - k;
		}

		int b = 1;
		for (int i = 1, m = n; i <= k; i++, m--) {
			b = b * m / i;
		}

		return b;
	}

}
\end{lstlisting}

\end{document}